\theoremstyle{definition}
\newtheorem{definition}{Definition}[section]
\newtheorem*{assumption*}{Assumption}
\newtheorem*{condition*}{Condition}
\newtheorem{example}{Example}
\theoremstyle{plain}
\newtheorem{theorem}[definition]{Theorem}
\newtheorem{proposition}[definition]{Proposition}
\newtheorem{lemma}[definition]{Lemma}
\theoremstyle{remark}
\newtheorem{remark}{Remark}
\newcommand{\crossref}[2]{#2}
\newcommand{\crosseqref}[2]{#2}
\renewcommand{\crossref}[2]{\ref{#1}}
\renewcommand{\crosseqref}[2]{\eqref{1}}
\newcommand{\N}{\mathbb{N}}
\newcommand{\Z}{\mathbb{Z}}
\newcommand{\R}{\mathbb{R}}
\newcommand{\E}{\mathbb{E}}
\newcommand{\F}{\mathcal{F}}
\newcommand{\pconv}{\xrightarrow{P}}
\newcommand{\Var}{\operatorname{Var}}
\newcommand{\Cov}{\operatorname{Cov}}
\newcommand{\Cor}{\operatorname{Cor}}
\newcommand{\Kurt}{\operatorname{Kurt}}
\newcommand{\deq}{\overset{d}{=}}
\newcommand{\wconv}{\Rightarrow}
\newcommand{\beps}{\boldsymbol{\epsilon}}
\newif\ifhideproofs
\title{Functional estimation and change detection for nonstationary time series\footnote{This is an Accepted Manuscript of an article published by Taylor \& Francis in the \emph{Journal of the American Statistical Association} on September 27, 2021, available at: https://www.tandfonline.com/10.1080/01621459.2021.1969239.}}
\author{Fabian Mies\\ RWTH Aachen University}
\date{}
\begin{document}

\maketitle

\begin{abstract}
	Tests for structural breaks in time series should ideally be sensitive to breaks in the parameter of interest, while being robust to nuisance changes.
	Statistical analysis thus needs to allow for some form of nonstationarity under the null hypothesis of no change.
	In this paper, estimators for integrated parameters of locally stationary time series are constructed and a corresponding functional central limit theorem is established, enabling change-point inference for a broad class of parameters under mild assumptions.
	The proposed framework covers all parameters which may be expressed as nonlinear functions of moments, for example kurtosis, autocorrelation, and coefficients in a linear regression model.
	To perform feasible inference based on the derived limit distribution, a bootstrap variant is proposed and its consistency is established.
	The methodology is illustrated by means of a simulation study and by an application to high-frequency asset prices.
	
	\textbf{Keywords:} gradual change; locally stationary process; $p$-variation; bootstrap inference
\end{abstract}

\section{Introduction}

While statistical theory has historically been concerned with the study of data which is identically distributed, or at least stationary, this temporal homogeneity is often violated in practice.
For example, economic time series are typically heteroscedastic, which is especially prominent for returns of financial assets.
Statistical inference which does not account for nonstationarity may thus lead to wrong conclusions.
However, the treatment of nonstationary models poses methodological challenges. 
Specifying the temporal variation nonparametrically, inference is often based on   asymptotic methods, and it is in general not clear how to define a suitable limit in a nonstationary setting.
As a remedy, \cite{Dahlhaus1997} suggested to regard the time index as a fraction of the sample size, which allows for infill asymptotics.
This perspective leads to the concept of locally stationary time series, which has since been extended and generalized, see e.g.\ \cite{Wu2011,Dahlhaus2017,Truquet2019}.

In this paper, we study a multivariate, locally stationary time series model given by the causal representation
\begin{align}
	X_{t,n} = G_n(\tfrac{t}{n}, \beps_t )\in\R^d,\quad t=1,\ldots, n, \label{eqn:X-intro}
\end{align}
where $\beps_t = (\epsilon_t, \epsilon_{t-1},\ldots)\in\R^\infty$ for iid random variables $\epsilon_i$.
For each fixed $u\in[0,1]$ and $n\in\N$, the sequence $(G_n(u, \beps_t))_{t\in\Z}$ is stationary.
By letting the kernel depend on the fraction $\frac{t}{n}$, the model explicitly accounts for nonstationarity.
We assume that the kernel $G_n$ converges to a limiting kernel $G$ as $n\to\infty$ in $L_q(P)$, and that $u\mapsto G_n(u,\beps_t)$ admits some form of regularity.
The nonlinear model \eqref{eqn:X-intro} has been introduced by \cite{Zhou2009}, and later been refined by \cite{Zhou2013}.
In contrast to the existing studies, the regularity assumptions imposed on the mapping $u\mapsto G_n(u,\beps_t)$ in this paper are much less restrictive, as we only require finiteness of some $p$-variation instead of Hölder continuity; see the detailed discussion in Section \ref{sec:model}.

A statistician might be interested in various properties of the time series $X_{t,n}$.
We denote the quantity of interest by a local parameter $\theta_u, u\in[0,1]$, or $\theta_u^n$, which is a functional of the law of $G(u,\beps_t)$, resp.\ $G_n(u,\beps_t)$.
The estimator proposed in this paper is applicable for any parameter of the form $\theta^n_u = f(\mu^n_u)$, where $\mu^n_u = \E G_n(u,\beps_t)$, and $f$ is a sufficiently smooth function.
Upon replacing $X_{t,n}$ by $Y_{t,n}=h(X_{t,n})$ for a function $h$, we may also study parameters $\theta^n_u = f(\E[ h(G_n(u,\beps_t))])$, i.e.\ any parameter which may be expressed as a function of moments of $X_{t,n}$.
This framework is rather general, and contains for example the variance, kurtosis, and autocorrelations at fixed lag.

Depending on the application, different functionals of the temporal variation $u\mapsto\theta_u$ might be of interest, e.g.\ its temporal average $\int_0^1 \theta_u\, du$ as studied by \cite{Potiron2018}, its maximum value $\sup_{u\in[0,1]} \|\theta_u\|$, or its value $\theta_{u_0}$ at some $u_0\in[0,1]$.
For instance, estimation of $\theta_{u_0}$ is studied by \cite{Cui2020}, who estimate the autocovariance function of a locally stationary time series via polynomial smoothing, and by \cite{Dahlhaus2019}.
The latter example is a nonparametric problem, and thus in general suffers from slow rates of convergence, depending on the regularity of $u\mapsto \theta_u$.
In this paper, we tackle the temporal variation by studying the integrated parameter
\begin{align*}
	u\mapsto\Theta(u) = \int_0^u \theta_v\, dv,
\end{align*}
and we propose a corresponding estimator $M_n(u)$ of $\Theta(u)$. 
The function $\Theta(u)$ contains all information about the local parameter $\theta_u$ and is thus a nonparametric object as well.
However, from a statistical perspective, it is attractive to formulate hypotheses in terms of $\Theta(u)$ since the latter may be estimated at a parametric rate $\sqrt{n}$, as demonstrated by our functional central limit theorem presented in Section \ref{sec:estimator}.
We note that the idea of recovering a $\sqrt{n}$ rate of convergence via integration has also been employed in other areas of nonparametric estimation, e.g.\ when performing inference for integrated squared density derivatives in an iid setting \citep{Hall1987, Bickel1988}, for convolutions of nonparametrically specified densities \citep{Schick2004}, and for quadratic integrals of derivatives of a regression function \citep{Huang1999}.
The previous references consider estimation of a single integrated quantity, but the functional estimation of a local parameter $\theta_v$ via its integral function $u\mapsto\Theta(u)$ is also common practice in high-frequency econometrics, when estimating integrated volatility or integrated nonlinear functionals of volatility; see \cite{Jacod2013}, \cite{ait2014high}, and the references therein.
In the context of nonstationary time series, this is applied, e.g., by \cite{Dahlhaus2009}, who considers linear functionals of the time-varying spectral density.

A general method to estimate integrated parameters has been suggested by \cite{Potiron2018}. They construct  block-wise estimators $\hat{\theta}_{i,n}$ and average them to obtain an estimator of $\Theta(1)$. 
While this approach could be adapted to the functional estimation of $u\mapsto\Theta(u)$, the verification of their assumptions for the estimators $\hat{\theta}_{i,n}$ entails additional analytical effort for each special case.
In particular, they require strong conditions on the bias of $\hat{\theta}_{i,n}$, and present explicit debiasing procedures for specific examples. 
In contrast, our proposed estimator $M_n(u)$ is based on a linearization procedure around a nonparametric pilot estimator $\hat{\mu}_{t,n}$, and may be regarded as a generic approach for removing leading bias terms.
We suggest a pilot estimator based on local smoothing, but our results are deliberately formulated under much weaker conditions, requiring only assumptions on the rate of convergence of $\hat{\mu}_{t,n}$.
Many modern approaches to filtering and regression are based on statistical learning theory, which typically yields satisfactory rates of convergence, but does not lend itself to statistical inference.
Our linearized estimator $M_n(u)$ thus enables rigorous asymptotic inference based on these pilot estimates. 
Details are presented in Section \ref{sec:estimator}.

Our estimator is particularly useful to test for change-points in the local parameter.
Here, the null hypothesis is that the parameter $\theta_u$ is the same for all $u\in[0,1]$, which may be formulated equivalently as
\begin{align}
	H_0: \theta_u \equiv \theta_0 \quad\Leftrightarrow\quad H_0: \Theta(u) \equiv u \Theta(1). \label{eqn:H0-cp-intro}
\end{align} 
Analysis of this hypothesis of structural stability has a long history in statistics, see \cite{Aue2013} for a recent review.
Early studies were concerned with the stability of the mean \citep{Page1954,Page1955}.
The methodology has since been extended, and there exist procedures to test for, e.g., the stability of variances \citep{Gao2019}, regression coefficients \citep{Horvath1995}, or autocovariances \citep{Berkes2009,Killick2013,Preuss2015}. 
Our approach provides a unifying framework to study these problems for parameters $\theta_u$ which may be written as a function of nonlinear moments.
Besides the mentioned examples, this also includes novel change-point tests which have not been studied previously, e.g.\ a test for the temporal stability of kurtosis.
The proposed change-point tests are robust against various model misspecifications, e.g.\ nonstationarity and nuisance changes under the null hypothesis. 
In particular, our test only monitors changes in the parameter $\theta_u=f(\mu_u)$, but not in the parameter $\mu_u$ itself. 
For example, our statistic is sensitive to changes in the variance, but robust to changes in the unknown and time-varying mean value.
The change-point tests based on our estimator $M_n(u)$ are discussed in greater detail in Section \ref{sec:cp} below.

There are parameters of interest which may not be expressed as a function of finitely many moments, for example quantiles of the marginal distribution, or functionals of the local spectral measure as considered by \cite{Dahlhaus2009}.
A very general framework is presented by \cite{Shao2010}, where an arbitrary functional of the time series' distribution is studied. 
The assumptions therein are formulated in terms of the influence function corresponding to the statistical functional of interest. 
Their verification is far from simple and basically amounts to proving claims very similar to the steps we take in this article.
In contrast, we believe that the conditions imposed in the present paper are conveniently verified for a broad range of practical problems. 
The framework of \cite{Shao2010} is also adopted by \cite{Dette2018b} and applied to the monitoring of quantiles, as well as by \cite{Gosmann2019}.
Note that these authors study the stationary case only, while we allow for nonstationarity.
It might be of interest to extend the methodology introduced in the present paper to more general functionals.
We leave this question for future work.

The outline of this paper is as follows. 
After defining the model in Section \ref{sec:model}, we describe the functional estimator $M_n(u)$ in Section \ref{sec:estimator} and present our asymptotic results., we give the rigorous definition of our model and present the technical assumptions.
The application of our results to change-point problems is discussed in Section \ref{sec:cp}, and the finite sample properties of our proposed procedure are assessed via simulations study in Section \ref{sec:MC}. 
Our methodology is illustrated by an application to financial data in Section \ref{sec:empirical}.
Appendix \crossref{sec:discussion}{A} in the supplement contains additional remarks, and Appendix \crossref{sec:supp-CP}{B} contains further examples and simulation results for change point testing.
All technical proofs are deferred to Appendix \crossref{sec:proofs}{C}.

\subsubsection*{Notation}
For a function $f:\R^d\to\R$, we denote $Df(x) = (\partial_{x_1},\ldots, \partial_{x_d}) f(x) \in \R^{1\times d}$ the first order differential, and by $D^2f(x) \in\R^{d\times d}$ the Hessian matrix.
For two sequences $a_n, b_n$, we write $a_n\ll b_n$ if $a_n/b_n \to 0$ as $n\to\infty$.
For $x\in\R$, we denote $(x)_+ = \max(x, 0)$.
Weak convergence of probability measures and random elements is denoted by $\wconv$.
For a vector $x\in\R^d$, the Euclidean norm is denoted by $\|x\|$, and for a matrix $A\in\R^{d\times d}$, we denote by $\|A\|=\|A\|_{op}$ the operator norm, i.e.\ $\|A\| = \sup_{x\neq 0} \|Ax\|/\|x\|$.
The spectral radius of a matrix $A$ is denoted as $\rho(A)=\max(|\lambda_1|,\ldots |\lambda_d|)$, where the $\lambda_i$ are the complex eigenvalues of $A$.
The transpose of a matrix $A$ is denoted by $A^T$.
For a random vector $X$, we denote by $\|X\|_{L_q} = (\E\|X\|^q)^\frac{1}{q}$ the $L_q$ norm, for $q\geq 1$. 
The notation $\|G_n\|_{p-var}$ is introduced in Section \ref{sec:model}.

\section{Model}\label{sec:model}

Let $X_{t,n}, t=1,\ldots, n$ be a triangular array of random variables which is causal in the sense that $X_{t,n} = G_n (t/n,\beps_t)$ for a sequence of functions $G_n:\R\times\R^\infty\to\R^d$.
Here, we denote
\begin{align*}
	\beps_t = (\epsilon_t, \epsilon_{t-1},\ldots)\in\R^\infty,
\end{align*}
where the $\epsilon_i$ are iid random variables.
 The functions $G_n$ are assumed to be measurable, where we endow the sequence space $\R^\infty$ with the projection $\sigma$-Algebra, see \cite[Example 1.2]{billingsley1999convergence}.

By using a sequence of functions $G_n$, we will be able to apply our results to investigate the power of the proposed tests against local alternatives.
Furthermore, letting the kernel $G_n$ depend on $n$ allows us to account for potential discretization errors, see Example \ref{ex:VAR} below.
We assume that the kernel $G_n$ tends towards a limiting kernel $G:\R\times\R^\infty\to\R^d$, in the sense that 
\begin{align}
	\sup_{u\in[0,1]}\|G_n(u,\beps_0) - G(u,\beps_0)\|_{L_q} \to 0,\qquad n\to\infty, \tag{A.1}\label{eqn:GnG}
\end{align}
for some $q>2$.
Note that we do not require a rate of convergence in \eqref{eqn:GnG}.
Furthermore, we require the function $u\mapsto G_n(u, \beps_0) \in L_q(P)$ to satisfy some regularity conditions. 
A very mild condition can be formulated in terms of $p$-variation for some $p\geq 1$, defined as
\begin{align*}
	\|G_n\|_{p-var} = \sup_{\substack{0=u_0 < u_1 <\ldots < u_m=1\\ m\in\N}} \left( \sum_{i=1}^m \|G_n(u_i, \beps_t) - G_n(u_{i-1}, \beps_t)\|_{L_q}^p  \right)^{\frac{1}{p}}. 
\end{align*}
The latter definition is independent of the chosen $t$, since $\beps_t\sim \beps_0$.
The $p$-variation of the limiting kernel is denoted analogously as $\|G\|_{p-var}$.
We assume that 
\begin{align}
	\sup_n \sup_{u\in[0,1]} \|G_n(u, \beps_0)\|_{L_q} + \sup_n \|G_n\|_{p-var} \leq C_G<\infty, \tag{A.2}\label{eqn:pvar-finite}
\end{align}
and $p\geq 1$ will be further specified if necessary.
Note that $\|G_n\|_{p-var}\leq \|G_n\|_{r-var}$ for $1\leq r \leq p$, hence assumption \eqref{eqn:pvar-finite} is stronger for smaller $p$. 

The assumption of finite $p$-variation is less restrictive than the piecewise-locally-stationary (PLS) framework suggested by \cite{Zhou2013}, which amounts to requiring piecewise Lipschitz continuity with finitely many breakpoints.
The PLS framework has been applied for change-point analysis by \cite{Dette2018a}, \cite{Dette2018}, among others. 
In contrast, finite $p$-variation still allows for infinitely many discontinuities of the mapping $u\mapsto G_n(u,\beps_0)\in L_q(P)$.
On the other hand, if the latter mapping is Hölder continuous with exponent $\beta$, then $\|G_n\|_{p-var}<\infty$ for $p\geq \frac{1}{\beta}$.
Thus, assumption \eqref{eqn:pvar-finite} is more general than requiring Hölder continuity, and it combines classical smoothness conditions as well as discontinuities in a single framework.
The special case of bounded $1$-variation has been considered by \cite{Dahlhaus2009a} for linear processes.
Our framework also contains the model of \cite{Dahlhaus2017} as a special case, see Section \crossref{sec:dahlhaus}{A.1} in the supplement.

The third and last assumption imposed on the causal kernel is uniform ergodicity.
We employ the physical dependence measure introduced by \cite{Wu2005}.
To define the dependence measure, we introduce an iid copy $\epsilon_i^*$ of the $\epsilon_i$
Denote for $t\in\Z, j\in\Z_{\geq 0}$,
\begin{align*}
	\beps^*_{t,j} & = (\epsilon_t, \ldots, \epsilon_{t-j+1}, \epsilon^*_{t-j}, \epsilon^*_{t-j-1},\ldots) \in \R^\infty, \quad\\
	\tilde{\beps}_{t,j} & = (\epsilon_t, \ldots, \epsilon_{t-j+1}, \epsilon^*_{t-j}, \epsilon_{t-j-1},\ldots) \in \R^\infty.
\end{align*}
We assume that there exists a value $\rho\in(0,1)$ such that, for all $j,n\in\N, u\in[0,1]$ 
\begin{align}
	\|G_n(u, \beps_t) - G_n(u,\tilde{\beps}_{t,j})\|_{L_q} \leq C_G \rho^j. \tag{A.3}\label{eqn:ergodic}
\end{align}
This implies that 
\begin{align*}
	\|G_n(u, \beps_t) - G_n(u,\beps^*_{t,j})\|_{L_q} 
	&\leq \frac{C_G}{1-\rho} \rho^j,
\end{align*}
see Proposition \crossref{prop:ergodic}{C.1} in the appendix.

This set of assumptions suffices to establish a functional central limit theorem for the partial sums of the $X_{t,n}$.
An analogous limit theorem has been proven by \cite{Zhou2013} under the more restrictive PLS assumption.

\begin{theorem}\label{thm:CLT-linear}
Let \eqref{eqn:GnG}, \eqref{eqn:pvar-finite}, and \eqref{eqn:ergodic} hold, for some $q>2$.
Suppose furthermore that $d=1$, and denote by 
\begin{align*}
	\sigma^2(u) = \sum_{h=-\infty}^\infty \Cov \left[  G(u, \beps_h), G(u, \beps_0)  \right],
\end{align*}
the local long-run-variance.
Then, as $n\to\infty$,
\begin{align*}
	\frac{1}{\sqrt{n}}\sum_{t=1}^{\lfloor nv\rfloor}\left[ X_{t,n} - \E X_{t,n} \right] \wconv B\left(\int_0^v \sigma^2(u)\, du\right),
\end{align*}
where $B(v),v\geq 0$ is a standard Brownian motion.
The weak convergence holds in the Skorokhod space $D[0,1]$.
\end{theorem}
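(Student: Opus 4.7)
The strategy is the classical fidi-plus-tightness route for weak convergence in $D[0,1]$, driven by a martingale decomposition against the filtration generated by the innovations.

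First I would reduce to the limit kernel. Setting $\tilde X_{t,n}=G(t/n,\beps_t)$ and applying a Rosenthal-type inequality for weakly dependent sums (a consequence of \eqref{eqn:ergodic} via the physical dependence measure of \cite{Wu2005}),
\[
\Bigl\|n^{-1/2}\!\!\sum_{t=1}^{\lfloor nv\rfloor}\bigl[(X_{t,n}-\tilde X_{t,n})-\E(X_{t,n}-\tilde X_{t,n})\bigr]\Bigr\|_{L_2}\le C\sup_{u\in[0,1]}\|G_n(u,\beps_0)-G(u,\beps_0)\|_{L_q},
\]
which tends to zero uniformly in $v$ by \eqref{eqn:GnG}. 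Hence it suffices to prove the FCLT for $\tilde X_{t,n}$.

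Let $\F_k=\sigma(\epsilon_s:s\le k)$ and $P_k=\E[\cdot\mid\F_k]-\E[\cdot\mid\F_{k-1}]$. Writing $\tilde X_{t,n}-\E\tilde X_{t,n}=\sum_{j\ge 0}M_t^{(j)}$ with $M_t^{(j)}:=P_{t-j}G(t/n,\beps_t)$, assumption \eqref{eqn:ergodic} combined with Jensen gives $\|M_t^{(j)}\|_{L_2}\le 2C_G\rho^j$, so the series converges in $L_2$ and may be truncated at a level $J_n\to\infty$ at negligible cost. For each fixed $j$, $(M_t^{(j)})_t$ is a martingale difference array with respect to the shifted filtration $(\F_{t-j})_t$, and a standard martingale FCLT (Hall--Heyde) applies level by level. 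The essential ingredient is the Riemann-sum limit
\[
n^{-1}\sum_{t=1}^{\lfloor nv\rfloor}\E\bigl[(M_t^{(j)})^2\mid\F_{t-j-1}\bigr]\plim \int_0^v\bigl\|P_{-j}G(u,\beps_0)\bigr\|_{L_2}^2\,du,
\]
in which the integrand is a stationary functional of the innovations evaluated at deterministic times $t/n$. Summation over $j\le J_n$, together with Wu's identity $\sigma^2(u)=\sum_{j\ge0}\|P_{-j}G(u,\beps_0)\|_{L_2}^2$, then identifies the fidi limit as the centred Gaussian process with covariance $\int_0^{v_1\wedge v_2}\sigma^2(u)\,du$, i.e.\ the law of $B\bigl(\int_0^\cdot\sigma^2(u)\,du\bigr)$.

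For tightness in $D[0,1]$, Wu's Rosenthal-type inequality in $L_q$ with $q>2$ gives $\|S_n(v_2)-S_n(v_1)\|_{L_q}^q\le C(v_2-v_1)^{q/2}$ up to an $O(n^{-q/2})$ discretisation term; the exponent $q/2>1$ then yields Billingsley's moment criterion for tightness in the Skorokhod space. The step I expect to be most delicate is the conditional-variance Riemann-sum limit displayed above, because $p$-variation tolerates infinitely many discontinuities of $u\mapsto G(u,\beps_0)$ in $L_q$ and supplies no uniform modulus of continuity, so a naive block-approximation on an $n$-independent grid fails. I would instead work on a partition $0=u_0<\cdots<u_{K_n}=1$ with block length $m_n=n/K_n$, $m_n\to\infty$, $m_n/n\to0$, and control the squared block oscillations via the H\"older-type bound
\[
\sum_{i=1}^{K_n}\|G(u_i,\beps_0)-G(u_{i-1},\beps_0)\|_{L_2}^{\,2}\;\le\;K_n^{(1-2/p)_+}\,\|G\|_{p-var}^{\,2},
\]
which yields a block-approximation error of order $(m_n/n)^{2/p\wedge 1}\to0$ and converts the finite-$p$-variation assumption into the genuine Riemann integrability of $u\mapsto\|P_{-j}G(u,\beps_0)\|_{L_2}^2$ needed for the variance limit. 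All remaining reductions (Lindeberg for the martingale CLT, uniform control of the truncation tail, passage from $\tilde X_{t,n}$ back to $X_{t,n}$) are standard applications of \eqref{eqn:GnG}--\eqref{eqn:ergodic}.
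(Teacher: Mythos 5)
Your route is genuinely different from the paper's (which uses a big-block/small-block coupling: blocks of length $\tilde L_n$ are replaced by independent copies at cost $O(\tilde L_n m_n\rho^{r_n}/\sqrt n)$, and a CLT for independent, non-identically distributed blocks is applied). A martingale-projection argument in the style of Wu could in principle work here too, but your proposal contains a concrete error at the step that identifies the limiting variance. Writing $P_k=\E[\cdot\mid\F_k]-\E[\cdot\mid\F_{k-1}]$, the projections $P_{-j}G(u,\beps_0)$, $j\ge 0$, are mutually \emph{orthogonal} in $L_2$, so
\begin{align*}
\sum_{j\ge 0}\bigl\|P_{-j}G(u,\beps_0)\bigr\|_{L_2}^2=\Bigl\|\sum_{j\ge 0}P_{-j}G(u,\beps_0)\Bigr\|_{L_2}^2=\Var\bigl(G(u,\beps_0)\bigr),
\end{align*}
the \emph{marginal} variance, not the long-run variance $\sigma^2(u)=\sum_h\Cov(G(u,\beps_h),G(u,\beps_0))$. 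The correct identity is $\sigma^2(u)=\|\sum_{j\ge0}P_0G(u,\beps_j)\|_{L_2}^2$, where the summands share the same innovation index and are \emph{not} orthogonal; the off-diagonal products $\E[P_0G(u,\beps_j)\,P_0G(u,\beps_k)]$, $j\ne k$, are precisely what generate the lagged autocovariances. Your plan of applying a martingale FCLT ``level by level'' in $j$ and then summing the limits discards exactly these cross-level contributions (the nonzero covariances $\E[M_t^{(j)}M_s^{(k)}]$ with $t-j=s-k$, $t\ne s$), and would therefore deliver $B(\int_0^v\Var(G(u,\beps_0))\,du)$ rather than the asserted limit. The standard repair is to regroup the double sum by the innovation index $k=t-j$, forming a single martingale difference array $D_{k,n}=\sum_{j\ge0}P_kG((k+j)/n,\beps_{k+j})$ (with the $t/n$ arguments frozen within $o(n)$-windows using the $p$-variation bound), and to apply the martingale FCLT once to $\sum_k D_{k,n}$; as written, your proposal does not do this.

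Two smaller points. First, your reduction from $G_n$ to $G$ is sound in spirit, but the displayed $L_2$ bound is slightly too strong: the physical dependence coefficients of the difference process satisfy $\delta_j\le 2\min(C_G\rho^j,\eta_n)$ with $\eta_n=\sup_u\|G_n(u,\beps_0)-G(u,\beps_0)\|_{L_q}$, so the resulting bound carries an extra factor $\log(1/\eta_n)$; this still vanishes, so the conclusion survives. (The paper avoids this reduction entirely, working with $G_n$ and passing to the limit only in the variance function via $\sup_u|\sigma_n^2(u)-\sigma^2(u)|\to0$.) Second, your tightness argument via a Rosenthal-type bound $\|S_n(v_2)-S_n(v_1)\|_{L_q}\le C\sqrt{\lfloor nv_2\rfloor-\lfloor nv_1\rfloor}/\sqrt n$ with $q>2$ is fine and matches the moment inequality the paper uses elsewhere (its Lemma C.2).
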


By virtue of the Cramer-Wold device, Theorem \ref{thm:CLT-linear} can be extended to the multivariate case to yield weak convergence in the space $(D[0,1])^d$ endowed with the product topology.
Note that this topology is different from the Skorokhod topology on the space $D^d[0,1]$ of cadlag functions $g:[0,1]\to\R^d$, see \cite[VI.1.23]{Jacod2003}. 

We conclude this section by giving an example of a process $X_{t,n}$ which satisfies assumptions \eqref{eqn:GnG}-\eqref{eqn:ergodic}, highlighting that the imposed conditions are rather weak.

\begin{example}\label{ex:VAR}
	Consider the time-varying vector autoregressive (tvVAR) process in $d$ dimensions which satisfies
	\begin{align*}
		X_{t,n} &= A(\tfrac{t}{n}) \left[X_{t-1,n}-\mu(\tfrac{t-1}{n})\right] + B(\tfrac{t}{n}) \epsilon_t + \mu(\tfrac{t}{n}),\\
		X_{0,n} &= \sum_{i=0}^\infty A(0)^{i} B(0) \epsilon_{-i} + \mu(0),
	\end{align*}
	where $\epsilon_t$ is a sequence of iid, $d$-dimensional random vectors with finite moments of all orders, $\mu:[0,1]\to\R^d$, and $A,B:[0,1]\to\R^{d\times d}$ are matrix valued functions. 
	Note that higher-order autoregressive processes may also be studied in this framework by stacking the lagged values of the process, effectively increasing the dimension $d$.
	For example, an autoregression of order two may be described in terms of the state vector $Y_{t,n} = (X_{t,n}, X_{t-1,n})$, taking values in $\R^{2d}$.
	Time-varying autoregressive models are classical examples of locally-stationary time series.
	Early investigations of these models include \cite{rao1970} and \cite{grenier1983}, and more recent contributions are due to \cite{moulines2005recursive}, \cite{Dahlhaus2009a}, and \cite{Giraud2015}, among others.
	
	We extend $A,B$ to the domain $(-\infty, 1]$ by setting $A(u)=A(0)$ and $B(u)=B(0)$ for $u<0$.
	The autoregressive process may be cast into our framework $X_{t,n} = G_n(\frac{t}{n},\beps_t)$ as
	\begin{align*}
		G_n(u,\beps_t) = \sum_{i=0}^\infty \left[ \prod_{j=1}^i A\left(u-\tfrac{j-1}{n}\right) \right]  B\left(u-\tfrac{i}{n}\right) \epsilon_{t-i} + \mu(\tfrac{t}{n}).
	\end{align*}
	This infinite sum is well-defined if we suppose that $\sup_{u} \|B(u)\|, \sup_u \|A(u)\|< \infty$, that the spectral radius $\rho(A(u))$ of $A(u), u\in[0,1]$, is at most $\rho(A(u))\leq\rho_0<1$, and that the function $u\mapsto A(u)$ admits some minimal regularity.
	In particular, if the function $u\mapsto A(u)$ has bounded $p$-variation for some $p\geq 1$, then $\|\prod_{j=1}^i A\left(u-\tfrac{j-1}{n}\right)\|\leq C \rho^{i}$ for any $\rho\in(\rho_0,1)$, see Lemma \crossref{lem:eigenvalue-product}{C.8} in the appendix.
	Hence, assumption \eqref{eqn:ergodic} is satisfied.
	Furthermore, if the functions $\mu$, $A$, and $B$ are left-continuous, we have $\|G_n(u,\beps_0)-G(u,\beps_0)\|_{L_q}\to 0$ by dominated convergence for any $q>2$, with limiting kernel
	\begin{align*}
		G(u,\beps_t) = \sum_{i=0}^\infty A(u)^i B(u) \epsilon_{t-i} + \mu(u).
	\end{align*}
	In particular, condition \eqref{eqn:GnG} holds.
	Finally, Proposition \crossref{prop:VAR}{C.9} in the appendix shows that assumption \eqref{eqn:pvar-finite} holds if $\|A\|_{p-var}+\|B\|_{p-var}+\|\mu\|_{p-var}<\infty$. 
	Moreover, in the more regular case that $\mu$, $A$, and $B$ are $\beta$-Hölder continuous, the kernel $G_n(u,\beps_0)$ is also $\beta$-Hölder continuous in $L_q(P)$, uniformly in $n$, and thus the same holds for $u\mapsto \mu^n_u = \E G_n(u,\beps_0)$.
	The latter property is required to apply Proposition \ref{prop:NW} discussed in the following section.
\end{example}

\section{Estimating integrated parameters}\label{sec:estimator}

For the locally stationary model introduced in Section \ref{sec:model}, we denote the local moments by
\begin{align*}
	\mu_u^n = \E G_n(u,\beps_0)\in\R^d,\qquad \mu_u = \lim_{n\to\infty} \mu_u^n=\E G(u,\beps_0) \in \R^d,\qquad u\in[0,1].
\end{align*}
Inference for the function $\mu_u^n$ resp.\ $\theta_u^n = f(\mu_u^n)$ can be performed in various ways. 
For example, one might assume a parametric form for the mapping $u\mapsto \mu_u^n$.
Here, we are interested in testing nonparametric hypotheses imposed on the function $\mu_u$. 
To this end, instead of treating the moment function directly, we suggest to consider its integral.
In particular, for a nonlinear function $f:\R^d\to\R$, we study the integrated quantity \begin{align*}
	F_n(u) = \int_0^u f(\mu^n_v)\, dv, \qquad u\in[0,1].
\end{align*}

Note that most hypotheses on $\theta_u^n=f(\mu^n_u)$ may be reformulated in terms of $F_n(u)$.
In Section \ref{sec:cp}, we will study change-point detection, where the null hypothesis $H_0: f(\mu^n_u) \equiv f(\mu^n_0)$ for $u\in[0,1]$ is equivalent to the hypothesis $H_0: F_n(u)\equiv u F_n(1)$. 
It is also possible to study the hypothesis $H_0: f(\mu^n_u) \equiv 0$, which is equivalent to $H_0: F_n(u)\equiv 0$. 
Another appealing aspect of studying the integrated quantity $F_n(u)$ instead of $f(\mu^n_u)$ is that the former may be estimated at a parametric rate $\sqrt{n}$, even in a nonparametric framework, see our results below. 
Though this might be surprising at first sight, note that a similar phenomenon occurs in the classical statistical setting of iid observations, where the empirical distribution function is $\sqrt{n}$ consistent, while nonparametric density estimation only allows for slower, nonparametric rates of convergences.

A straightforward way to estimate $F_n(u)$ is to consider the partial sum process
\begin{align}
	\tilde{M}_n(u) = \frac{1}{n} \sum_{t=\tau}^{\lfloor un\rfloor} f(\hat{\mu}_{t,n}),  \label{eqn:Mn-naive}
\end{align}
where $\hat{\mu}_{t,n}$ is a nonparametric estimator of $\mu_{\frac{t}{n}}$, and $\tau=\tau_n\geq 1$ may be introduced to alleviate boundary issues.
This approach has two shortcomings.
First, its distributional properties strongly depend on the specific estimator $\hat{\mu}_{t,n}$.
Thus, additional theoretical effort is required when adapting corresponding inferential procedures to new situations.
Second, for nonlinear $f$, the estimator $\tilde{M}_n(u)$ may be bias-dominated, impeding statistical inference.

In particular, a Taylor expansion yields
\begin{align}
	\frac{1}{n} \sum_{t=\tau}^{\lfloor un\rfloor} \left[f(\hat{\mu}_{t,n}) - f(\mu_{\frac{t}{n}})\right]
	&= \frac{1}{n} \sum_{t=\tau}^{\lfloor un\rfloor} Df(\mu_\frac{t}{n})(\hat{\mu}_{t,n} - \mu_\frac{t}{n}) + \mathcal{O}\left(\frac{1}{n} \sum_{t=\tau}^n \|\hat{\mu}_{t,n} - \mu_\frac{t}{n}\|^2\right). \label{eqn:Mn-naive-bias}
\end{align}
For most estimators $\hat{\mu}_{t,n}$, the bias of this expression is not smaller than $1/\sqrt{n}$.
If one considers, for example, a local average with bandwidth $k_n$, and assumes $u\mapsto \mu_u$ to be Lipschitz continuous, then the bias of $\tilde{M}_n(u)$ is of order $\mathcal{O}(\frac{k}{n} + \frac{1}{k})$, which is at best $\mathcal{O}(1/\sqrt{n})$.
In a related situation, \cite{Jacod2013} suggest to solve this problem by undersmoothing, i.e.\ choosing $k_n\ll \sqrt{n}$, and correcting the quadratic bias term in \eqref{eqn:Mn-naive-bias} explicitly; see also \cite[Section 4.2]{Potiron2018}.
However, the latter quadratic bias term strongly depends on the specific estimator $\hat{\mu}_{t,n}$, so that the approach of \cite{Jacod2013} may not be easily transfered to different problems.

As a generic approach for asymptotically unbiased estimation of integrated functionals, we propose the linearized partial sum estimator 
\begin{align}
	M_n(u) = \frac{1}{n} \sum_{t=\tau+L}^{\lfloor nu\rfloor} \left[ f(\hat{\mu}_{t-L,n}) + D f(\hat{\mu}_{t-L,n}) (X_{t,n}-\hat{\mu}_{t-L, n}) \right], \quad u\in[0,1]. \label{eqn:Mn-def}
\end{align}
for some initial offset $\tau=\tau_n\geq 1$, lag $L = L_n\to\infty$, $L_n\ll \tau_n$, to be specified later, and assuming $f$ to be sufficiently smooth.
The pilot estimator $\hat{\mu}_{t,n}$ needs to satisfy minimal high-level assumptions formulated below. 
Then, a Taylor expansion of $f$ readily yields, for some $\tilde{\mu}_{t,n}$ between $\hat{\mu}_{t-L,n}$ and $\mu^n_{t/n}$,
\begin{align*}
	M_n(u) 
	&= \frac{1}{n} \sum_{t=\tau_n+L_n}^{\lfloor nu \rfloor} f\left( \mu_{\frac{t}{n}}^n \right) 
	       + \frac{1}{n} \sum_{t=\tau_n+L_n}^{\lfloor nu \rfloor} \left( \mu^n_{\frac{t}{n}} - \hat{\mu}_{t-L,n} \right)^T\frac{D^2f\left(\tilde{\mu}_{t,n}\right)}{2} \left( \mu^n_{\frac{t}{n}} - \hat{\mu}_{t-L,n} \right)\\
	&\quad + \frac{1}{n} \sum_{t=\tau_n+L_n}^{\lfloor nu \rfloor} Df(\hat{\mu}_{t-L,n}) \left( X_{t,n}-\mu^n_{\frac{t}{n}} \right) \\
	&= I_n^1(u) + I_n^2(u) + I_n^3(u).
\end{align*} 
It can be shown that $I_n^1(u)-F_n(u)$ converges to zero sufficiently fast. 
Moreover, if $f$ is sufficiently regular, and the estimator $\hat{\mu}_{t,n}$ is good enough, then $\sqrt{n}I_n^2(u)\to 0$, whereas $\sqrt{n}I_n^3(u)$ is asymptotically unbiased and tends towards a Gaussian process.

We require the function $f$ to have bounded first and second derivatives in a neighborhood of the path of $\mu_v$.
Formally, introduce the convex set $\mathcal{M} = \text{conv} \{ \mu_u : u\in[0,1] \}\subset\R^d$, and its $\delta$-neighborhood $\mathcal{M}^\delta = \{ m \in\R^d: \|m-\tilde{m}\|<\delta\text{ for some } \tilde{m}\in\mathcal{M} \}$.
We require that there exists a $\delta\in(0,\infty]$ and a $C_f < \infty$ such that 
\begin{align}
	|f(x)| + \|D f(x)\| + \| D^2 f(x)\| \leq C_f, \quad x\in\mathcal{M}^\delta. \tag{A.4}\label{eqn:ass-f}
\end{align}
By restricting the boundedness assumption to the set $\mathcal{M}^\delta$, we may also consider functions of the form $f(x,y)= x/y$, if $y\geq c > 0$ on the set $\mathcal{M}^\delta$. 
Without localization, \eqref{eqn:ass-f} would not hold for this choice of $f$.

Regarding the pilot estimator, we require that $\hat{\mu}_{t,n}$ is measurable w.r.t.\ the past innovations $\beps_{t}$, i.e.\ $\hat{\mu}_{t,n}$ is a (potentially nonlinear) filter. 
By additionally introducing the lag $L$, the measurability condition on $\hat{\mu}_{t,n}$ serves to de-bias the term $I_n^3(u)$. 
In particular, as $L_n\to\infty$, the random vectors $Df(\hat{\mu}_{t-L,n})$ and $X_{t,n}$ decouple by virtue of \eqref{eqn:ergodic}.
Furthermore, we require the estimator $\hat{\mu}_{t,n}$ to be consistent in the sense that 
\begin{align}
	\sum_{t=\tau_n}^n \|\hat{\mu}_{t,n} - \mu^n_{\frac{t}{n}}\|^2  &= o_P(\sqrt{n}), \tag{A.5}\label{eqn:ass-mu-1}\\
	P\left( \hat{\mu}_{t,n}\in\mathcal{M}_\delta,\; t=\tau_n,\ldots, n\right) &= 1+o(1) \tag{A.6}\label{eqn:ass-mu-2}.
\end{align} 
The initial offset $\tau_n$ allows to circumvent boundary issues of the estimator $\hat{\mu}_{t,n}$.
The assumptions \eqref{eqn:ass-mu-1} and \eqref{eqn:ass-mu-2} are rather mild.
Property \eqref{eqn:ass-mu-2} requires some weak form of uniform consistency of the estimator.
If $f$ is globally smooth, i.e.\ $\delta=\infty$, then \eqref{eqn:ass-mu-2} is vacuous, and $\tau_n=1$ is a valid choice. 
Property \eqref{eqn:ass-mu-1} is a requirement on the rate of convergence of $\hat{\mu}_{t,n}$,  in a form routinely studied in nonparametric statistics.
The latter assumption is discussed in detail in Section \crossref{sec:rates}{A.2} of the supplement.
If we are willing to impose some additional smoothness conditions, a suitable nonparametric estimator may be obtained by local averaging.
In particular, we define
\begin{align*}
	\hat{\mu}_{t,n}^{NW} = \frac{1}{k_n\wedge t} \sum_{i=(t-k_n)\vee 1}^{t} X_{i,n},
\end{align*}
for a sequence $k_n\to\infty, k_n\ll n$. 
Note that $\hat{\mu}^{NW}_{t,n}$ can be interpreted as a one-sided kernel smoother of Nadaraya-Watson type.

\begin{proposition}\label{prop:NW}
	Let \eqref{eqn:ergodic} hold for some $q>2$, and choose $k_n\gg n^\frac{2}{q}$, then $\hat{\mu}_{t,n}^{NW}$ satisfies \eqref{eqn:ass-mu-2} for any offset sequence $\tau_n\to\infty$.
	Suppose that for all $n$, we have $\mu^n_u = \mu^{n,1}_{u} + \mu^{n,2}_{u}$, such that $\|\mu^{n,2}_{u}\|_{p-var}\leq a_n$ for $p\in[1,2)$, and such that $u\mapsto \mu^{n,1}_{u}$ is $\beta$-Hölder continuous with Hölder constant $C$, then
	\begin{align*}
		\frac{1}{n} \sum_{t=1}^n \left\|\hat{\mu}_{t,n}^{NW} - \mu^n_{\frac{t}{n}}\right\|^2 = \mathcal{O}_P \left( \frac{a_n k_n}{n}+ \left( \frac{k_n}{n} \right)^{2\beta} + \frac{\log n}{k_n} \right). 
	\end{align*} 
	In particular, $\hat{\mu}^{NW}_{t,n}$ satisfies \eqref{eqn:ass-mu-1} if $a_n k_n \ll\sqrt{n}$, and $\sqrt{n}\log(n) \ll k_n \ll n^{\frac{4\beta-1}{4\beta}}$.
	The latter condition is feasible for any $\beta>\frac{1}{2}$.
\end{proposition}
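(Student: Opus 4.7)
The plan is a classical bias-variance decomposition $\hat{\mu}^{NW}_{t,n} - \mu^n_{t/n} = \xi_t + b_t$, with stochastic part $\xi_t = \hat{\mu}^{NW}_{t,n} - \E\hat{\mu}^{NW}_{t,n}$ and deterministic bias $b_t = \frac{1}{k_n\wedge t}\sum_{i=(t-k_n)\vee 1}^t (\mu^n_{i/n} - \mu^n_{t/n})$. After squaring and summing, the cross term vanishes in expectation, so it suffices to control $n^{-1}\sum_t \E\|\xi_t\|^2$ and $n^{-1}\sum_t \|b_t\|^2$ separately and then invoke Markov's inequality.

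For the stochastic part, assumption \eqref{eqn:ergodic} enables a Burkholder/Rosenthal-type bound on the partial sum $\sum_{i=t-k_n}^t (X_{i,n} - \E X_{i,n})$ in $L_q$, giving $\|\xi_t\|_{L_q} \leq C k_n^{-1/2}$ and hence $\E\|\xi_t\|^2 \leq C/k_n$, which matches the stated rate up to a logarithmic factor. For \eqref{eqn:ass-mu-2}, I would apply Markov's inequality to $\|\xi_t\|_{L_q}^q \leq C k_n^{-q/2}$ together with a union bound over $t = \tau_n,\ldots,n$: the resulting probability is at most $n C (k_n^{q/2}\delta^q)^{-1}$, which vanishes precisely when $k_n \gg n^{2/q}$. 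The deterministic mean $\E\hat{\mu}^{NW}_{t,n}$ is a convex combination of $\{\mu^n_{i/n}\}$, which by \eqref{eqn:GnG} lies within $o(1)$ of $\mathcal{M}$, so the non-random component of \eqref{eqn:ass-mu-2} is automatic once $\tau_n\to\infty$.

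For the squared bias, I would split $b_t = b^1_t + b^2_t$ along $\mu^n = \mu^{n,1} + \mu^{n,2}$. The Hölder part is immediate: $\|b^1_t\|\leq C(k_n/n)^\beta$ pointwise, contributing $\mathcal{O}((k_n/n)^{2\beta})$. The main obstacle is the $p$-variation part, because the hypothesis bounds only $p$th-power increments while the target is an $L^2$-type sum. Here my plan is to apply Jensen and swap the order of summation,
\begin{align*}
\sum_t \|b^2_t\|^2 \leq \frac{1}{k_n}\sum_{l=0}^{k_n}\sum_t \|\mu^{n,2}_{(t-l)/n} - \mu^{n,2}_{t/n}\|^2,
\end{align*}
then for each lag $l$ partition the indices $t$ into the $l$ residue classes modulo $l$, so that within each class the increments come from pairwise disjoint subintervals whose cumulated $p$-variation is at most $a_n^p$. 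The elementary inequality $x^2 \leq (2a_n)^{2-p} x^p$, valid on increments bounded by $2 a_n$, then converts the $L^2$ sum back to a $p$-variation sum, giving $\sum_t \|\cdot\|^2 \leq C a_n^2 l$ per lag and, after summing $l \leq k_n$, $n^{-1}\sum_t \|b^2_t\|^2 = \mathcal{O}(a_n^2 k_n/n) \leq \mathcal{O}(a_n k_n/n)$ whenever $a_n = O(1)$. Combining the variance and bias contributions yields the stated rate, and the feasibility range for $k_n$ in the final sentence is obtained by forcing each of the three terms to be $o(1/\sqrt{n})$, which is compatible iff $\beta > 1/2$.
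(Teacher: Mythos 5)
Your proposal is correct and follows essentially the same route as the paper's proof (Lemmas C.2 and C.3): a bias--variance split, the covariance bound $|\Cov(X_{i,n},X_{j,n})|\le C\rho^{|i-j|}$ implied by \eqref{eqn:ergodic} for the stochastic term, a union bound with an $L_q$ Rosenthal-type inequality for \eqref{eqn:ass-mu-2}, and an overlap-counting argument (your residue classes mod $l$ are the paper's multiplicity-$k_n$ count) combined with $x^2\le(2a_n)^{2-p}x^p$ for the $p$-variation bias. The only point you gloss over is the boundary range $\tau_n\le t<k_n$, where the averaging window has length $t$ rather than $k_n$: this is exactly where $\tau_n\to\infty$ is needed (it contributes a term of order $\tau_n^{1-q/2}\to 0$ to the union bound) and where the paper's $\log n/k_n$ term originates.
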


By allowing the vanishing discontinuous part $\mu^{n,2}$ in Proposition \ref{prop:NW}, we may account for potential discretization errors.

For the local smoother $\hat{\mu}_{t,n}^{NW}$, as well as for any other estimator $\hat{\mu}_{t,n}$ satisfying our assumptions \eqref{eqn:ass-mu-1} and \eqref{eqn:ass-mu-2}, the functional estimator $M_n(u)$ admits a central limit theorem with parametric rate $\sqrt{n}$.
Our main result may be formulated as follows.

\begin{theorem}\label{thm:CLT-integrated}
Suppose that \eqref{eqn:GnG}, \eqref{eqn:pvar-finite}, \eqref{eqn:ergodic}, \eqref{eqn:ass-f}, \eqref{eqn:ass-mu-1}, and \eqref{eqn:ass-mu-2} hold, and let $p\in[1,4)$.
Suppose that $\tau_n\ll n$, and that $L_n$ satisfies $L_n \gg \log(n)^{1+a}$ for some $a\in(0,1)$, and that $L_n \ll n^{\frac{1}{p}-\frac{1}{4}}$, $L_n\ll n^\frac{1}{2p}$. 
Then, as $n\to\infty$,
\begin{align*}
	\sqrt{n}\left(M_n(u) - \frac{1}{n}\sum_{t=\tau_n+L_n}^{\lfloor un\rfloor} f(\mu^n_\frac{t}{n}) \right) &\wconv B\left(\int_0^u Df(\mu_v) \Sigma(v) Df(\mu_v)^T\,dv\right) =M(u), \\
	\Sigma(v) &= \sum_{h=-\infty}^\infty \Cov \left(G(v, \beps_0), G(v, \beps_h)\right) \in\R^{d\times d},
\end{align*}
where $B$ denotes a standard Brownian motion.
The weak convergence holds in the Skorokhod space $D[0,1]$.
If $p\in[1,2)$, it also holds that
\begin{align*}
	\sqrt{n}\left(M_n(u) - \int_{\frac{\tau_n+L_n}{n}\wedge u}^u f(\mu^n_v)\, dv \right) &\wconv M(u).
\end{align*}

\end{theorem}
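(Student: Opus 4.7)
\textbf{Proof plan for Theorem \ref{thm:CLT-integrated}.} The Taylor expansion $M_n(u)=I^1_n(u)+I^2_n(u)+I^3_n(u)$ displayed just before the theorem reduces the first claim to (i) $\sqrt n\, I^2_n(u)=o_P(1)$ uniformly on $[0,1]$, and (ii) $\sqrt n\, I^3_n(\cdot)\wconv M(\cdot)$ in $D[0,1]$. The second claim additionally needs the Riemann sum $\frac{1}{n}\sum f(\mu^n_{t/n})$ to approximate $\int f(\mu^n_v)\,dv$ at rate $o(n^{-1/2})$, which follows from bounded $p$-variation of $u\mapsto f(\mu^n_u)$ in the regime $p<2$. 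For (i), assumption \eqref{eqn:ass-f} gives $\|D^2f\|\leq C_f$ on the high-probability event from \eqref{eqn:ass-mu-2}, so $\sqrt n\,|I^2_n(u)|\leq (C_f/(2\sqrt n))\sum_t \|\mu^n_{t/n}-\hat\mu_{t-L,n}\|^2$. Splitting through $\mu^n_{(t-L)/n}$, the estimation-error part is $o_P(\sqrt n)$ by \eqref{eqn:ass-mu-1}, while the lag-bias part is controlled using the $p$-variation of $\mu^n$ (inherited from \eqref{eqn:pvar-finite} by Jensen); the rate constraints $L_n\ll n^{1/p-1/4}$ and $L_n\ll n^{1/(2p)}$ are sharp for this lag-bias to be negligible at the $\sqrt n$ scale.

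For (ii), I decompose $\sqrt n\,I^3_n=S_n+R_n$ where $S_n(u)=\frac{1}{\sqrt n}\sum Df(\mu^n_{t/n})(X_{t,n}-\mu^n_{t/n})$ uses the deterministic linearization point. The process $\tilde G_n(u,\beps_t):=Df(\mu^n_u)(G_n(u,\beps_t)-\mu^n_u)$ is itself a locally stationary kernel satisfying \eqref{eqn:GnG}--\eqref{eqn:ergodic}: its $p$-variation is controlled by the product inequality $\|fg\|_{p-var}\leq \|f\|_\infty\|g\|_{p-var}+\|f\|_{p-var}\|g\|_\infty$ applied to the bounded weight $Df(\mu^n_\cdot)$ (of bounded $p$-variation via \eqref{eqn:ass-f} together with \eqref{eqn:pvar-finite}) and the centered kernel. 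A direct application of Theorem \ref{thm:CLT-linear} to $\tilde G_n$, combined with \eqref{eqn:GnG} to pass from $\mu^n$ to $\mu$ in the limiting variance, then yields $S_n\wconv M$ with the stated long-run variance $Df(\mu_v)\Sigma(v)Df(\mu_v)^T$.

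The main obstacle is showing $\sup_u|R_n(u)|=o_P(1)$, where $R_n$ carries the random weight $c_{t,n}:=Df(\hat\mu_{t-L,n})-Df(\mu^n_{t/n})$. The key is that $c_{t,n}$ is $\sigma(\beps_{t-L})$-measurable, so the lag $L_n$ decorrelates it from the leading part of the innovation $X_{t,n}-\mu^n_{t/n}$. I would expand the innovation via the physical-dependence projections $X_{t,n}-\mu^n_{t/n}=\sum_{k\geq 0}P_{t-k}G_n(t/n,\beps_t)$; for indices $k<L_n$ the summands $c_{t,n}P_{t-k}G_n(t/n,\beps_t)$ form a martingale-difference array in $t$, and orthogonality together with the geometric bound $\|P_{t-k}G_n\|_{L_2}\leq C\rho^k$ (from \eqref{eqn:ergodic}) and $\sum c_{t,n}^2 \leq C\sum\|\hat\mu_{t-L,n}-\mu^n_{t/n}\|^2 = o_P(\sqrt n)$ (from Lipschitzness of $Df$ and \eqref{eqn:ass-mu-1}) yield an $o_P(1)$ contribution after the $1/\sqrt n$ normalization; the tail $k\geq L_n$ contributes $O(\rho^{L_n}\sqrt n)=o(1)$ once $L_n\gg\log(n)^{1+a}$. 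A standard chaining or maximal inequality over the partial sums $R_n(u)$ upgrades this pointwise bound to uniform convergence, which, combined with the FCLT for $S_n$, completes the proof.
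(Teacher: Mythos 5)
Your decomposition $M_n=I_n^1+I_n^2+I_n^3$ and your treatment of $I_n^1$ and $I_n^2$ coincide with the paper's Lemmas \ref{lem:I1} and \ref{lem:I2-L}. For $I_n^3$ you take a genuinely different route: you split off the deterministic-weight process $S_n(u)=n^{-1/2}\sum_t Df(\mu^n_{t/n})(X_{t,n}-\mu^n_{t/n})$, apply Theorem \ref{thm:CLT-linear} to the modified kernel $\tilde G_n(u,\beps)=Df(\mu^n_u)(G_n(u,\beps)-\mu^n_u)$ (whose verification of \eqref{eqn:GnG}--\eqref{eqn:ergodic} via the product rule for $p$-variation is correct), and push all randomness of the pilot estimator into a remainder $R_n$. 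The paper's Lemma \ref{lem:I3} instead keeps the random weights $Df(\hat\mu_{t-L,n})$ throughout a big-block/small-block coupling, uses the $\beps_{t-L}$-measurability of $\hat\mu_{t-L,n}$ to make the coupled blocks martingale differences, and only swaps $Df(\hat\mu_{t-L,n})$ for $Df(\mu^n_{t/n})$ inside the predictable quadratic variation. Your route buys a clean reduction to Theorem \ref{thm:CLT-linear}, at the price of concentrating the entire difficulty in $R_n$.

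That is where your sketch has a genuine gap. For fixed $k<L_n$ the summands $c_{t,n}P_{t-k}G_n(\tfrac tn,\beps_t)$ are indeed martingale differences in $t$, but orthogonality gives $\E\bigl(\sum_t c_{t,n}P_{t-k}X_{t,n}\bigr)^2=\sum_t\E\bigl[c_{t,n}^2(P_{t-k}X_{t,n})^2\bigr]$, and you cannot simply insert $\sum_t c_{t,n}^2=o_P(\sqrt n)$ here: that bound holds only in probability, not in expectation, and $c_{t,n}^2$ is correlated with the conditional second moment $\E[(P_{t-k}X_{t,n})^2\mid\F_{t-k-1}]$, which is itself random and controlled by $C\rho^{2k}$ only after taking expectations. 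Closing this requires an extra device, e.g.\ replacing $c_{t,n}$ by the predictably truncated weights $c_{t,n}\1\bigl(\sum_{s\le t}\|\hat\mu_{s-L,n}-\mu^n_{s/n}\|^2\le\epsilon\sqrt n\bigr)$, which remain $\F_{t-L}$-measurable and coincide with $c_{t,n}$ on an event of probability tending to one, followed by a Hölder interpolation (using $q>2$) between the uniform bound $|c_{t,n}|\le 2C_f$ and the summed second-moment bound to decouple $c_{t,n}^2$ from $(P_{t-k}X_{t,n})^2$. This is fixable, and the resulting computation is essentially what the paper does when it replaces $Df(\hat\mu_{t-L,n})$ by $Df(\mu^n_{t/n})$ in the quadratic variation of its block martingale, cf.\ the bound culminating in \eqref{eqn:I3-cov-det}; but as written, ``orthogonality \ldots{} yields an $o_P(1)$ contribution'' skips precisely the step that requires work. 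The tail $k\ge L_n$, the $p$-variation control of the lag bias in $I_n^2$, and the upgrade to uniformity via Doob's inequality are all fine.
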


Just as in Theorem \ref{thm:CLT-linear}, the Cramer-Wold device may be used to extend Theorem \ref{thm:CLT-integrated} to the multivariate setting where $f$ takes values in $\R^{d^*}$. 
The functional weak convergence then holds in the product space $(D[0,1])^{d^*}$.
If $p<2$ and $\tau_n\ll n^\frac{1}{2}$, we may alternatively choose $\int_0^u f(\mu_v^n)\, dv$ as centering term, such that $\sqrt{n}[M_n(u) - \int_0^u f(\mu_v^n)\, dv)]$ has the asymptotic distribution given in Theorem \ref{thm:CLT-integrated}.

\begin{remark}

	The suitable choice of the lag parameter $L$ depends on the strength of the dependency of the time series. 
	On the one hand, the bias term $I_n^2$ grows polynomially with $L$, i.e.\ $I_n^2 = \mathcal{O}_P(L_n^{\min(p,2)} n^{-2/\max(p,2)})$, see Lemma \crossref{lem:I2-L}{C.5} in the supplement.
	This suggests to choose the lag as small as possible. 
	On the other hand, $L$ needs to be large enough such that $X_{t,n}$ and $\hat{\mu}_{t-L,n}$ decouple, in order for $I_n^3$ to be asymptotically unbiased.
	In particular, in view of assumption \ref{eqn:ergodic}, we need to ensure that $\rho^{L_n} \ll 1/\sqrt{n}$.
	To ensure that both, $I_n^2$ and the bias of $I_n^3$, are negligible, a conservative choice is $L_n = c \log(n)^2$ for some factor $c$.
	The sensitivity of our methodology with respect to $L$ is assessed by simulations in Section \ref{sec:MC}.

\end{remark}

\begin{remark}\label{rem:Ln}
	In contrast to Theorem \ref{thm:CLT-linear}, the central limit theorem of the linearized estimator $M_n(u)$ requires $p<4$, i.e.\ the kernel $G_n$ needs to be more regular.
	This restriction is due to the bias incurred by the lag $L$.
	In particular, we can only ensure that $I_n^2 = o_P(1/\sqrt{n})$ if $p<4$, see Lemma \crossref{lem:I2-L}{C.5} in the supplement.
	On the other hand, the criticality of $p=2$ occurs because $|\frac{1}{n}\sum_{t=1}^n f(\mu^n_{\frac{t}{n}}) - \int_0^1 f(\mu^n_u)\, du| = \mathcal{O}(n^{-\frac{1}{p}} \|\mu^n\|_{p-var})$, see Lemma \crossref{lem:I1}{C.4} in the supplement.
	Note that both issues are not present in the linear case of Theorem \ref{thm:CLT-linear}.
	 
\end{remark}

The integrand of the asymptotic variance of the limit process $M(u)$ corresponds to the long-run variance under the local, stationary model $G(u,\beps_t)$. 
In particular, a direct application of the delta method shows that 
\begin{align}
	\sqrt{n}\left[f(\textstyle{\frac{1}{n} \sum_{t=1}^n G(u, \beps_t))} - f(\mu_u)\right] \wconv \mathcal{N}\left(0, Df(\mu_u)^T \Sigma(u) Df(\mu_u) \right). \label{eqn:CLT-stationary} 
\end{align}
If our model is indeed stationary, Theorem \ref{thm:CLT-integrated} shows that $M_n(1)$ has the same asymptotic distribution as \eqref{eqn:CLT-stationary}.
In this sense, accounting for the nonstationarity does not increase the asymptotic variance.

To perform feasible inference based on Theorem \ref{thm:CLT-integrated}, we need to handle the unknown asymptotic variance process. 
A consistent estimator may be constructed via blocked subsampling, similar to the suggestion of \cite{Carlstein1986}.

\begin{theorem}\label{thm:Q-estimation}
	Let the conditions of Theorem \ref{thm:CLT-integrated} hold for some $q>4$, and $p\in[1,4)$. 
	Choose some $b_n\to\infty$ such that $b_n\ll n^{\frac{2}{3 \max(p,2)}}$, $b_n \ll n^\frac{q-4}{2q+4}$.
	Then, as $n\to\infty$,
	\begin{align*}
	Q_n(u) 
	&= \frac{1}{n} \sum_{t=\tau_n+L_n}^{\lfloor n u \rfloor-b_n} \frac{1}{b_n} \left[ Df(\hat{\mu}_{t-L,n})\sum_{i=1}^{b_n}  (X_{t+i,n} - \hat{\mu}_{t-L,n})  \right]^2\\
	&\pconv Q(u)=\int_0^u Df(\mu_v) \Sigma(v) Df(\mu_v)^T\, dv.
	\end{align*}
	The convergence holds uniformly in $u\in[0,1]$ since $Q_n$ is monotone.
\end{theorem}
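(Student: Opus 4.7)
The plan is to decompose $Q_n(u) - Q(u)$ into three pieces: a pilot-estimator replacement error, a variance of the block-sum estimator around its mean, and a deterministic bias. Specifically, introduce the idealized statistic
\begin{align*}
  \tilde Q_n(u) = \frac{1}{n} \sum_{t=\tau_n+L_n}^{\lfloor nu\rfloor - b_n} \frac{1}{b_n}\left[Df(\mu_{t/n})\, \tilde S_t\right]^2, \qquad \tilde S_t = \sum_{i=1}^{b_n}\bigl(X_{t+i,n} - \mu^n_{(t+i)/n}\bigr),
\end{align*}
in which both the outer gradient and the inner centering use the true (unknown) local means. Since assumption \eqref{eqn:ass-f} makes $Df$ Lipschitz on $\mathcal M^\delta$, and since \eqref{eqn:pvar-finite} bounds $\|\mu_{(t+i)/n} - \mu_{t/n}\|$ by the $p$-variation over a window of length $b_n/n$, the difference $Q_n(u) - \tilde Q_n(u)$ can be bounded by Cauchy--Schwarz: one factor is $\frac{1}{n}\sum_t \|\hat\mu_{t-L,n}-\mu^n_{t/n}\|^2 = o_P(n^{-1/2})$ by \eqref{eqn:ass-mu-1}, and the other is $\frac{1}{n}\sum_t \frac{1}{b_n}\|\tilde S_t\|^2$, which is $O_P(1)$ as soon as $q>4$ in \eqref{eqn:ergodic}. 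Assumption \eqref{eqn:ass-mu-2} handles the event on which the Lipschitz bound applies.

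Next I would analyze $\tilde Q_n(u)$ by splitting it into mean and centered parts. For the mean, approximate the nonstationary kernel $G_n(\cdot,\beps_{t+i})$ by the locally-stationary kernel $G(t/n,\beps_{t+i})$ on each block; the approximation error is controlled by \eqref{eqn:GnG} and by the $p$-variation of $G_n$ restricted to the interval of length $b_n/n$, together with the geometric mixing \eqref{eqn:ergodic}. This yields $\frac{1}{b_n}\E[\tilde S_t\tilde S_t^T] = \Sigma(t/n) + o(1)$ uniformly in $t$, where the summability of the autocovariance series used to define $\Sigma$ comes from the exponential decay of \eqref{eqn:ergodic}. Summing in $t$ gives a Riemann sum converging to $Q(u) = \int_0^u Df(\mu_v)\Sigma(v)Df(\mu_v)^T\,dv$. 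The condition $b_n \ll n^{2/(3\max(p,2))}$ is exactly what is needed so that the intra-block $p$-variation residual aggregates to $o(1)$. For the fluctuation $\tilde Q_n(u) - \E\tilde Q_n(u)$, summands at times $t$ and $s$ with $|t-s|>b_n+j$ are coupled up to error $\rho^j$ in $L_q$ by \eqref{eqn:ergodic}; a Rosenthal-type bound combined with $L_{q/2}$ control of the squared block sum (which needs $q>4$) yields $\Var(\tilde Q_n(u)) = O(b_n/n)$ modulo a term requiring $b_n \ll n^{(q-4)/(2q+4)}$, which comes from trading off truncation in $L_q$ against the block length. This gives pointwise convergence in probability, and monotonicity of $u\mapsto Q_n(u)$ together with continuity of $Q$ upgrades it to uniform convergence by a Pólya–Dini argument.

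The step I expect to be the real obstacle is justifying $\frac{1}{b_n}\E[\tilde S_t\tilde S_t^T] \to \Sigma(t/n)$ uniformly in $t$ under only $p$-variation regularity of $G_n$ rather than Hölder continuity. One has to bound $\|G_n(u,\beps_0) - G_n(t/n,\beps_0)\|_{L_q}$ for $u$ in a window of size $b_n/n$ using only that $G_n$ has finite $p$-variation; a pointwise Hölder-type bound is false, but after integrating over the block and applying the interpolation inequality for $p$-variation, one recovers a rate of order $(b_n/n)^{1/p}$. Combining this with the $q$-dependent truncation needed to pass from $L_q$ to $L_2$ in the autocovariance series produces precisely the two competing bandwidth conditions stated in the theorem, and balancing them cleanly is the main technical work.
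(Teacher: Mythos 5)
Your overall architecture matches the paper's: replace the pilot estimator and freeze the kernel within each block to get an idealized statistic, show its mean is a Riemann sum for $Q(u)$ via the geometrically summable autocovariances, kill its fluctuations via a covariance bound $|\Cov(\zeta_{t,n},\zeta_{s,n})|\leq Cb_n^2\rho^{(|s-t|-b_n)\vee 0}$ (this is where $q>4$ enters, to put the squared block sums in $L_2$), and upgrade to uniform convergence by monotonicity. Your identification of $b_n\ll n^{2/(3\max(p,2))}$ with the aggregated intra-block $p$-variation residual is also correct.

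The gap is in the replacement step. You claim $Q_n-\tilde Q_n$ is controlled by Cauchy--Schwarz with one factor $\frac1n\sum_t\|\hat\mu_{t-L,n}-\mu^n_{t/n}\|^2=o_P(n^{-1/2})$ and the other $O_P(1)$. But the dominant cross term in $a_t-b_t$ (with $a_t,b_t$ the normalized block sums) is $\frac{1}{\sqrt{b_n}}\sum_{i=1}^{b_n}\bigl[Df(\hat\mu_{t-L,n})-Df(\mu^n_{t/n})\bigr]\bigl(G_n(\tfrac{t}{n},\beps_{t+i})-\mu^n_{t/n}\bigr)$, whose square is bounded by $\|\hat\mu_{t-L,n}-\mu^n_{t/n}\|^2\sum_i\|Z_{t,i,n}\|^2$ with $Z_{t,i,n}$ unbounded. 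Since \eqref{eqn:ass-mu-1} only gives $\ell_1$ control of the squared estimation errors, summing over $t$ forces you to pay $\max_{t,i}|Z_{t,i,n}|^2$, which by a union bound in $L_{q/2}$ is only $\mathcal{O}_P\bigl((n^{2-2/\min(p,2)}b_n)^{2/q}\bigr)$ --- polynomially large in $n$. It is exactly this maximal term, multiplied by $\frac{b_n}{n}\sum_t\|\hat\mu_{t-L,n}-\mu^n_{t/n}\|^2$, that produces the condition $b_n\ll n^{(q-4)/(2q+4)}$ in the paper (Step (ii) of the proof of Theorem \crossref{thm:bootstrap-new}{C.7}, with $\kappa=\tfrac12$). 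You instead attribute that condition to a truncation trade-off in the variance of the idealized statistic, where in fact the paper's bound $\Var(\tfrac1n\sum_t\zeta_{t,n})=\mathcal{O}(b_n^3/n)$ needs only $b_n\ll n^{1/3}$, already implied by the first bandwidth condition. So the stated rate condition is not recovered by your argument as written, and the replacement step needs the maximal inequality made explicit before the proof closes.
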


Theorem \ref{thm:Q-estimation} is a special case of the slightly more general Theorem \crossref{thm:bootstrap-new}{C.7} in the appendix.
Note that the upper bound on $b_n$ reduces to $b_n\ll n^\frac{1}{3}$ if $q\geq 10$. 

The estimator $Q_n(u)$ may be used to perform inference based on $M_n(u)$ via the following multiplier bootstrap scheme, similar to \cite{Zhou2013}. 

\begin{theorem}\label{thm:bootstrap}
	Let the conditions of Theorem \ref{thm:Q-estimation} hold for some $q>4$.
	Let $Y_t \sim\mathcal{N}(0,1)$ be iid standard normal random variables, independent of the $\epsilon_i$, and define the process
	\begin{align*}
	\widehat{M}_n(u) = \frac{1}{\sqrt{n}} \sum_{t=\tau_n+L_n}^{\lfloor n u \rfloor-b_n} Y_t \left[  \frac{1}{\sqrt{b_n}}\sum_{i=1}^{b_n} Df(\hat{\mu}_{t-L,n})(X_{t+i,n} - \hat{\mu}_{t-L,n})  \right],\qquad u\in[0,1].
	\end{align*}
	Then the conditional distribution of $\widehat{M}_n$ given $\mathbb{X}_n = (X_{1,n},\ldots, X_{n,n})$ converges weakly in the Skorokhod space to $M(u)$ in probability, where $M(u)$ is the limit process from Theorem \ref{thm:CLT-integrated}.
\end{theorem}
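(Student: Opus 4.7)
The plan is to exploit that, conditionally on $\mathbb{X}_n$, the bootstrap process $\widehat{M}_n$ is a centered Gaussian process with independent increments, so its entire conditional law is determined by its variance function. Writing $Z_{t,n} = b_n^{-1/2}\sum_{i=1}^{b_n} Df(\hat{\mu}_{t-L,n})(X_{t+i,n} - \hat{\mu}_{t-L,n})$, so that $\widehat{M}_n(u) = n^{-1/2}\sum_t Y_t Z_{t,n}$, and using that the $Y_t$ are iid $\mathcal{N}(0,1)$ independent of $\mathbb{X}_n$, one reads off
\begin{align*}
	\mathrm{Cov}\bigl(\widehat{M}_n(u),\widehat{M}_n(v)\bigm|\mathbb{X}_n\bigr) = Q_n(u\wedge v).
\end{align*}
Theorem \ref{thm:Q-estimation} yields $\sup_u|Q_n(u)-Q(u)|\pconv 0$, and the limit $Q$ is continuous since it is the integral of a bounded function. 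The target process $M$ is centered Gaussian with covariance $Q(u\wedge v)$.

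From this, convergence of the finite-dimensional distributions is immediate: for any $u_1<\cdots<u_k$, the conditional law of $(\widehat{M}_n(u_j))_{j=1}^k$ is centered Gaussian with covariance matrix $(Q_n(u_i\wedge u_j))_{i,j}$, which converges in probability to the covariance matrix $(Q(u_i\wedge u_j))_{i,j}$ of $(M(u_j))_j$. Since Gaussian laws depend continuously on their covariance, the conditional finite-dimensional distributions converge weakly to those of $M$ in probability.

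For tightness in $D[0,1]$, I would restrict to the high-probability event $E_{n,\eta}=\{\sup_u|Q_n(u)-Q(u)|\leq\eta\}$. On $E_{n,\eta}$, the independent-increments structure yields, for $u\leq v$,
\begin{align*}
	\mathrm{Var}\bigl(\widehat{M}_n(v)-\widehat{M}_n(u)\bigm|\mathbb{X}_n\bigr) = Q_n(v)-Q_n(u)\leq\omega_Q(v-u)+2\eta,
\end{align*}
where $\omega_Q$ is the modulus of continuity of the uniformly continuous function $Q$. Standard maximal inequalities for the supremum of a Gaussian random walk with these increment variances then deliver the Billingsley modulus-of-continuity criterion. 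Since the limit $M$ has continuous paths, it suffices to additionally verify that the maximal jump $\sup_t|Y_tZ_{t,n}|/\sqrt{n}$ vanishes in probability: using the uniform $L_q$ bound on $Z_{t,n}$ with $q>4$ (from \eqref{eqn:ass-f} together with $L_q$ bounds on $X_{t,n}-\hat{\mu}_{t-L,n}$) and $\max_t|Y_t|=\mathcal{O}_P(\sqrt{\log n})$, a union bound yields $\sup_t|Y_tZ_{t,n}|/\sqrt{n}=\mathcal{O}_P(n^{1/q-1/2}\sqrt{\log n})=o_P(1)$.

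The main obstacle is reconciling the ``in probability'' convergence of $Q_n$ with the \emph{conditional} weak convergence statement. The cleanest route is the subsequence principle: along an arbitrary subsequence, extract a further subsequence on which $Q_n\to Q$ uniformly almost surely; for each such realization the preceding two paragraphs give the conditional weak convergence in $D[0,1]$ pathwise; the subsequence principle then upgrades this to the required ``in probability'' conditional convergence along the full sequence. An alternative, more constructive route is to couple $\widehat{M}_n$ with $B(Q_n(\cdot))$ on an extended space, where $B$ is a standard Brownian motion independent of $\mathbb{X}_n$; on $E_{n,\eta}$ one then has $B(Q_n(\cdot))\to B(Q(\cdot))=M(\cdot)$ uniformly, and the residual discretization error is controlled by the same jump bound.
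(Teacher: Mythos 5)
Your proposal is correct and rests on exactly the same key observation as the paper: conditionally on $\mathbb{X}_n$, the process $\widehat{M}_n$ is a centered Gaussian process with independent increments and variance function $Q_n$, and Theorem \ref{thm:Q-estimation} gives $\sup_{u\in[0,1]}|Q_n(u)-Q(u)|\pconv 0$. The paper then finishes in one line via the route you only list as an ``alternative, more constructive'' option: it represents the conditional law as $B(Q_n(\cdot))$ for a standard Brownian motion $B$ independent of $\mathbb{X}_n$, so that $\sup_u|B(Q_n(u))-B(Q(u))|\pconv 0$ by uniform continuity of Brownian paths on compacts, which delivers conditional weak convergence in probability with no separate tightness, maximal-jump, or subsequence argument. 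Your primary route (finite-dimensional convergence plus a modulus-of-continuity bound on the event $E_{n,\eta}$, upgraded by the subsequence principle) also works, but be aware that your increment-variance bound $\omega_Q(v-u)+2\eta$ does not vanish as $v-u\to 0$ for fixed $\eta$, so you must take $\limsup_n$ and then $\lim_{\delta\to 0}$ before letting $\eta\to 0$; this is a fixable ordering issue, and the time-change representation sidesteps it entirely, along with the jump bound $\sup_t|Y_tZ_{t,n}|/\sqrt{n}=o_P(1)$, which becomes unnecessary.
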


It can also be shown that the bootstrap consistency of Theorem \ref{thm:bootstrap} holds under weaker rate constraints on $\hat{\mu}_{t,n}$, replacing the rate $o(\sqrt{n})$ by $o(n^{1-\kappa})$ for some $\kappa>0$.
However, a smaller value of $\kappa$ requires stronger conditions on $b_n$, see Theorem \crossref{thm:bootstrap-new}{C.7} in the appendix.
The local smoother $\hat{\mu}_{t,n}^{NW}$ is still consistent in the non-smooth case, where only $\|\mu\|_{p-var}<\infty$, although at a slower rate.
Hence, the latter estimator may still be utilized for consistent variance estimation.
This is of particular interest for applications to change-point tests, as described in the following section, where the bootstrap procedure is still consistent under various alternative hypotheses.

\section{Change-point detection}\label{sec:cp}

A major motivation to perform inference for the integrated parameter $F_n(u)$  resp.\ $F(u) = \int_0^u f(\mu_v)\, dv$ is that the estimator $M_n(u)$ may be used to test for change-points.
Our framework lends itself to test the hypothesis
\begin{align}
	H_0: f(\mu^n_{u})=f(\mu^n_{0}) \text{ for all $u\in[0,1]$} \quad \leftrightarrow \quad H_1: f(\mu^n_u) \neq f(\mu^n_{0})\text{ for some $u\in[0,1]$.} \label{eqn:H0-cp}
\end{align}
To perform a test for this problem, a common approach is to formulate the CUSUM statistic, which in our case reads as 
\begin{align}
	\begin{split}
	T^*_n = \sup_{u\in [u_n,1]}  |T_n(u)|, &\quad \text{where }
	T_n(u)=  M_n(u) - \frac{u-u_n}{1-u_n} M_n(1)\\
	&\quad \text{and }\qquad u_n = \frac{\tau_n+L_n-1}{n}. 
	\end{split}\label{eqn:cusum}
\end{align}

The main result Theorem \ref{thm:CLT-integrated} yields that $T_n(u) \pconv F(u)-uF(1)$ as $n\to\infty$, which is identically zero if the null hypothesis holds.
In this case, Theorem \ref{thm:CLT-integrated} yields that
\begin{align*}
	\sqrt{n} T_n(u) &\wconv  M(u) - u M(1),\\
	\sqrt{n}T^*_n &\wconv T^* = \sup_{u\in[0,1]} |M(u) - uM(1)|,
\end{align*}
where $M(u)$ is the Gaussian limit process.
The limit distribution $T^*$ may be approximated via the bootstrap procedure outlined in Theorem \ref{thm:bootstrap}, i.e.\ by sampling the random variable $\widehat{T}^*_n = \sup_{u\in[0,1]} |\widehat{M}_n(u)-u\widehat{M}_n(1)|$, so that $(\widehat{T}^*_n|\mathbb{X}_n) \wconv T^*$ by virtue of Theorem \ref{thm:bootstrap}.
In particular, denote by $t_\alpha$ the $1-\alpha$ quantile of $T^*$, and by $t_{\alpha,n}$ the $1-\alpha$ quantile of $(\widehat{T}^*_n|\mathbb{X}_n)$.
In practice, the quantile $t_{\alpha,n}$ may be approximated up to arbitrary precision by sampling from the conditional distribution $\hat{T}_n^*$. 
The corresponding test procedure may then be formulated as follows.

\begin{proposition}\label{prop:CUSUM}
	Let the conditions of Theorem \ref{thm:bootstrap} hold, and denote by $t_{\alpha,n}$ the $1-\alpha$ quantile of the conditional distribution $\widehat{T}^*_n|\mathbb{X}_n$.
	If the null hypothesis \eqref{eqn:H0-cp} holds, and if $\Var(M(u_0))>0$ for some $u_0\in(0,1)$, then
	\begin{align*}
		\lim_{n\to\infty} P\left( T_n^* > t_{\alpha,n} \right) = \alpha, \quad n\to\infty.
	\end{align*}
	Hence, rejecting $H_0$ if the test statistic $T_n^*$ exceeds the critical value $t_{\alpha,n}$ leads to a test with nominal size $\alpha\in(0,1)$ asymptotically.
\end{proposition}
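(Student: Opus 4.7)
The plan is to combine Theorem \ref{thm:CLT-integrated} (for the statistic $T_n^*$), Theorem \ref{thm:bootstrap} (for the critical value $t_{\alpha,n}$), and a standard quantile convergence argument, after handling the centering terms carefully under $H_0$.

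First I would unpack the centering under the null. Under \eqref{eqn:H0-cp}, the sample mean $f(\mu_0^n)$ is constant in $u$, so the centering in Theorem \ref{thm:CLT-integrated} is $c_n(u) f(\mu_0^n)$ with $c_n(u) = (\lfloor un\rfloor - \tau_n - L_n + 1)/n$. A short algebraic check shows $c_n(u) - \tfrac{u-u_n}{1-u_n}\, c_n(1) = O(1/n)$ uniformly in $u\in[u_n,1]$, so this contribution to $\sqrt{n}\, T_n(u)$ is $O(n^{-1/2})$ and vanishes. Writing $Z_n(u) = \sqrt{n}(M_n(u) - c_n(u) f(\mu_0^n))$, Theorem \ref{thm:CLT-integrated} gives $Z_n \wconv M$ in $D[0,1]$, and the map $Z \mapsto Z(u) - \tfrac{u-u_n}{1-u_n}Z(1)$ is continuous. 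Since $u_n\to 0$, an application of the continuous mapping theorem (combined with the standard perturbation argument for $u_n$) yields $\sqrt{n}\, T_n \wconv M(u) - u M(1)$ in $D[0,1]$, and hence $\sqrt{n}\, T_n^* \wconv T^* = \sup_{u\in[0,1]} |M(u) - u M(1)|$ by continuity of $\sup |\cdot|$.

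Next I would handle the bootstrap side in parallel. Theorem \ref{thm:bootstrap} states that conditional on $\mathbb{X}_n$, $\widehat{M}_n \wconv M$ in probability, in the Skorokhod space. The same continuous functional as above, applied conditionally, gives $(\widehat{T}_n^* \mid \mathbb{X}_n) \wconv T^*$ in probability. Here I would invoke a conditional continuous mapping theorem (e.g.\ in the sense that the conditional CDF of $\widehat{T}_n^*$ converges in probability to the CDF of $T^*$ at continuity points).

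Finally, I need to upgrade convergence of conditional distributions to convergence of the quantile $t_{\alpha,n}$. The crucial ingredient is that the CDF of $T^*$ is continuous and strictly increasing at $t_\alpha$. Non-degeneracy of $T^*$ follows from the assumption $\Var(M(u_0))>0$ for some $u_0\in(0,1)$, which forces the Gaussian bridge $M(u)-uM(1)$ to be a non-trivial centered Gaussian process; a standard Gaussian concentration argument (for instance Tsirelson's theorem on the supremum of Gaussian processes) then implies that the distribution of $T^*$ is absolutely continuous on $(0,\infty)$, and in particular continuous and strictly increasing at $t_\alpha$. From this, conditional weak convergence of $\widehat{T}_n^*$ yields $t_{\alpha,n} \pconv t_\alpha$. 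Combined with $\sqrt{n}\, T_n^* \wconv T^*$ and Slutsky's theorem, one concludes
\begin{equation*}
P(T_n^* > t_{\alpha,n}) = P(\sqrt{n}\, T_n^* > \sqrt{n}\, t_{\alpha,n}) \to P(T^* > t_\alpha) = \alpha.
\end{equation*}

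The main technical obstacle I anticipate is the passage from conditional weak convergence of $\widehat{T}_n^*$ to convergence of the bootstrap quantile in probability, which requires both the continuity of the limiting distribution of $T^*$ and a careful use of the Polya-type uniform convergence of CDFs; the non-degeneracy assumption $\Var(M(u_0))>0$ is precisely what rules out an atom of $T^*$ at $t_\alpha$ and makes this step go through.
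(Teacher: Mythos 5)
Your proposal follows essentially the same route as the paper: CUSUM convergence under $H_0$ from Theorem \ref{thm:CLT-integrated} plus continuous mapping, conditional weak convergence of $\widehat{T}^*_n$ from Theorem \ref{thm:bootstrap}, absolute continuity of the law of the supremum of a Gaussian process (the paper cites Lifshits rather than Tsirelson, but these are the same family of results), and the non-degeneracy assumption $\Var(M(u_0))>0$ to kill the atom at $0$. The one place where your argument is genuinely different, and slightly weaker, is the final quantile step: you deduce $t_{\alpha,n}\pconv t_\alpha$ from strict monotonicity of the limiting CDF $H$ at $t_\alpha$ and then apply Slutsky, but absolute continuity alone does not give strict increase at $t_\alpha$ (one would additionally need that $t_\alpha$ lies in the interior of the support, which is true for suprema of non-degenerate Gaussian processes but is an extra fact you assert without proof). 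The paper avoids this entirely: it never claims the bootstrap quantile converges to a unique point, but instead passes to an almost-surely convergent subsequence, sandwiches $H_{n_k}^{-1}(1-\alpha)$ between $H^{-1}(1-\alpha)$ and $H^{-1}((1-\alpha)+)$, and uses only continuity of $H$ (so that $H$ takes the value $1-\alpha$ at both endpoints) together with dominated convergence. Your approach buys a cleaner one-line conclusion at the cost of an extra structural fact about $H$; the paper's buys robustness to possible flat spots of $H$ at the cost of a subsequence argument. Either way the proof goes through once the monotonicity claim is either justified or replaced by the paper's sandwich argument.
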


Although we focus on the uniform CUSUM test statistic, the functional central limit theorem for the process $M_n(u)$ also enables the consideration of alternative statistics, e.g.\ the MOSUM statistic introduced by \cite{Bauer1978}, see also \cite{Chu1995}, or the Cramér-von Mises statistic $\int_0^1 T_n(u)^2\, du$.

A desirable property of change-point tests is robustness against nuisance changes.
For the quantity $f(\mu^n_u)$ to be non-constant, it is necessary that the local moment $\mu^n_u$ changes.
It is thus tempting to instead test the null hypothesis $H_0^*: \mu^n_u\equiv \mu^n_0$, which is methodologically simpler to achieve.
For example, the methods of \cite{Zhou2013} and \cite{Vogt2015} are applicable to test for $H_0^*$.
However, this approach bears the risk to falsely detect a change although $f(\mu^n_u)$ remains constant. 
For example, it might happen that the variance of a time series is non-constant, while the autocorrelation structure remains constant, as studied by \cite{Dette2018}.
Furthermore, \cite{Schmidt2020} tests for homoscedasticity with a non-constant mean function.
A related approach is presented by \cite{Demetrescu2018}. 
By design, our test is only sensitive to changes in the quantity $f(\mu^n_u)$.

Another type of nuisance change might occur in the parameters which are not explicitly described by the local moment function $\mu^n_u$.
For example, when testing for changes in the mean of a heteroscedastic time series, the variance is a nuisance parameter not contained in the vector $\mu^n_u$, but relevant for statistical inference, see \cite{Gorecki2018} and \cite{Pesta2018}. 
We account for this type of nonstationarity by working in a locally stationary framework which allows not only for heteroscedasticity, but also for a varying dependency structure.
This has also been suggested by \cite{Zhou2013}, who designs a corresponding test for changes in the mean.
The recent articles \cite{Vogt2015}, \cite{Dette2018}, and \cite{Cui2020}, also employ a locally stationary model.

Test statistics for the change point problem usually need to be standardized by an estimator of their asymptotic variance.
However, variance estimators designed for the stationary case might be inconsistent under the alternative, resulting in a loss of power, see \citep{Juhl2009,Shao2010} and the discussion therein.
In contrast, our bootstrap procedure is consistent under the alternative where $f(\mu^n_u)$ is not constant and potentially discontinuous, see Theorem \ref{thm:bootstrap} and the discussion thereafter.
Moreover, we may investigate the behavior of our test statistic under local alternatives at rate $1/\sqrt{n}$.

\begin{proposition}\label{prop:power}
	Suppose that the conditions of Theorem \ref{thm:CLT-integrated} hold, and assume furthermore that $\mu^n_u = \mu_u + \frac{1}{\sqrt{n}} \delta_u$ for some function $u\mapsto \delta_u\in\R^d$, such that $f(\mu_u) = f(\mu_0)$ and $\|\delta\|_{p-var}<\infty$.
	Then 
	\begin{align*}
		\sqrt{n} T^*_n &\wconv \sup_{u\in[0,1]} | T(u) + \Delta(u) |,\\
		\Delta(u) &= \int_0^u Df(\mu_v) \delta_v\, dv - u \int_0^1 Df(\mu_v)\delta_v\, dv,
	\end{align*} 
	where $T(u) = M(u) - u M(1)$, and $M(u)$ is the limit process from Theorem \ref{thm:CLT-integrated}.
	
\end{proposition}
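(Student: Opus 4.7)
The plan is to isolate the deterministic drift produced by the local alternative via a first-order Taylor expansion of $f$ around $\mu_v$, and then appeal to Theorem~\ref{thm:CLT-integrated} for the stochastic residual. Since $\|\delta\|_{p-var}<\infty$ forces $\sup_v\|\delta_v\|<\infty$, the alternative $\mu^n_v=\mu_v+\delta_v/\sqrt{n}$ eventually lies in $\mathcal{M}^\delta$, so the hypotheses of Theorem~\ref{thm:CLT-integrated} remain in force. I would apply it in the form (for $p\in[1,2)$)
\begin{align*}
\sqrt{n}\left[M_n(u)-\int_{u_n\wedge u}^{u}f(\mu^n_v)\,dv\right]\wconv M(u)\quad\text{in }D[0,1],
\end{align*}
handling the Riemann-sum centering required for $p\in[2,4)$ analogously, using Lemma~C.4 of the supplement to absorb the discretization error at rate $o(1/\sqrt n)$ since $\|\mu^n\|_{p-var}$ stays bounded under the local alternative.

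Next I would Taylor-expand $f(\mu^n_v)=f(\mu_v)+Df(\mu_v)\delta_v/\sqrt{n}+\mathcal{O}(\|\delta_v\|^2/n)$ uniformly in $v$, which is justified by \eqref{eqn:ass-f} and the boundedness of $\delta$. Setting $G(u):=\int_0^u Df(\mu_v)\delta_v\,dv$, which is Lipschitz since its integrand is bounded, and using $f(\mu_v)\equiv f(\mu_0)=:c$, integration yields uniformly in $u$
\begin{align*}
\sqrt{n}\,M_n(u)=\sqrt{n}(u-u_n)\,c+[G(u)-G(u_n)]+\tilde M_n(u),
\end{align*}
where $\tilde M_n\wconv M$ in $D[0,1]$.

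The key observation will be the exact cancellation in the CUSUM: because the weight $\frac{u-u_n}{1-u_n}$ satisfies $\frac{u-u_n}{1-u_n}\cdot\sqrt{n}(1-u_n)c=\sqrt{n}(u-u_n)c$, the diverging constants cancel in $T_n(u)=M_n(u)-\frac{u-u_n}{1-u_n}M_n(1)$, leaving
\begin{align*}
\sqrt{n}\,T_n(u)=[G(u)-G(u_n)]-\tfrac{u-u_n}{1-u_n}[G(1)-G(u_n)]+\tilde M_n(u)-\tfrac{u-u_n}{1-u_n}\tilde M_n(1)+o_P(1).
\end{align*}
Since $u_n\to 0$ gives $G(u_n)\to 0$ by continuity of $G$, and $\frac{u-u_n}{1-u_n}\to u$ uniformly in $u\in[u_n,1]$, the deterministic drift converges uniformly to $\Delta(u)=G(u)-uG(1)$, while the stochastic part converges weakly to $T(u)=M(u)-uM(1)$ by Slutsky applied to $\tilde M_n\wconv M$. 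A final application of the continuous mapping theorem to $\sup_u|\cdot|$ — valid at the limit because $T+\Delta$ has continuous sample paths — will yield the claim. The hard part is identifying that the divergent $\sqrt{n}(u-u_n)c$ contributions are cancelled exactly by the CUSUM weighting (rather than merely to leading order); once this algebraic cancellation is observed, the remaining steps are standard uniform-convergence bookkeeping.
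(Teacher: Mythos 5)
Your proposal is correct and follows essentially the same route as the paper: decompose $T_n(u)$ into the centered stochastic CUSUM (handled by Theorem \ref{thm:CLT-integrated}) plus a deterministic drift, Taylor-expand $f(\mu^n_v)=f(\mu_v)+n^{-1/2}Df(\mu_v)\delta_v+\mathcal{O}(1/n)$, use $f(\mu_v)\equiv f(\mu_0)$ so the constant cancels in the CUSUM weighting, and show the drift converges uniformly to $\Delta(u)$ via the bounded $p$-variation of $v\mapsto Df(\mu_v)\delta_v$ and Lemma \crossref{lem:I1}{C.4}. The only cosmetic difference is that the paper works with the discrete Riemann-sum centering throughout rather than the integral form, which is exactly the adjustment you flag for $p\in[2,4)$.
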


Note that under the local alternative of Proposition \ref{prop:power}, the simple estimator $\hat\mu_{t,n}^{NW}$ still satisfies \eqref{eqn:ass-mu-1} and \eqref{eqn:ass-mu-2} if $u\mapsto \mu_u$ is smooth, see Proposition \ref{prop:NW}.
Hence, Theorem \ref{thm:bootstrap} is still applicable so that the bootstrap is consistent, and the CUSUM test with bootstrapped critical values has non-trivial power against local alternatives in direction $\delta_v$, given that $\Delta(u)\not\equiv 0$.
We also point out that the test has power not only against abrupt changes, but also against changes which occur gradually in time.

The proposed procedure allows for a unified treatment of change-point tests for a wide range of parameters of interest, as demonstrated by the examples below. 
Previously, suitable test statistics have been constructed individually for these problems, while our results show that they may be treated in a rather generic way.

\subsection{Changes in autocorrelation}\label{sec:autocorrelation}

The dependency structure of time series is commonly described in terms of their autocovariance function.
It is thus natural to test the latter for structural stability, as suggested by \cite{Berkes2009}.
They construct a CUSUM test based on the partial sums which form the empirical autocovariance estimator at fixed lag, and derive limit theorems under the assumption of stationarity.
A method to detect changes without fixing the lag is mentioned by \cite[Example 3]{Steland2019}.
The case of a nonparametric mean function is investigated by \cite{Li2013}, and multiple change-points are studied by \cite{Preuss2015}.
The non-stationary case is investigated by \cite{Killick2013}, although without a rigorous analysis of the type I error.

Alternatively, in the same univariate setting, \cite{Dette2018} test whether the autocorrelation $\Cor(X_{t,n}, X_{t-h,n})$ remains constant, for some fixed $h>0$.
This problem is more involved, since it requires standardization by the marginal variances, which are an additional nuisance quantity.
They allow the marginal variance to be non-constant and estimate it non-parametrically, in order to standardize the observations.
Furthermore, \cite{Dette2018} study a nonlinear, locally stationary specification of the underlying time series.
Hence, they account for potential nonstationarity under the null hypothesis.

We may formulate the problem to test for constant autocorrelations in our general framework.
To this end, we set $Y_{t,n} = G_n(\frac{t}{n},\beps_t) = (X_{t,n}, X_{t-h,n}, X_{t,n}^2, X_{t-h,n}^2, X_{t,n} X_{t-h,n})$ for $t=1,\ldots, n$, assuming for simplicity that $X_{r,n} = \tilde{G}_n(0,\beps_r)$ for $r\leq 0$.
Now set $f:\R^5\to\R, x \mapsto (x_5 - x_1x_2)/\sqrt{ (x_3-x_1^2)(x_4-x_2^2)}$, so that $\Cor(X_{t,n}, X_{t-h,n}) = f(\E Y_{t,n}) = f(\mu^n_{t/n})$.
Again, assumptions \eqref{eqn:GnG}-\eqref{eqn:ergodic} are a direct consequence of the corresponding properties of $X_{t,n}$.
The function $f$ is bounded on any compact set $K\subset \{ x\in\R^5 : (x_3-x_1^2)>0, (x_4-x_2^2)>0  \}$.
Thus, if $\Var(X_{t,n}) >2\delta>0$, then \eqref{eqn:ass-f} holds for $\delta>0$ as well.
We may thus construct $M_n(u)$ based on this time series $Y_{t,n}$ and function $f$ to obtain an estimator for the integrated autocorrelation $F_n(u)=\int_0^u \Cor(G_n(v, \beps_0), G_n(v, \beps_{-h}))\, dv$.
The corresponding CUSUM statistic satisfies \eqref{eqn:cusum}.

The resulting CUSUM statistic is similar to the statistic suggested by \cite{Dette2018}.
However, our framwork allows for many potential choices of $\hat{\mu}_{t,n}$, while \cite{Dette2018} only consider a special case.
Moreover, our assumptions regarding the regularity of $G_n$ are weaker.

\subsection{Further examples}\label{sec:kurtosis}

The kurtosis of a random variable $X$ is defined as $\Kurt(X) = \E( X-\E X)^4 / \Var(X)^2$. 
For a univariate time series $X_{t,n}$, let $Y_{t,n} = (X_t, X_t^2, X_t^3, X_t^4)$. 
Then $\Kurt(X_{t,n})$ can be written as a function of $\E(Y_{t,n})$. 
In particular, 
\begin{align*}
	\text{Kurt}(X_{t,n}) = f(\E Y_{t,n}) = \frac{\E X_{t,n}^4 - 4\E X_{t,n} \E X_{t,n}^3 + 6(\E X_{t,n}^2)^2 - 3(\E X_{t,n})^4}{\E X_{t,n}^2 - (\E X_{t,n})^2}
\end{align*}
If $\Var(X_{t,n})>c>0$, then $f$ satisfies \eqref{eqn:ass-f} so that our results are applicable, and the CUSUM statistic \eqref{eqn:cusum} in combination with the bootstrap procedure yields a feasible change-point test.
To the best of our knowledge, the proposed method is the first test for structural stability of the marginal kurtosis.

In a similar way, we may consider the skewness $\text{Skew}(X) = \E (X-\E X)^3 / \Var(X)^\frac{3}{2}$ of a random variable provided that the variance of $X$ is bounded away from zero, or the coefficient of variation $\text{CV}(X) = \sqrt{\Var(X)} / \E(X)$ if the expectation of $X$ is bounded away from zero.
A further example which may be cast in our framework are time-varying autoregressive models, as presented in example \ref{ex:VAR}, where the coefficients may be identified in terms of finitely many autocovariances by means of the Yule-Walker equations.
Our methodology could thus be used to test for changes in the second autoregressive component of a univariate tvAR model.

In the supplement, we also discuss change point tests for the marginal variance (Section \crossref{sec:variance}{B.1}) and for the coefficients of a linear regression model (Section \crossref{sec:regression}{B.2}).

\section{Finite sample performance}\label{sec:MC}

To assess the finite sample performance of our proposed change-point test, we evaluate its size and power properties via simulations.
We consider the locally stationary autoregressive process $X_{t,n}$ given by 
\begin{align}
	X_{t,n} = a(\tfrac{t}{n}) X_{t-1,n} + \sigma(\tfrac{t}{n}) \eta_t(\tfrac{t}{n}). \label{eqn:X-mc}
\end{align}
The innovations $\eta_t(\frac{t}{n})$ are chosen as independent, zero-mean random variables having a symmetrized Gamma distribution with shape parameter $\alpha(\frac{t}{n})$, standardized to unit variance. 
We use
\begin{align*}
	\sigma(u) = 0.5 + |\sin(2\pi u)|, \qquad \alpha(u) = \begin{cases}
		1,& u\leq 0.7,\\ 2,& u> 0.7,
	\end{cases}
\end{align*}
and for $a(u)$, either of the three functions 
\begin{align*}
	a_0(u) = 0.2,\quad a_1(u) = 0.2+\tfrac{u}{2},\quad a_2(u) = 0.2+\tfrac{u}{10}.
\end{align*}
We want to test for stability of the lag-1 autocorrelation, which is equivalent to the stability of the autoregressive coefficient $a$.
To this end, we apply the change-point test presented in Section \ref{sec:autocorrelation} , in combination with the local estimator $\hat{\mu}_{t,n}^{NW}$.

As described in Remark \ref{rem:Ln}, the lag parameter $L$ should be chosen just big enough such that the functional dependence measure at lag $L$ is negligible.
We choose $L_n = c \log(n)^2$ and analyze the effect of the factor $c$ below.
Once $L_n$ is specified, the smoothing bandwidth $k=k_n$ may be determined via cross-validation, by minimizing the prediction error
\begin{align*}
	\Lambda(k)=\sum_{t=1}^{n-L} \|\hat{\mu}^{NW,k}_{t,n} - X_{t+L,n}\|^2,
\end{align*}
where $\hat{\mu}^{NW,k}_{t,n}$ denotes the local average with bandwidth $k$.
In our simulations, we consider bandwidths from the interval $k\in [n^{0.35}, n^{0.75}]$.
Then, a natural choice for the offset is $\tau_n=k_n$.
Finally, we choose the window size for the bootstrap procedure as $b_n=L_n$.
Thus, $L_n$ is the only parameter to be chosen manually.
While a data-driven choice of the lag parameter $L_n$ is desirable, deriving a corresponding method is out of scope of this article. 

To find the critical value of the CUSUM test, for each individual sample, we resample $M=10^3$ independent realizations based on the bootstrap approximation of Theorem \ref{thm:bootstrap}. 
Equivalently, we may use the bootstrap samples to compute an approximate p-value for the CUSUM test statistic. 
When assessing the size of the CUSUM test, we set $a=a_0$, and for the power analysis, we set $a=a_1$ respectively $a=a_2$.

\begin{table}[bt]
	
	\centering
	\footnotesize
	\begin{tabular}{r|ccc|ccc|ccc|ccc|ccc|}
		   & \multicolumn{3}{c|}{$L_n = \lceil\log(n)^2\rceil$} & \multicolumn{3}{c|}{$L_n = \lceil\tfrac{1}{2}\log(n)^2\rceil$} & \multicolumn{3}{c|}{$L_n =\lceil\tfrac{1}{5} \log(n)^2\rceil$} & \multicolumn{3}{c|}{$L_n = \lceil\tfrac{1}{10}\log(n)^2\rceil$} & \multicolumn{3}{c|}{DWZ} \\ 
		 & $H_0$& $H_1$& $H_2$& $H_0$& $H_1$& $H_2$& $H_0$& $H_1$& $H_2$& $H_0$& $H_1$& $H_2$& $H_0$& $H_1$& $H_2$\\ \midrule \midrule  
		$n=100$ & 0.26 & 0.19 & 0.28 & 0.04 & 0.03 & 0.05 & 0.04 & 0.03 & 0.04 & 0.05 & 0.04 & 0.04 & 0.50 & 0.54 & 0.49 \\ 
		500 & 0.02 & 0.03 & 0.02 & 0.02 & 0.13 & 0.03 & 0.04 & 0.37 & 0.05 & 0.05 & 0.49 & 0.07 & 0.18 & 0.67 & 0.22 \\ 
		1000 & 0.01 & 0.18 & 0.01 & 0.02 & 0.55 & 0.04 & 0.04 & 0.81 & 0.09 & 0.06 & 0.89 & 0.11 & 0.13 & 0.91 & 0.21 \\ 
		5000 & 0.02 & 1.00 & 0.20 & 0.04 & 1.00 & 0.33 & 0.07 & 1.00 & 0.42 & 0.08 & 1.00 & 0.46 & 0.07 & 1.00 & 0.47 \\ 
		10000 & 0.03 & 1.00 & 0.56 & 0.06 & 1.00 & 0.68 & 0.08 & 1.00 & 0.74 & 0.09 & 1.00 & 0.76 & 0.07 & 1.00 & 0.78 \\   \bottomrule
	\end{tabular}
	\caption{Size and power of the bootstrap-based CUSUM test for constant autocorrelation, with nominal level $10\%$. Reported values are based on $5000$ independent samples of the test statistic.}
	\label{tab:MC-ar}
\end{table}

\begin{figure}[tb]
	\centering
	\includegraphics[width =  \textwidth]{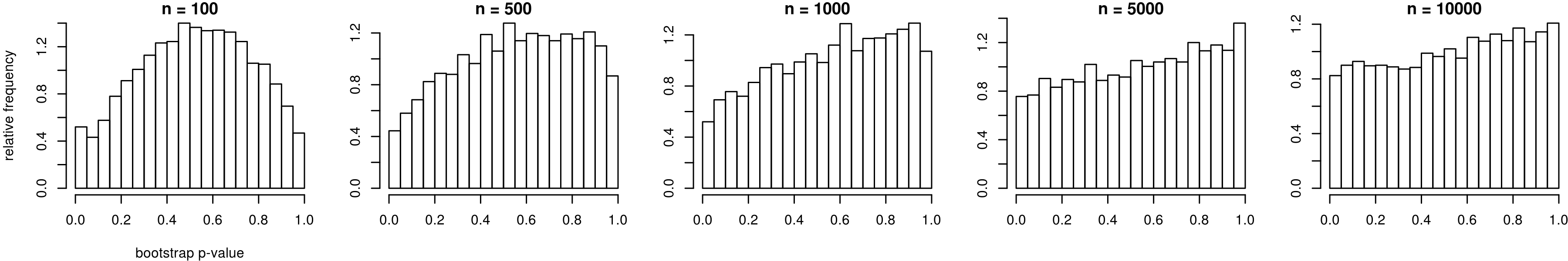}
	\caption{Distribution of bootstrap-based p-values for the CUSUM test for constant autocorrelation, under the null hypothesis, with lag parameter $L_n=\frac{1}{10}\log(n)^2$. The p-values are computed based on $M=10^3$ bootstrap samples, and the histograms are based on $10^4$ independent samples of the test statistic.}
	\label{fig:MC-ar}
\end{figure}

Table \ref{tab:MC-ar} presents the size and power of the proposed test in the presented example, for various values of $n$ and different choices $L_n$. 
The power values correspond to the alternatives $H_1$ based on $a_1$, and $H_2$ based on $a_2$.
We find that our test is rather conservative, i.e.\ the type-I error is actually smaller than the nominal level, and this conservativeness vanishes asymptotically as $n$ increases.
In particular, our test does not falsely detect a structural break even though the nuisance parameter is non-constant.
On the other hand, the method consistently detects deviations from the null hypothesis, as demonstrated by the increasing power against the alternative.
Moreover, it is found that the smallest lag value $L_n = \lceil\frac{1}{10} \log(n)^2\rceil$ yields the best size approximation.
Note that the differences in power for various choices of $L_n$ may be partially explained by the different test sizes. 
For the latter choice of $L_n$, we also depict the distribution of the simulated p-values in Figure \ref{fig:MC-ar}. 
The p-values should ideally be uniformly distributed.
Indeed, for large sample size $n$, the accuracy of the p-values increases, in line with our theoretical results. 

For comparison, we also implement the change point test for constant lag-1 autocorrelation proposed by Dette, Wu, and Zhou \citep{Dette2018}.
The corresponding size and power are also presented in Table \ref{tab:MC-ar}, labeled as DWZ.
In small samples, the latter test achieves a higher power, which may be explained by a correspondingly higher rate of false positives.
For large sample sizes, the power is similar to our proposed test, showing that our broadly applicable method is competitive against the specialized test of \cite{Dette2018}.

\begin{table}[bt]
	\centering
	
	\small
	\begin{tabular}{cc|cc|cc|cc|cc|cc|}
		&& \multicolumn{2}{c|}{$n = 100$} & \multicolumn{2}{c|}{$n = 500$} & \multicolumn{2}{c|}{$n = 1000$} & \multicolumn{2}{c|}{$n = 5000$} & \multicolumn{2}{c|}{$n = 10000$} \\ 
		&& $M_n$ & $\tilde{M}_n$ & $M_n$ & $\tilde{M}_n$ & $M_n$ & $\tilde{M}_n$ & $M_n$ & $\tilde{M}_n$ & $M_n$ & $\tilde{M}_n$ \\ \midrule \midrule  
		\multirow{2}{*}{$H_0$} &error & 0.172 & 0.120 & 0.053 & 0.058 & 0.031 & 0.039 & 0.012 & 0.018 & 0.008 & 0.013 \\ 
		&bias  & 0.071 & -0.114 & 0.022 & -0.054 & 0.011 & -0.036 & -0.001 & -0.016 & -0.002 & -0.012 \\ \midrule
		\multirow{2}{*}{$H_1$} &error & 0.242 & 0.197 & 0.061 & 0.093 & 0.035 & 0.066 & 0.011 & 0.031 & 0.008 & 0.023 \\ 
		&bias  & 0.163 & -0.197 & 0.039 & -0.093 & 0.020 & -0.066 & 0.002 & -0.031 & -0.001 & -0.023 \\ \midrule
		\multirow{2}{*}{$H_2$} &error & 0.180 & 0.134 & 0.053 & 0.062 & 0.032 & 0.044 & 0.012 & 0.020 & 0.008 & 0.014 \\ 
		&bias  & 0.084 & -0.131 & 0.025 & -0.060 & 0.012 & -0.043 & -0.000 & -0.019 & -0.002 & -0.014 \\ \bottomrule						
	\end{tabular}
	\caption{Mean absolute error and bias of $M_n(1)$ and $\tilde{M}_n(1)$ as estimators of $\int_0^u a_i(u)\, du$, $i=0,1,2$, with $L_n = \frac{1}{10}\log(n)^2$ for the linearized estimator. All values are based on $5000$ simulations.}
	\label{tab:MC-ar-est}
\end{table}

We also assess the quality of $M_n(1)$ as an estimator of $\int_0^1 a_i(u)\, du$, $i=0,1,2$, in comparison to the plug-in estimator $\tilde{M}_n(1)$. 
Table \ref{tab:MC-ar-est} presents the mean absolute errors of both estimators, as well as their corresponding bias.
Except for the smallest sample size $n=100$, the proposed linearized estimator performs better than the simple plug-in estimator.
In particular, the linearization greatly decreases the bias of the estimator. 

We also assess the finite sample performance of a change point test for the coefficients of a linear regression model. 
The simulation results are presented in Section \crossref{sec:sim-ols}{B.3} of the supplement.

\section{Empirical illustration}\label{sec:empirical}

To demonstrate the use of our results in practice, we study an application to high-frequency financial data.
In particular, we study the price $p_t$ of the german mid-cap stock-index MDAX on April 4, 2016, from 9:00-15:30, at a sampling frequency of $1$ second.
The data is available as a free sample from the data shop of Deutsche Börse, and part of the supplementary material of this article.
We study the log-returns $d_t = \log(p_{t})-\log(p_{t-1})$, $t=1,\ldots, n$, with sample size $n=23400$.

While many models for asset prices imply uncorrelated returns, empirical research suggests that autocorrelation may be non-zero, especially at high sampling frequencies \citep{Hansen2006}.
This dependence structure is typically attributed to the microstructure of the market, e.g.\ rounding effects or bid-ask rebounds. 
Recently, \cite{Andersen2016} studied the intraday returns of the NASDAQ100 stocks and found evidence for autocorrelation which is not only non-zero, but also non-constant.

Using the framework laid out in Section \ref{sec:autocorrelation} above, we may rigorously perform asymptotic inference for the local autocorrelation $\Cor(d_t, d_{t-1})$.
To this end, we use the local estimator $\hat{\mu}_{t,n}^{NW}$ and choose its bandwidth via cross-validation as in Section \ref{sec:MC}, with $L_n = \lceil \log(n)^2/10\rceil$.

\begin{figure}[tb]
	\centering
	\includegraphics[width=0.8\textwidth]{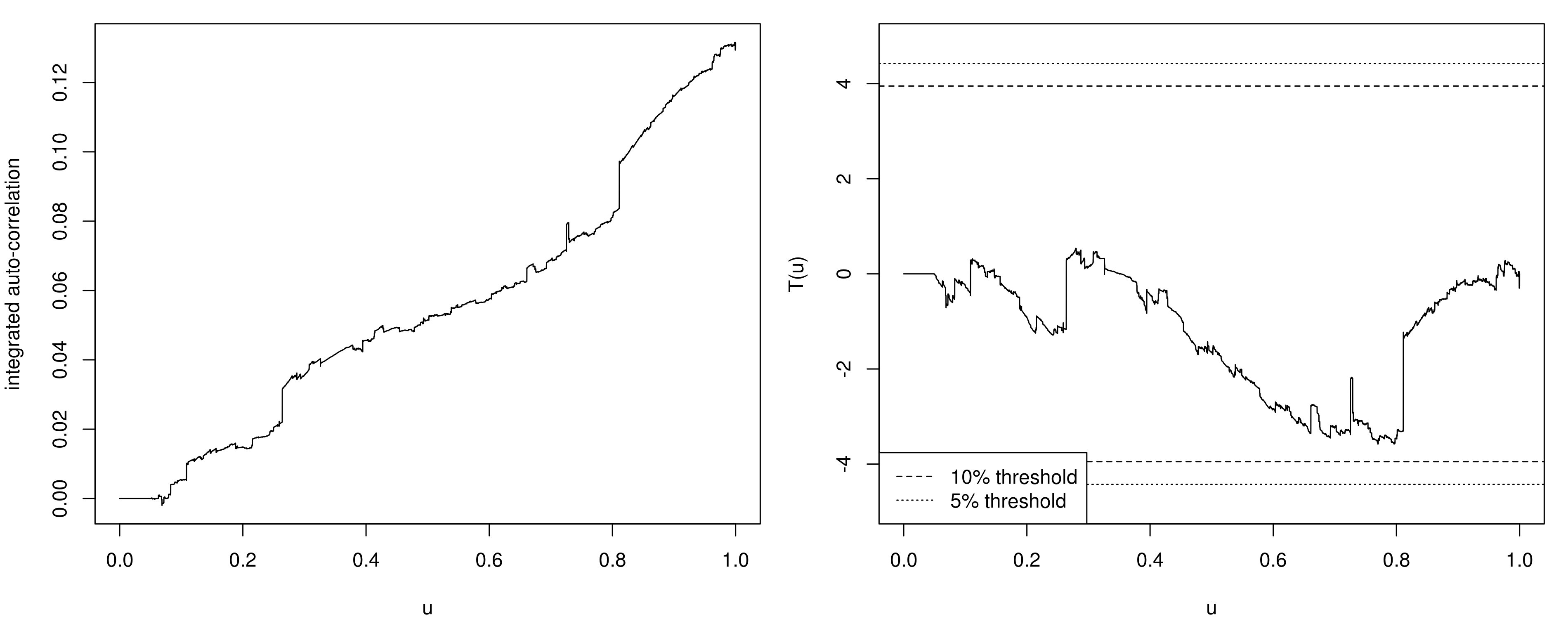}
	\caption{Left: the estimator $M_n(u)$ of the integrated autocorrelation. Right: the corresponding CUSUM process $\bar{T}_n(u)$. The critical thresholds are based on the bootstrap approximation with $10^4$ bootstrap samples.}
	\label{fig:MDAX-autocor}
\end{figure}

The functional estimator $M_n(u)$ of the lag-1 autocorrelation is depicted in Figure \ref{fig:MDAX-autocor} (left).
First, we observe that $M_n(u)$ is roughly increasing, which indicates that the lag-1 autocorrelation is positive on average.
Indeed, the average lag-1 autocorrelation is estimated as $M_n(1) = 0.1314$, with asymptotic standard deviation $\sqrt{Q_n(1)} = 0.0209$.
Moreover, visual inspection of $M_n(u)$ suggests that the slope is varying, which corresponds to a non-constant autocorrelation.
To test this hypothesis rigorously, we perform the CUSUM test suggested in Section \ref{sec:cp}.
The right panel of Figure \ref{fig:MDAX-autocor} shows the CUSUM process $T_n(u)$, and the critical thresholds for a significance level of $10\%$ and $5\%$, respectively. 
The critical values are obtained using the bootstrap approximation of Theorem \ref{thm:bootstrap}, with $10^4$ bootstrap samples. 
The bootstrap-based p-value of the CUSUM test statistic is $0.167$, based on $10^4$ bootstrap samples.
Note that the sample size $n$ is rather large, and the simulation results of Section \ref{sec:MC} suggest that the bootstrap approximation is satisfactory for this regime.
Hence, we find that the variation of the lag-1 autocorrelation is not significant.
That is, based on the bootstrapped CUSUM test, we may not reject the null hypothesis of constant lag-1 autocorrelations for this particular dataset at a significance level of $5\%$.

\begin{figure}[tb]
	\centering
	\includegraphics[width=0.8\textwidth]{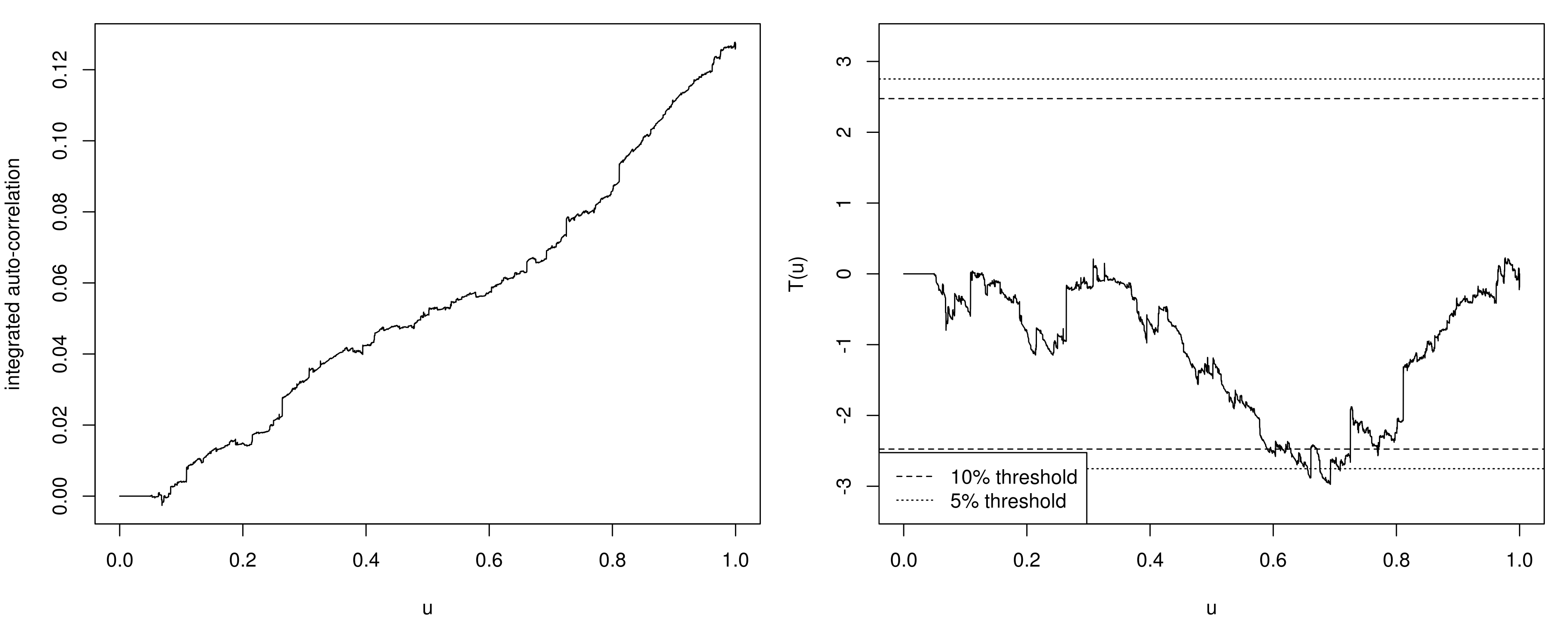}
	\caption{Left: the estimator $M_n(u)$ of the integrated autocorrelation fo the transformed log returns $\tilde{d}_t$. Right: the corresponding CUSUM process $\bar{T}_n(u)$. The critical thresholds are based on the bootstrap approximation with $10^4$ bootstrap samples.}
	\label{fig:MDAX-autocor-bounded}
\end{figure}

The visible discontinuities of the path of $M_n(u)$ in Figure \ref{fig:MDAX-autocor}, suggest that the estimator is influenced by few very large price changes $d_t$.
While our bootstrap procedure automatically accounts for this, the resulting large variance decreases the power of the change point test.
To reduce the effect of the heavy tails of the log returns, we repeat our analysis for the transformed increments $\tilde{d}_t = \arctan(d_t /\gamma )$.
We choose $\gamma=10^{-4}$, which corresponds to the average size of $d_t$ and leads to a unimodal distribution of transformed returns (not depicted).
The estimator $M_n(u)$ of the lag-1 autocorrelation of the transformed returns $\tilde{d}_t$ is depicted in Figure \ref{fig:MDAX-autocor-bounded}, as well as the corresponding CUSUM process.
The p-value of the CUSUM test is $0.027$, hence the hypothesis of constant lag-1 autocorrelation is rejected for the series $\tilde{d}_t$ at a significance level of $5\%$.
Although the autocorrelations of $d_t$ and $\tilde{d}_t$ are not the same parameters, both may be interpreted similarly in the present application.
In particular, our findings support the claim of \cite{Andersen2016} that the serial correlation of intraday log returns is non-constant for the present data set.
In contrast to \cite{Andersen2016}, our change point test does not assess whether the autocorrelation changes its sign.

We also analyze the variance, mean, and kurtosis of $d_t$ and $\tilde{d}_t$ as outlined in Sections \crossref{sec:variance}{B.1} and \ref{sec:kurtosis}.
For $d_t$, the CUSUM tests for constant variance, mean, and kurtosis yield the p-values $0.081$, $0.020$, and $0.285$, respectively.
For $\tilde{d}_t$, the respective p-values are $0.000$, $0.0484$, and $0.000$, respectively.
In combination with our statistical results on the autocorrelation, this provides strong evidence for nonstationarity of $\tilde{d}_t$ and $d_t$.

Many models for asset returns at very high frequencies describe the observed price as the sum of two latent components: the fundamental price $p_t^*$, and the so-called microstructure noise $\eta_t$, such that $p_t = p_t^*+\eta_t$.
The fundamental price is typically modeled as a semimartingale, and inference for this component needs to account for the microstructure effects, see e.g.\ \cite{jacod2009preaveraging}.
The microstructure noise is typically assumed to be independent, or dependent but stationary \citep{Hansen2006, ait2011ultra}.
Nonstationary dependent noise is considered by \cite{jacod2017statistical}, but such that the autocorrelation of the microstructure is constant.
The approach  we pursue in the present paper does not distinguish between the fundamental price and the microstructure effects.
Nevertheless, our empirical findings may motivate the investigation of microstructure models which allow for a nonstationary dependence structure.

\newpage
\appendix
\section{Further remarks}\label{sec:discussion}

\subsection{Alternative definition of local stationarity}\label{sec:dahlhaus}
	The model described in Section \ref{sec:model} may also be compared to the definition of locally-stationary processes introduced by \cite{Dahlhaus2017}.
	They require that for each $u\in[0,1]$, there exists a stationary process $X_t(u)$ such that 
	\begin{enumerate}[(i)]
		\item $\|X_{t,n} - X_t(\frac{t}{n})\|_{L_q} \leq C n^{-\alpha}$, and
		\item $\|X_t(u) - X_t(v)\|_{L_q}\leq C|u-v|^{\alpha}$, for some $C>0$ and $\alpha\in(0,1]$. 
	\end{enumerate}
	If the time series $X_{t,n}$ and $X_t(u)$, $u\in[0,1]$, are $\beps_t$-measurable, we may represent them as $X_{t}(u) = G(u,\beps_t)$, and $X_{t,n} = G_{n}(\frac{t}{n},\beps_t)$, for kernels $G$ and $G_n$ which are measurable w.r.t.\ $\beps_t$.
	Without loss of generality, we may suppose that $u\mapsto G_n(u,\beps)$ is a left-continuous, piecewise constant mapping with finitely many break points at $t/n$, for $t=1,\ldots, n$. 
	Then \begin{align*}
		\|G_n\|_{p-var} &= \left(\sum_{t=2}^n \|G_n(\tfrac{t}{n},\beps_t) - G_n(\tfrac{t-1}{n}, \beps_0)\|^p_{L_q}\right)^\frac{1}{p} \\
		&\leq 2\left(\sum_{t=1}^{n} \|G_n(\tfrac{t}{n},\beps_0) - G(\tfrac{t}{n},\beps_0)\|_{L_q}^p\right)^\frac{1}{p} + \|G\|_{p-var}.
	\end{align*}
	Conditions (i) and (ii) imply that $\|G_n\|_{p-var}$ is bounded for any $p\geq\frac{1}{\alpha}$.
	Hence, provided that the time series is $\beps_t$ measurable, our framework contains the model of \cite{Dahlhaus2017} as a special case.
	For instance, we may describe temporally varying auto-regressive processes, as described in Example \ref{ex:VAR}.

\subsection{Rates of convergence of the pilot estimator $\hat{\mu}_t^n$}\label{sec:rates}

Property \eqref{eqn:ass-mu-1} is a requirement on the rate of convergence of $\hat{\mu}_{t,n}$,  in a form routinely studied in nonparametric statistics, see e.g.\ \cite[Ch.\ 9]{VandeGeer2010}.
The achievable rate of convergence in nonparametric regression depends in particular on the smoothness of the function $u\mapsto\mu^n_u$. 
In our model, the only regularity assumption on $\mu^n_u$ is \eqref{eqn:pvar-finite}, which implies that $\|\mu^n\|_{p-var}< \infty$. 
Thus, empirical process theory suggests an achievable rate of convergence of $\mathcal{O}(n^\frac{p}{2+p})$ for $p\in[1,2)$.
This rate can be derived by combining general results for least squares regression with subgaussian errors \citep[Thm.\ 9.1]{VandeGeer2010} with entropy bounds for the class of functions of bounded p-variation \citep[Cor.\ 3.7.50]{Gine2016}.
Note that this rate is in line with the requirement \eqref{eqn:ass-mu-1}  if $p\in[1,2)$. 
Furthermore, we point out that this rate of convergence matches the rate under the assumption of Hölder continuity with exponent $\beta = \frac{1}{p}$. 
While the Hölder-continuous case may be treated via local smoothing, the generic approach to achieve this rate of convergence under the assumption of finite $p$-variation is via empirical risk minimization, which in particular uses the whole sample.  
However, we additionally require that $\hat{\mu}_{t,n}$ is $\beps_{t}$-measurable, rendering this approach infeasible.
While it might be possible to construct a suitable online estimator which achieves the desired rate of convergence for functions of bounded p-variation, this is out of the scope of this article.
See for example \cite{baby2019online} and \cite{raj2020non} for the case $p=1$ based on iid observations. 

Another alternative is to formulate a parametric estimator of $\mu^n_u$, which will typically satisfy \eqref{eqn:ass-mu-1} with the stronger rate $\sum_{t=1}^n \|\hat{\mu}_{t,n} - \mu^n_{t/n}\|^2=\mathcal{O}_P(1)$.  
For example, this approach is feasible if $\mu^n_u$ is piecewise constant with a single breakpoint.
However, in order to apply a parametric estimator, additional assumptions need to be imposed on the underlying time series.
If these parametric assumptions do not hold, any inference based on the corresponding asymptotic results will be flawed. 
This encourages the use of nonparametric estimators for the local moment function $\mu^n_u$.
	
\section{More examples of change point problems}\label{sec:supp-CP}

\subsection{Changes in variance}\label{sec:variance}

For a univariate time series $X_{t,n}$, various researches have designed tests for constancy of the variances $\Var(X_{t,n})$, starting with the investigation of asset returns by \cite{Hsu1974} and the corresponding methodology of \cite{Wichern1976}.
A test based on cumulative sums of squared observations has been suggested by \cite{Inclan1994}, and \cite{Chen1997} suggest an alternative procedure based on the Schwarz information criterion.
The approach of \cite{Inclan1994} has been generalized by \cite{Lee2001} to dependent processes, and by \cite{Aue2009} to the multivariate case.
A common shortfall of many procedures is that they require constancy of the mean $\E (X_{t,n})$, which is a nuisance parameter in the present situation. 
More recent work studies the case of non-constant mean by using a suitable nonparametric estimator thereof, see \cite{Gao2019} and \cite{Schmidt2020}.
Note that the latter references still require stationarity of all remaining nuisance quantities, such as autocorrelations and higher order moments.

An alternative test for homoscedasticity can be formulated using our results.
Based on the univariate time series $X_{t,n} = \tilde{G}_n(\frac{t}{n},\beps_t)$, we define the vector valued time series $Y_{t,n}=G_n(\frac{t}{n}, \beps_t) = (X_{t,n}, X_{t,n}^2)$ with mean $\mu^n_u = \E G_n(u, \beps_0)$.
Then $\Var(X_{t,n}) = \E X_{t,n}^2 - (\E X_{t,n})^2 = f(\mu^n_{t/n})$, for $f:\R^2\to \R, f(a,b) = b-a^2$.
It is straight-forward to check that if $\tilde{G}_n$ satisfies assumptions \eqref{eqn:GnG}-\eqref{eqn:ergodic} for $q>8$, then $G_n$ satisfies the assumptions for $\frac{q}{2}>4$.
Moreover, $f$ and all its derivatives are bounded on compacts, such that \eqref{eqn:ass-f} holds if $\mu_u =\lim_{n\to\infty} \mu^n_u $ is bounded. 
Thus, the asymptotic result \eqref{eqn:cusum} holds and yields a test for the hypothesis $H_0: \Var(X_{t,n}) \equiv \text{const}$, allowing for non-stationary mean under very mild regularity conditions on dependency structure of $X_{t,n}$.

\subsection{Changes in regression coefficients}\label{sec:regression}

Suppose we observe a univariate time series $Z_{t,n}$ and a $d$-dimensional vector time series $X_{t,n}$, such that $Z_{t,n} = \beta_{t,n}^T X_{t,n} + \eta_{t,n}$. 
The noise values $\eta_{t,n}$ satisfy $\E(\eta_{t,n})=0$ and $\Cov(X_{t,n}, \eta_{t,n})=0$.
We want to test whether $\beta_{t,n}$ is constant, or varies with $t$. 
This problem has been studied, among others, by \cite{Horvath1995}, \cite{Horvath2004}, and \cite{Aue2008} for uncorrelated innovations $\eta_{t,n}$.
Correlated regression errors are treated, for example, by \cite{Robbins2016}.
In this setting, the hypothesis $H_0: \beta_{t,n} \equiv \beta_{1,n}$ may be formulated in the form \eqref{eqn:H0-cp}, since 
\begin{align*}
	\beta_{t,n} = \Cov(X_{t,n})^{-1} \Cov(X_{t,n}, Z_{t,n}),
\end{align*}
where $\Cov(X_{t,n})\in\R^{d\times d}$ and $\Cov(X_{t,n}, Z_{t,n}) \in\R^{d\times 1}$.
Thus, $\beta_{t,n} = f(\E Y_{t,n})$, for 
\begin{align*}
	Y_{t,n} = (X_{t,n}, Z_{t,n}, X_{t,n} X_{t,n}^T, X_{t,n} Z_{t,n}) \in \R^{2d+1+d^2},
\end{align*}
where the matrix $X_{t,n} X_{t,n}^T$ is interpreted as a $d^2$-dimensional vector.
Assumptions \eqref{eqn:GnG}-\eqref{eqn:ergodic} are a direct consequence of the corresponding properties of $X_{t,n}$ and $\eta_{t,n}$.
Moreover, $f$ satisfies \eqref{eqn:ass-f} if we ensure that the smallest eigenvalue of $\Cov(X_{t,n})$ admits a uniform lower bound.

Note that $\beta_{t,n}$ is a $d$-dimensional parameter. 
It is straight-forward to extend the CUSUM statistic $T_n^*$ to the multivariate setting, e.g.\ by replacing the absolute value by an arbitrary vector norm.
Alternatively, one might test for changes in a single coordinate of $\beta_{t,n}$.

\subsection{Further simulation results}\label{sec:sim-ols}

To further assess the finite sample performance of our proposed procedure, we study the regression model
\begin{align*}
	Z_{t,n} &= \beta(\tfrac{t}{n})^T W_{t,n} + X_{t,n},
\end{align*}
for regression coefficients $\beta(u)\in\R^2$, and noise process $X_{t,n}$ as in equation \eqref{eqn:X-mc} of the article, using $\sigma(u)$ and $\alpha(u)$ as specified, and autoregression coefficient $a(u) = a_1(u)$.
The cofactors $W_{t,n} \sim \mathcal{N}(0, \Sigma(\frac{t}{n}))$ are independent, bivariate normal random vectors, independent of $X_{t,n}$, with covariance matrix
\begin{align*}
	\Sigma(u) = A(u)^T A(u),\quad \text{for}\quad
	A(u) = \begin{pmatrix}
		1 & 2+|\sin(2\pi u)| \\ 0 & 1
	\end{pmatrix}^2.
\end{align*}
Here, we want to test for structural stability of the first regression coefficient $\beta(u)_1$, and we apply the CUSUM test as outlined in Section \ref{sec:regression}.
For our simulations, we employ the model with regression coefficients $\beta(u) = \beta^0(u) \equiv (1,2)^T$ as null hypothesis, and $\beta(u) = \beta^1(u) = (1+u, 2+u^2)^T$, $\beta^2(u) = (1+\frac{u}{3}, 2+\frac{u^2}{3})$ as alternatives, i.e.\ a gradual change.

\begin{figure}[tb]
	\includegraphics[width = \textwidth]{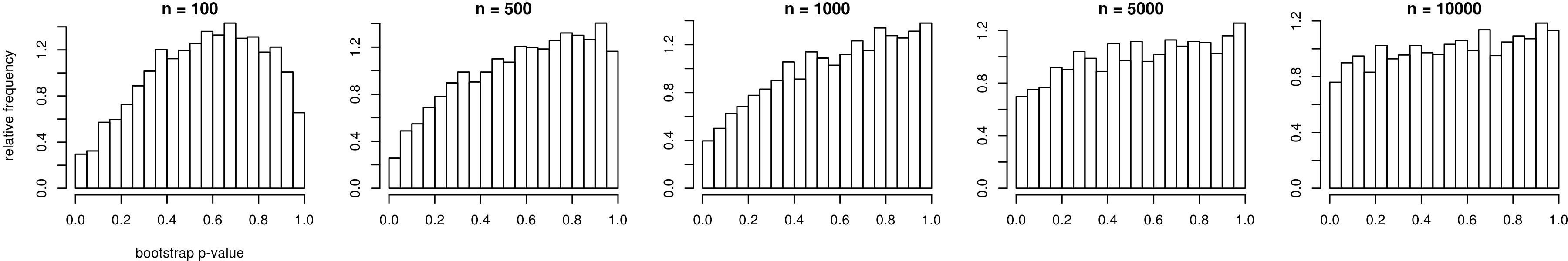}
	\caption{Distribution of bootstrap-based p-values for the CUSUM test for a constant regression coefficient, under the null hypothesis, with lag parameter $L_n=\frac{1}{10}\log(n)^2$. The p-values are computed based on $M=10^3$ bootstrap samples, and the histograms are based on $5000$ independent samples of the test statistic.}
	\label{fig:MC-ols}
\end{figure}

\begin{table}[bt]
	\centering
	\footnotesize
	\begin{tabular}{r|ccc|ccc|ccc|ccc|}
		& \multicolumn{3}{c|}{$L_n = \lceil\log(n)^2\rceil$} & \multicolumn{3}{c|}{$L_n = \lceil\tfrac{1}{2}\log(n)^2\rceil$} & \multicolumn{3}{c|}{$L_n =\lceil\tfrac{1}{5} \log(n)^2\rceil$} & \multicolumn{3}{c|}{$L_n = \lceil\tfrac{1}{10}\log(n)^2\rceil$}  \\ 
		& $H_0$& $H_1$& $H_2$ & $H_0$& $H_1$& $H_2$ & $H_0$& $H_1$& $H_2$ & $H_0$& $H_1$ & $H_2$ \\ \midrule \midrule  
		$n=100$ & 0.16 & 0.14 & 0.15 & 0.08 & 0.06 & 0.08 & 0.03 & 0.03 & 0.03 & 0.03 & 0.03 & 0.03 \\ 
		500     & 0.01 & 0.02 & 0.01 & 0.01 & 0.06 & 0.01 & 0.02 & 0.16 & 0.03 & 0.04 & 0.21 & 0.04 \\ 
		1000    & 0.01 & 0.10 & 0.01 & 0.02 & 0.28 & 0.04 & 0.04 & 0.41 & 0.07 & 0.04 & 0.45 & 0.09 \\ 
		5000    & 0.03 & 0.96 & 0.20 & 0.05 & 0.99 & 0.28 & 0.06 & 0.99 & 0.33 & 0.07 & 0.99 & 0.35 \\ 
		10000   & 0.05 & 1.00 & 0.46 & 0.06 & 1.00 & 0.53 & 0.07 & 1.00 & 0.58 & 0.08 & 1.00 & 0.58 \\   \bottomrule
	\end{tabular}
	\caption{Size and power of the bootstrap-based CUSUM test for a constant regression coefficient, with nominal level $10\%$. Reported values are based on $5000$ independent samples of the test statistic.}
	\label{tab:MC-ols}
\end{table}

The distribution of the bootstrap-based p-values under null hypothesis is depicted in Figure \ref{fig:MC-ols}.
Values of the test's size and power are presented in Table \ref{tab:MC-ols}, for different choices of the lag parameter $L_n$.
Just as for the autocorrelation, we find that our proposed test for the regression coefficient is conservative, and that the size approximation is better for smaller $L_n$.
As the sample size increases, the distribution of the p-values approaches the desired uniform distribution, and the size of the test tends towards the nominal level.
This demonstrates the robustness of our test against non-stationary nuisance parameters.
Furthermore, the CUSUM test consistently detects the nonstationarity of the regression coefficient as the sample size increases.

\section{Technical proofs}\label{sec:proofs}

\subsection{Equivalence of the physical dependence measure}
\begin{proposition}\label{prop:ergodic}
	Let $\beps_t, \beps^*_{t,j}, \tilde{\beps}_{t,j}$, be as in Section \ref{sec:model}.
	Let $G:\R^\infty\to\R^d$ be a measurable function, where we endow $\R^\infty$ with the $\sigma$-Algebra generated by all finite projections.
	If $\|G(\beps_0)\|_{L_q}<\infty$ for some $q\geq 1$, then
	\begin{align*}
		\left\| G(\beps_0) - G(\beps_{0,j}^*) \right\|_{L_q} 
		&\leq \sum_{s=0}^\infty \|G(\beps_0) - G(\tilde{\beps}_{0,j+s})\|_{L_q}.
	\end{align*} 
\end{proposition}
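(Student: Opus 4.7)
The plan is to view the full tail-replacement $\beps_0 \to \beps^*_{0,j}$ as the composition of countably many single-coordinate swaps, each of which is exactly what the functional dependence measure $\|G(\beps_0)-G(\tilde{\beps}_{0,j+s})\|_{L_q}$ quantifies. For $s\geq 0$ I would introduce the interpolating sequence
\begin{align*}
\beps^{(s)} = (\epsilon_0,\ldots,\epsilon_{-j+1}, \epsilon^*_{-j}, \epsilon^*_{-j-1},\ldots,\epsilon^*_{-j-s+1}, \epsilon_{-j-s}, \epsilon_{-j-s-1},\ldots),
\end{align*}
so that $\beps^{(0)}=\beps_0$ and $\beps^{(s)}\to\beps^*_{0,j}$ coordinatewise as $s\to\infty$. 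The two sequences $\beps^{(s)}$ and $\beps^{(s+1)}$ differ only at position $-j-s$, where one has $\epsilon_{-j-s}$ and the other its independent copy $\epsilon^*_{-j-s}$. Because all coordinates involved are iid, the joint law of $(\beps^{(s)}, \beps^{(s+1)})$ matches that of $(\beps_0, \tilde{\beps}_{0,j+s})$, whence
\begin{align*}
\|G(\beps^{(s)})-G(\beps^{(s+1)})\|_{L_q} = \|G(\beps_0)-G(\tilde{\beps}_{0,j+s})\|_{L_q}.
\end{align*}
The $L_q$ triangle inequality then yields
\begin{align*}
\|G(\beps_0)-G(\beps^{(S)})\|_{L_q}\leq \sum_{s=0}^{S-1}\|G(\beps_0)-G(\tilde{\beps}_{0,j+s})\|_{L_q}
\end{align*}
for every $S\geq 1$.

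The main obstacle is to pass to the limit $S\to\infty$ on the left-hand side, i.e.\ to establish $G(\beps^{(S)})\to G(\beps^*_{0,j})$ in $L_q$. If the claimed right-hand side is infinite the inequality is trivial, so I may assume it is finite; the telescoping bound then shows $(G(\beps^{(S)}))_S$ is Cauchy in $L_q$ and converges to some limit $Y$, and it remains to check $Y=G(\beps^*_{0,j})$ almost surely. To this end I would approximate $G$ by $G_k=\E[G(\beps_0)\mid\F_k]$ with $\F_k=\sigma(\epsilon_0,\ldots,\epsilon_{-k+1})$; by construction $G_k$ depends only on the first $k$ coordinates, and the $L_q$ martingale convergence theorem gives $G_k(\beps_0)\to G(\beps_0)$ in $L_q$. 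Since $\beps^{(S)}$ and $\beps^*_{0,j}$ share the law of $\beps_0$, the analogous bounds transfer to them. For fixed $k$ and any $S\geq k-j$ the first $k$ coordinates of $\beps^{(S)}$ and $\beps^*_{0,j}$ coincide, so $G_k(\beps^{(S)})=G_k(\beps^*_{0,j})$. A standard three-term split
\begin{align*}
\|G(\beps^{(S)})-G(\beps^*_{0,j})\|_{L_q}\leq \|G(\beps^{(S)})-G_k(\beps^{(S)})\|_{L_q}+\|G_k(\beps^*_{0,j})-G(\beps^*_{0,j})\|_{L_q}
\end{align*}
followed by letting first $S\to\infty$ and then $k\to\infty$ therefore identifies $Y=G(\beps^*_{0,j})$, and taking $S\to\infty$ in the triangle bound closes the argument.
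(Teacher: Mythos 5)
Your proof is correct, and it rests on the same two pillars as the paper's: a telescoping sum whose increments are reduced, via an iid equidistribution argument, to the single-coordinate quantities $\|G(\beps_0)-G(\tilde{\beps}_{0,j+s})\|_{L_q}$, and the $L_q$ martingale convergence theorem to control the endpoint of the telescope. The difference is the direction of the interpolation. The paper chains through the already-defined sequences $\beps^*_{0,j},\beps^*_{0,j+1},\ldots,\beps^*_{0,j+t}$, whose limit endpoint is $\beps_0$ itself; the endpoint is then dispatched in one line by noting that $G_t-G(\beps_0)\deq G_t-G(\beps^*_{0,t})$ for the martingale $G_t=\E(G(\beps_0)\mid \epsilon_0,\ldots,\epsilon_{-t+1})$. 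You instead build new hybrid sequences $\beps^{(S)}$ with a finite starred block and the original tail, chaining from $\beps_0$ toward $\beps^*_{0,j}$, and must then identify the $L_q$ limit of $G(\beps^{(S)})$ as $G(\beps^*_{0,j})$. That identification works, but note it needs one more explicit step than you give it: $G_k=\E[G(\beps_0)\mid\F_k]$ is a random variable, so to speak of $G_k(\beps^{(S)})$ you should first invoke the Doob--Dynkin lemma to write $G_k=g_k(\epsilon_0,\ldots,\epsilon_{-k+1})$ for a measurable $g_k:\R^k\to\R^d$, apply $g_k$ to the first $k$ coordinates of $\beps^{(S)}$ and of $\beps^*_{0,j}$, and use $\beps^{(S)}\deq\beps^*_{0,j}\deq\beps_0$ to transfer the bound $\|G(\beps_0)-g_k(\epsilon_0,\ldots,\epsilon_{-k+1})\|_{L_q}\to 0$ to both. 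With that made precise your three-term split closes the argument; the paper's orientation simply buys a slightly shorter endpoint step.
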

\begin{proof}[Proof of Proposition \ref{prop:ergodic}]
	Consider the martingale $G_t = \E(G(\beps_0)| \epsilon_{0},\ldots, \epsilon_{-t+1})$.
	Since $\|G(\beps_0)\|_{L_q}<\infty$, the martingale convergence theorem guarantees that that $\|G_t - G(\beps_0)\|_{L_q}\to 0$ as $t\to\infty$.
	Now note that $G_t-G(\beps_0) \deq G_t - G(\beps_{0,t}^*)$, such that we also obtain $\|G_t-G(\beps_{0,t}^*)\|_{L_q}\to 0$ as $t\to\infty$ and thus $\|G(\beps_{0,t}^*) - G(\beps_0)\|_{L_q}\to 0$ as $t\to\infty$.
	Hence, for any $t\in\N$, and any $j\in\N$,
	\begin{align*}
		\| G(\beps_0) - G(\beps_{0,j}^*) \|_{L_q} 
		&\leq \|G(\beps_0) - G(\beps_{0,j+t}^*)\|_{L_q} + \sum_{s=1}^t \|G(\beps_{0,j+s}^*) - G(\beps_{0,j+s-1}^*)\|_{L_q}.
	\end{align*}
	But $\|G(\beps_{0,j+s}^*) - G(\beps_{0,j+s-1}^*)\|_{L_q} = \|G(\beps_0) - G(\tilde{\beps}_{0,j+s-1})\|_{L_q}$.
	Thus, letting $t\to\infty$, we find that
	\begin{align*}
		\| G(\beps_0) - G(\beps_{0,j}^*) \|_{L_q} \leq \sum_{s=0}^\infty \|G(\beps_0) - G(\tilde{\beps}_{0,j+s})\|_{L_q}.
	\end{align*}
\end{proof}

Applying Proposition \eqref{prop:ergodic} under assumption \eqref{eqn:ergodic}, we find that
\begin{align*}
	\| G(\beps_t) - G(\beps_{t,j}^*) \|_{L_q} \leq \sum_{s=j}^\infty C_G \rho^j \leq \frac{C_G}{1-\rho} \rho^j. 
\end{align*}

\subsection{Asymptotics of partial sums}
\begin{proof}[Proof of Theorem \ref{thm:CLT-linear}]
We exploit the geometric decay of the dependence measure \eqref{eqn:ergodic} to employ a coupling construction.
To this end, we split the $n$ observations into $m=m_n\ll n$ blocks of length $\tilde{L}=\tilde{L}_n=\lfloor n/m\rfloor$ and let $m\to\infty$. 
Furthermore, let $r=r_n\ll \tilde{L}_n$ with $r_n\to\infty$ be a smaller block length.
It turns out that the rates $\tilde{L}_n = \lceil n^\epsilon\rceil, r_n = \lceil n^{\epsilon/2}\rceil$ are appropriate, for some $0 < \epsilon < \min(\frac{1}{2}-\frac{1}{q}, \frac{1}{p} )$.
We decompose
\begin{align}
	S_n(u)
	&= \frac{1}{\sqrt{n}}\sum_{t=1}^{\lfloor nu\rfloor } X_{t,n}-\E X_{t,n} \nonumber\\
	&= \frac{1}{\sqrt{n}} \sum_{j=1}^{\lfloor um \rfloor} \sum_{t=(j-1)\tilde{L}+r}^{j\tilde{L}} \left[ X_{t,n} - \E X_{t,n} \right] 
	+ \frac{1}{\sqrt{n}} \sum_{j=1}^{\lfloor um\rfloor} \sum_{t=(j-1)\tilde{L}+1}^{(j-1)\tilde{L}+r-1} \left[X_{t,n} - \E X_{t,n} \right] \nonumber\\
	&\quad + \frac{1}{\sqrt{n}} \sum_{t=\tilde{L}\lfloor um\rfloor+1 }^{\lfloor nu\rfloor}\left[ X_{t,n} - \E X_{t,n} \right] \nonumber\\
	&= \frac{1}{\sqrt{n}} \sum_{j=1}^{\lfloor um \rfloor} Y_{j,n}  
		+ \frac{1}{\sqrt{n}} \sum_{j=1}^{\lfloor um \rfloor} \tilde{Y}_{j,n} + R_n(u). \label{eqn:Sn-1}
\end{align}
Consider the remainder term $R_n(u)$. 
By assumptions \eqref{eqn:GnG} and \eqref{eqn:pvar-finite}, we have $\|X_{t,n}\|_{L_q}\leq 2C_G$ for all $n,t$.
The union bound and Markov's inequality yield, for any $a>0$,
\begin{align*}
	P\left(\sup_{u\in[0,1]} |R_n(u)| > a\right) 
	&\leq \sum_{k=\tilde{L}+1}^n  P\left( \frac{1}{\sqrt{n}} \sum_{t=k-\tilde{L}}^k |X_{t,n}-\E X_{t,n}| > a \right) \\
	&\leq \frac{n}{\sqrt{n}^{q}} \frac{(2C_G \tilde{L})^q}{a^q}.
\end{align*}
Since $q>2$, this term tends to zero for our choice of $\tilde{L}$, since $\tilde{L}=\tilde{L}_n = n^\epsilon$ for $\epsilon<\frac{1}{2} - \frac{1}{q}$.

In \eqref{eqn:Sn-1}, the random variables $Y_{j,n}$, $j=1,\ldots, m$, may be replaced by independent copies $Y^*_{j,n}$ as follows. 
For each $j=1,\ldots, m$, let $\tilde{\epsilon}_t^j$ be an independent copy of the $\epsilon_t, t\in\Z$.
Then 
\begin{align*}
	Y_{j,n} 
	&= \sum_{t=(j-1)\tilde{L}+r}^{j\tilde{L}} \left[G_n(\tfrac{t}{n}, \epsilon_t,\epsilon_{t-1},\ldots) - \E X_{t,n}\right] \\
	&= \sum_{t=(j-1)\tilde{L}+r}^{j\tilde{L}} \left[G_n(\tfrac{t}{n}, \epsilon_t,\ldots,\epsilon_{(j-1)\tilde{L}+1}, \tilde{\epsilon}^j_{(j-1)\tilde{L}},\tilde{\epsilon}^j_{(j-1)\tilde{L}-1}\ldots) - \E X_{t,n}\right] + \delta_{j,n} \\
	&= Y^*_{j,n} + \delta_{j,n},
\end{align*}
for a random variable $\delta_{j,n}$ with $\|\delta_{j,n}\|_{L_q} \leq C_G \tilde{L}\sum_{k=r}^\infty\rho^k \leq C\, \tilde{L} \rho^{r}$, by virtue of \eqref{eqn:ergodic}.
Note that the $Y^*_{j,n}$ are independent and satisfy $Y^*_{j,n}\sim Y_{j,n}$, for $j=1,\ldots, m$.
Analogously, we may construct independent copies $\tilde{Y}^*_{j,n}$ of $\tilde{Y}_{j,n}$, $j=1,\ldots, m$, such that $\|\tilde{Y}_{j,n} - \tilde{Y}^*_{j,n}\|_{L_q} = \|\tilde{\delta}_{j,n}\|_{L_q} \leq C\, \tilde{L} \rho^r$.
Hence,
\begin{align*}
	S_n (u)
	&= \frac{1}{\sqrt{n}} \sum_{j=1}^{\lfloor um\rfloor} Y^*_{j,n}  
	+ \frac{1}{\sqrt{n}} \sum_{j=1}^{\lfloor um\rfloor} \tilde{Y}^*_{j,n}
	+ \tilde{R}_n(u) + R_n(u).
\end{align*}
The remainder term $\tilde{R}_n(u)$ satisfies 
\begin{align*}
	\|\sup_{u\in[0,1]} \tilde{R}_n(u)\|_{L_q} \leq \frac{1}{\sqrt{n}} \sum_{j=1}^m \left(\|\delta_{j,n}\|_{L_q} + \|\tilde{\delta}_{j,n}\|_{L_q}\right) \leq C \, \frac{\tilde{L} m}{\sqrt{n}} \rho^r ,
\end{align*} 
which is asymptotically negligible since $r=r_n=n^{\epsilon/2}$ and $\epsilon>0$, and $\tilde{L}_n m_n \leq n$.

We now establish a central limit theorem for the term $S_n^*(u) = \tfrac{1}{\sqrt{n}} \sum_{j=1}^{\lfloor um\rfloor} Y^*_{j,n}$. 
This result will also apply to the term $\tfrac{1}{\sqrt{n}} \sum_{j=1}^m \tilde{Y}^*_{j,n}$ with a different rate of convergence, and we will be able to conclude that the latter is negligible.
To this end, consider
\begin{align*}
	\Var(Y_{j,n}^*) 
	&= \sum_{t,s=(j-1)\tilde{L}+r}^{j\tilde{L}} \Cov \left[ G_n(\tfrac{t}{n}, \beps_t), G_n(\tfrac{s}{n}, \beps_s) \right] \\
	&= \sum_{t,s=(j-1)\tilde{L}+r}^{j\tilde{L}} \Cov \left[ G_n(\tfrac{t}{n}, \beps_t), G_n(\tfrac{t}{n}, \beps_s) \right] + g_n^{s,t},
\end{align*}
for some $|g_n^{s,t}|\leq C \|G_n(\tfrac{t}{n}, \beps_s)-G_n(\tfrac{s}{n},\beps_s)\|_{L_2}$, using the $L_2(P)$ boundedness of $G$.
Furthermore, condition \eqref{eqn:ergodic} implies that $|\Cov \left[ G_n(\tfrac{t}{n}, \beps_t), G_n(\tfrac{t}{n}, \beps_s)\right]| \leq C \rho^{|t-s|}$.
Hence, it can be checked that
\begin{align*}
	&\quad \sum_{s,t=(j-1)\tilde{L}+r}^{j\tilde{L}} \Cov \left[ G_n(\tfrac{t}{n}, \beps_t), G_n(\tfrac{t}{n}, \beps_s) \right] \\
	&= \bar{\delta}_{j,n} + \sum_{t=(j-1)\tilde{L}+r}^{j\tilde{L}} \sum_{s=-\infty}^\infty \Cov \left[ G_n(\tfrac{t}{n}, \beps_t), G_n(\tfrac{t}{n}, \beps_s) \right] \\
	&= \mathcal{O}(1) + \sum_{t=(j-1)\tilde{L}+r}^{j\tilde{L}} \sigma_n^2(\tfrac{t}{n})
\end{align*}
for some $|\bar{\delta}_{j,n}| \leq C$, and $\sigma_n(u) = \sum_{h=-\infty}^\infty \Cov(G_n(u, \beps_h), G_n(u,\beps_0)$.
Thus, for any $u\in[0,1]$, 
\begin{align}
	\sum_{j=1}^{\lfloor um \rfloor} \Var\left( \frac{Y^*_{j,n}}{\sqrt{n}} \right) 
	&= \frac{1}{n} \sum_{j=1}^{\lfloor um \rfloor} \bar{\delta}_{j,n} + \frac{1}{n}\sum_{j=1}^{\lfloor um \rfloor} \sum_{t=(j-1)\tilde{L}+r}^{j\tilde{L}} \sigma_n^2(\tfrac{t}{n}) +  \frac{1}{n}\sum_{j=1}^{\lfloor um \rfloor}\sum_{t,s=(j-1)\tilde{L}+1}^{j\tilde{L}} g_{s,t} \nonumber\\
	&= V_n^1(u) + V_n^2(u) + V_n^3(u). \label{eqn:Sn-var-1}
\end{align}
The term $V_n^1(u)$ tends to zero uniformly in $u$ because the $\bar{\delta}_{j,n}$ are bounded, and $m_n\ll n$.
The third term may be bounded as 
\begin{align*}
	\sup_{u\in[0,1]} |V_n^3(u)| 
	& \leq \frac{1}{n} \sum_{j=1}^m \sum_{t,s=(j-1)\tilde{L}+1}^{j\tilde{L}} C \|G_n(\tfrac{t}{n}, \beps_0)-G_n(\tfrac{s}{n},\beps_0)\|_{L_2} \\
	&\leq \frac{C}{n} \sum_{t=1}^n \sup_{s=1,\ldots, n, \left|s-t\right|\leq \tilde{L}} \|G_n(\tfrac{t}{n}, \beps_0)-G_n(\tfrac{s}{n},\beps_0)\|_{L_q}\\ 
	&\leq \frac{\tilde{L} C}{n} \sum_{t=1}^n \|G_n(\tfrac{t}{n}, \beps_0)-G_n(\tfrac{t-1}{n},\beps_0)\|_{L_q} \\
	&\leq \frac{\tilde{L} C}{n} n^{1-\frac{1}{p}} \left(\sum_{t=1}^n \|G_n(\tfrac{t}{n}, \beps_0)-G_n(\tfrac{t-1}{n},\beps_0)\|^p_{L_q}\right)^\frac{1}{p}\\
	&\leq \tilde{L}_n C n^{-\frac{1}{p}} \|G_n\|_{p-var},
\end{align*}
which tends to zero since $\|G_n\|_{p-var} \leq C_G$ by assumption, and $\tilde{L}_n n^{-\frac{1}{p}}\to 0$.

To treat the term $V_n^2(u)$, we show that $\|\sigma^2\|_{p-var}<\infty$.
First, we find that for $u,v\in[0,1]$ and any $h\in \Z$,
\begin{align*}
	&\quad \left| \Cov(G_n(u,\beps_h), G_n(u, \beps_0) ) - \Cov(G_n(v,\beps_h), G_n(v, \beps_0) ) \right| \\
	&\leq \left|  \Cov(G_n(u,\beps_h), G_n(u, \beps_0) - G_n(v, \beps_0) )  \right| + \left|  \Cov(G_n(u,\beps_h)-G_n(v,\beps_h), G_n(v, \beps_0) )  \right| \\
	&= \left|  \Cov(G_n(u,\beps_h), G_n(u, \beps_0) - G_n(v, \beps_0) )  \right| + \left|  \Cov(G_n(v,\beps_{-h}), G_n(u, \beps_0) - G_n(v, \beps_0) )  \right|.
\end{align*}
Both summands may be treated identically using \eqref{eqn:ergodic}.
In particular, assuming w.l.o.g.\ $h\geq 0$, 
\begin{align*}
	&\quad \left|  \Cov(G_n(u,\beps_h), G_n(u, \beps_0) - G_n(v, \beps_0) )  \right|\\
	&= \left|  \Cov(G_n(u,\beps_{h,h}^*), G_n(u, \beps_0) - G_n(v, \beps_0) )  \right| \\
	&\qquad + \left|  \Cov(G_n(u,\beps_h) - G_n(u, \beps^*_{h,h}), G_n(u, \beps_0) - G_n(v, \beps_0) )  \right| \\
	&\leq 0 + C\rho^{|h|} \|G_n(u)-G_n(v)\|_{L_2}.
\end{align*}
Hence,
\begin{align}
	|\sigma_n^2(u)-\sigma_n^2(v)| 
	&\leq \sum_{h=-\infty}^\infty  \left| \Cov(G_n(u,\beps_h), G_n(u, \beps_0) ) - \Cov(G_n(v,\beps_h), G_n(v, \beps_0) ) \right| \nonumber \\
	&\leq \sum_{h=-\infty}^\infty C\rho^{|h|} \|G_n(u)-G_n(v)\|_{L_2} \nonumber\\
	&\leq C \|G_n(u)-G_n(v)\|_{L_2}, \label{eqn:sigma-pvar}
\end{align}
which implies that $\|\sigma^2_n\|_{p-var} \leq C \|G_n\|_{p-var}\leq C \cdot C_G$.
Moreover, we have that $\sigma_n^2(u)\to \sigma^2(u)$, because the same argument as above yields
\begin{align}
	|\sigma_n^2(u) - \sigma^2(u)| \leq \sum_{h=-\infty}^\infty C \rho^{|h|} \|G_n(u,\beps_0)-G(u,\beps_0)\|_{L_2} \quad \to 0. \label{eqn:sigma-limit}
\end{align}
Hence, we find that $u\mapsto \sigma_n^2(u)$ is bounded and of bounded $p$-variation, uniformly in $n$.
The same holds for $\sigma^2(u)$.
This suffices to establish convergence of the Riemann sum $V_n^2(u)$ in \eqref{eqn:Sn-var-1}, since
\begin{align*}
	&\quad \sup_{u\in[0,1]}\left|V_n^2(u) - \int_0^u \sigma_n^2(v)\, dv \right| \\
	&= \sup_{u\in[0,1]}\left| \frac{1}{n} \sum_{j=1}^{\lfloor um\rfloor }\sum_{t=(j-1)\tilde{L}+r}^{j\tilde{L} } \sigma_n^2(\tfrac{t}{n}) - \int_0^u \sigma_n^2(v)\, dv \right| \\
	&= \sup_{u\in[0,1]}\left|\frac{1}{n} \sum_{t=1}^{\lfloor un\rfloor } \sigma_n^2(\tfrac{t}{n}) - \int_0^\frac{\lfloor un\rfloor}{n} \sigma_n^2(v)\, dv\right| + \mathcal{O}\left(\frac{mr}{n}\right) + \mathcal{O}\left(\frac{\tilde{L}}{n}\right) \\
	&= \sup_{u\in[0,1]}\left|\sum_{t=1}^{\lfloor un\rfloor } \int_{\frac{t-1}{n}}^{\frac{t}{n}} [\sigma_n^2(\tfrac{t}{n}) - \sigma_n^2(v)]\, dv\right| + \mathcal{O}\left(\frac{mr}{n}\right) + \mathcal{O}\left(\frac{\tilde{L}}{n}\right) + \mathcal{O}\left(\frac{1}{n}\right)\\
	&\leq \frac{1}{n}\sum_{t=1}^n \sup_{\frac{t-1}{n} \leq v < w \leq \frac{t}{n}} |\sigma_n^2(v)-\sigma_n^2(w)| + \mathcal{O}\left(\frac{mr}{n}\right) + \mathcal{O}\left(\frac{\tilde{L}}{n}\right) \\
	&\leq n^{-\frac{1}{p}} \|\sigma_n^2\|_{p-var} + \mathcal{O}\left(\frac{mr}{n}\right) + \mathcal{O}\left(\frac{\tilde{L}}{n}\right).
\end{align*}
By our choice of $\tilde{L}$ and $r$, in particular $r/\tilde{L} \asymp \frac{mr}{n}\to 0$, the latter term tends to zero. 
Moreover, $\int_0^u \sigma_n^2(v)\, dv \to \int_0^u \sigma^2(v)\, dv$ uniformly in $u$ by the dominated convergence theorem.
We have thus shown that 
\begin{align*}
	\Var \left( \frac{1}{\sqrt{n}} \sum_{j=1}^{\lfloor um \rfloor} Y^*_{j,n}\right) 
	&= \int_0^u \sigma^2(v)\, dv + o(1),
\end{align*}
uniformly in $u$.
On the other hand, following the same steps, we can show that 
\begin{align*}
	\Var\left( \frac{1}{\sqrt{n}} \sum_{j=1}^{\lfloor um \rfloor} \tilde{Y}_{j,n} \right) 
	&= o(1) + \frac{1}{n} \sum_{j=1}^{\lfloor um \rfloor} \sum_{t=(j-1)\tilde{L}+1}^{(j-1)\tilde{L}+r-1} \sigma_n^2(\tfrac{t}{n}) = o(1) + \mathcal{O}\left(\frac{mr}{n}\right),
\end{align*}
which tends to zero by our choice of $m,r$. 
Doob's inequality thus yields that
\begin{align*}
	\sup_{u\in [0,1]} \frac{1}{\sqrt{n}} \sum_{j=1}^{\lfloor um\rfloor} \tilde{Y}_{j,n} \to 0,
\end{align*}
in probability, as $n\to\infty$. 
Hence, $S_n(u) = \tfrac{1}{\sqrt{n}} \sum_{j=1}^{\lfloor um\rfloor} Y^*_{j,n} + o(1)$, uniformly in $u$.

To establish a central limit theorem, we verify Lyapunov's condition.
By our assumptions on $G_n$, in particular convergence and bounded $p$-variation in $L_q(P)$, we know that $\|G_n(u)\|_{L_q}\leq C$ for some constant $C$, uniformly in $n$ and $u$.
Hence, 
\begin{align}
	\sum_{j=1}^m \left\| Y^*_{j,n} / \sqrt{n}  \right\|_{L_q}^q 
	&\leq \sum_{j=1}^m \frac{(C\tilde{L})^q}{n^\frac{q}{2}} 
	= \mathcal{O} ( \tilde{L}^{q-1} / n^{q/2-1}) 
	= \mathcal{O} ( \tilde{L}n^{\frac{2-q}{2q-2}})^\frac{1}{q-1} , \label{eqn:Lyapunov}
\end{align}
since $m\tilde{L}/n \to 1$.
This term tends to zero since $\tilde{L}_n n^{\frac{2-q}{2q-2}} \leq \tilde{L}_n n^{\frac{1}{2}-\frac{1}{q}} \to 0$ by our choice of $\tilde{L}_n$.
Thus, the functional central limit theorem is applicable, see e.g.\ \citep[Thm. VIII.3.33]{Jacod2003}.
We obtain
\begin{align*}
	\frac{1}{\sqrt{n}}\sum_{j=1}^{\lfloor um\rfloor} Y^*_{j,n} \wconv B_{\int_0^u \sigma^2(v)\, dv}.
\end{align*}
\end{proof}

\subsection{Properties of the local smoother $\hat{\mu}_{t,n}^{NW}$}

Proposition \ref{prop:NW} is a consequence of the following two Lemmas.

\begin{lemma}\label{lem:uniform}
	Let \eqref{eqn:ergodic} hold for some $q>2$.  
	Let $\tau_n\to\infty$ and choose $k=k_n$ such that $k_n \gg n^{\frac{2}{q}}$.
	Then, as $n\to \infty$,
	\begin{align*}
		\sup_{t=\tau_n, \ldots, n} \left\| \hat{\mu}^{NW}_{t,n} - \frac{1}{k\wedge t} \sum_{i=(t-k)\vee 1}^{t} \mu_{\frac{i}{n}}^n  \right\| \pconv 0.
	\end{align*}
\end{lemma}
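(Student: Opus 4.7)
The quantity to control is
\begin{align*}
S_{t,n} := \hat{\mu}^{NW}_{t,n} - \frac{1}{k\wedge t}\sum_{i=(t-k)\vee 1}^{t} \mu^n_{i/n} = \frac{1}{k\wedge t}\sum_{i=(t-k)\vee 1}^{t} Z_{i,n},
\end{align*}
where $Z_{i,n} := X_{i,n} - \mu^n_{i/n}$ is a centered random vector. Because $\mu^n_{i/n}$ is deterministic, $Z_{i,n}$ inherits the $L_q$-boundedness from \eqref{eqn:GnG}--\eqref{eqn:pvar-finite} and the geometric decay of the physical dependence measure from \eqref{eqn:ergodic}, uniformly in $i$ and $n$.

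The plan is a straightforward moment-bound-plus-union-bound argument. First I would establish the $q$-th moment bound
\begin{align*}
\left\|S_{t,n}\right\|_{L_q} \leq C\,(k\wedge t)^{-1/2}, \qquad t=\tau_n,\ldots,n,
\end{align*}
uniformly in $n$. Because the $Z_{i,n}$ are centered, uniformly $L_q$-bounded, and their physical dependence coefficients decay geometrically, a Rosenthal-type moment inequality for partial sums of weakly dependent sequences under Wu's framework (cf.\ \cite{Wu2005}) gives $\|\sum_{i=(t-k)\vee 1}^{t} Z_{i,n}\|_{L_q} \leq C\sqrt{k\wedge t}$, and dividing by $k\wedge t$ yields the claim.

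Next, I would combine Markov's inequality with a union bound over $t$. For any $\epsilon>0$,
\begin{align*}
P\!\left(\sup_{\tau_n\leq t\leq n}\|S_{t,n}\|>\epsilon\right)
\leq \sum_{t=\tau_n}^{n}\epsilon^{-q}\|S_{t,n}\|_{L_q}^q
\leq C\epsilon^{-q}\sum_{t=\tau_n}^{n}(k\wedge t)^{-q/2}.
\end{align*}
The right-hand side splits into $\sum_{t=\tau_n}^{k} t^{-q/2} + (n-k)_+ \cdot k^{-q/2}$. Since $q>2$ and $\tau_n\to\infty$, the first piece is $O(\tau_n^{1-q/2}) = o(1)$, while the second piece is $O(n/k^{q/2})$, which vanishes precisely under the hypothesis $k_n\gg n^{2/q}$. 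This completes the argument.

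The principal obstacle is the moment bound of the first step: one needs a Rosenthal-type inequality under Wu's physical dependence measure that applies to the triangular, non-stationary array $Z_{i,n}$. Because the dependence decays geometrically, this follows either by martingale approximation (expanding $Z_{i,n}$ as a telescoping sum of $L_q$-summable martingale differences and then applying Burkholder's inequality) or by a block decomposition similar to the one used in the proof of Theorem \ref{thm:CLT-linear}; the essential input is that the physical dependence coefficients sum to a uniform-in-$(i,n)$ constant. Everything else in the proof is bookkeeping for the sum over $t$.
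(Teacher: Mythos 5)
Your proposal is correct and follows essentially the same route as the paper: a uniform $L_q$ moment bound $\|\sum_{i=(t-k)\vee 1}^{t}(X_{i,n}-\E X_{i,n})\|_{L_q}\leq C\sqrt{k\wedge t}$ obtained from a Rosenthal-type inequality under the physical dependence measure (the paper invokes Theorem 1 of Liu, Xiao and Wu (2013) and checks that stationarity there can be replaced by the uniform-in-$(u,n)$ bounds $s_q$ and $\theta^*_{j,q}$), followed by Markov's inequality, a union bound over $t$, and exactly your splitting of $\sum_{t=\tau_n}^{n}(k\wedge t)^{-q/2}$ into $O(\tau_n^{1-q/2})+O(nk^{-q/2})$. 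The one step you leave as a sketch, the nonstationary triangular-array Rosenthal bound, is precisely the step the paper spells out, so there is no gap in substance.
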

\begin{proof}[Proof of Lemma \ref{lem:uniform}]
The union bound and Markov's inequality yield, for any $a>0$,
\begin{align}
	&\quad P\left( \sup_{t=\tau_n, \ldots, n} \left\| \hat{\mu}^{NW}_{t,n} - \frac{1}{k\wedge t} \sum_{i=(t-k)\vee 1}^{t} \mu_{\frac{i}{n}}^n  \right\| > a \right) \nonumber\\
	&= P\left( \sup_{t=\tau_n, \ldots, n}\left\|\frac{1}{k\wedge t} \sum_{i=(t-k)\vee 1}^t \left[X_{i,n} - \E X_{i,n}\right] \right\| > a\right) \nonumber\\
	&\leq a^{-q} \sum_{t=\tau_n}^{n}   \left\| \frac{1}{k\wedge t} \sum_{i=(t-k)\vee 1}^t \left[X_{i,n} - \E X_{i,n}\right] \right\|^q_{L_q}. \label{eqn:union-bound}
\end{align}
To bound the latter $L_q$ norm, we apply Theorem 1 of \cite{Liu2013}.
The latter result is formulated for stationary time series of the type $X_t = G(\beps_t)$. 
However, the stationarity is only strictly required for the last equation of the proof therein, where we may replace $\|X_1\|_{L_q}$ by $s_q = \sup_n\sup_{u\in[0,1]}\|G_n(u, \beps_0)\|_{L_q}$, which is bounded by virtue of \eqref{eqn:ergodic} for $j=0$. 
At all other occasions in the proof of \cite{Liu2013}, only the quantity $\theta_{j,q} = \|G(\beps_0) - G(\tilde{\beps}_{0,j})\|_{L_q}$ is relevant, which may be replaced in our nonstationary setting by \begin{align*}
	\theta^*_{j,q} = \sup_n\sup_{u\in[0,1]} \|G_n(u,\beps_0) - G_n(u,\tilde{\beps}_{0,j})\|_{L_q},\qquad q\geq 2.
\end{align*}
By assumption \eqref{eqn:ergodic}, we have $\theta^*_{j,q} \leq C \rho^j$.
Hence, for our case, the bound of \cite{Liu2013} reads as follows:
For all $k$ and all $1\leq s < t\leq n$ such that $t-s=k$,
\begin{align*}
	&\quad \left\| \sum_{i=s}^t \left[X_{i,n} - \E X_{i,n}\right] \right\|_{L_q} \\
	&\leq k^{1/2} \left[\frac{87\,q}{\log q} \sum_{j=1}^k \theta^*_{j,2} + 3(q-1)^{1/2} \sum_{j=k+1}^\infty \theta^*_{j,q} + \frac{29\, q}{\log q} s_2\right] \\
	&\quad + k^{1/q} \left[ \frac{87 \, q(q-1)^{1/2}}{\log q} \sum_{j=1}^k j^{1/2-1/q} \theta^*_{j,q} + \frac{29\, q}{\log q} s_q \right] \\
	&\leq k^{1/2} \left(s_q + \sum_{j=1}^\infty \theta^*_{j,q} \right) \left( \frac{87\,q}{\log q} +  3(q-1)^{1/2} + \frac{58\, q}{\log q} + \frac{87 \, q(q-1)^{1/2}}{\log q} \right) \\
	&\leq C k^{1/2},
\end{align*}
for a factor $C$ which does not depend on $k$ nor $n$. 
For the latter two inequalities, we used that $\theta^*_{j,2}\leq \theta^*_{j,q}$ and $s_2\leq s_q$ for $q\geq 2$, and the geometric decay of $\theta^*_{j,q}$.
Thus,
\begin{align*}
	\sum_{t=\tau_n}^n  \left\| \frac{1}{k\wedge t} \sum_{i=(t-k)\vee 1}^t \left[X_{i,n} - \E X_{i,n}\right] \right\|^q_{L_q}
	&\leq C \sum_{t=\tau_n}^n (k\wedge t)^{-q/2} \\
	&\leq C \left(n k^{-q/2} + \tau_n^{1-q/2}\right).
\end{align*}
This term tends to zero if $q>2$.
We hence obtain the uniform convergence via the union bound \eqref{eqn:union-bound}.
\end{proof}

\begin{lemma}\label{lem:I2}
	Let \eqref{eqn:ergodic} hold for some $q\geq 2$.
	Suppose that $u\mapsto\mu_u^n$ is $\beta$-Hölder continuous with Hölder constant $C$ independent of $n$.
	Then
	\begin{align*}
		 \frac{1}{n} \sum_{t=1}^{n} \left\| \hat{\mu}^{NW}_{t,n} - \mu_{\frac{t}{n}}^n \right\|^2 
		 = \mathcal{O}_P\left(\left(\frac{k}{n}\right)^{2\beta} + \frac{\log n}{k}\right).
	\end{align*}
	If, instead of Hölder continuity, we assume that $\|\mu^n\|_{p-var}<C<\infty$ for some $p\in [1,2)$, then
	\begin{align*}
		\frac{1}{n} \sum_{t=1}^{n} \left\| \hat{\mu}^{NW}_{t,n} - \mu_{\frac{t}{n}}^n\right\|^2 
		= \mathcal{O}_P\left(\frac{k}{n} + \frac{\log n}{k}\right).
	\end{align*}
	Furthermore, if $\mu^n_{u} = \mu^{n,1}_{u} + \mu^{n,2}_{u}$ such that $u\mapsto \mu^{n,1}_{u}$ is $\beta$-Hölder continuous with Hölder constant $C$ independent of $n$, and $\|\mu^{n,2}\|_{p-var}\leq a_n$, then
	\begin{align*}
		\frac{1}{n} \sum_{t=1}^{n} \left\| \hat{\mu}^{NW}_{t,n} - \mu_{\frac{t}{n}}^n\right\|^2 
		= \mathcal{O}_P\left(\frac{a_n k}{n} + \left(\frac{k}{n}\right)^{2\beta} + \frac{\log n}{k}\right).
	\end{align*}
\end{lemma}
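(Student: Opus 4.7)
I would prove Lemma \ref{lem:I2} by the classical bias-variance decomposition
\[
\hat{\mu}^{NW}_{t,n} - \mu^n_{t/n} = \underbrace{\left(\hat{\mu}^{NW}_{t,n} - \bar{\mu}_{t,n}\right)}_{\text{stochastic part}} + \underbrace{\left(\bar{\mu}_{t,n} - \mu^n_{t/n}\right)}_{\text{bias part}}, \qquad \bar{\mu}_{t,n} = \frac{1}{k\wedge t}\sum_{i=(t-k)\vee 1}^{t}\mu^n_{i/n},
\]
so that the squared error is at most twice the sum of the two parts squared. The two parts can then be handled by independent arguments and the overall $\mathcal{O}_P$ bound obtained by adding the two rates.

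For the stochastic part, I would invoke exactly the same $L_q$ moment bound coming from Theorem 1 of \cite{Liu2013} that is used in the proof of Lemma \ref{lem:uniform}: namely, for every $t$, $\|\hat{\mu}^{NW}_{t,n} - \bar{\mu}_{t,n}\|_{L_q} \leq C (k\wedge t)^{-1/2}$. Taking second moments,
\[
\mathbb{E}\left[\frac{1}{n}\sum_{t=1}^n \|\hat{\mu}^{NW}_{t,n} - \bar{\mu}_{t,n}\|^2\right] \leq \frac{C}{n}\sum_{t=1}^n \frac{1}{k\wedge t} \;=\; \mathcal{O}\!\left(\frac{1}{k} + \frac{\log k}{n}\right) \;=\; \mathcal{O}\!\left(\frac{\log n}{k}\right),
\]
so Markov's inequality gives the required $\mathcal{O}_P(\log n / k)$ contribution in all three cases. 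This part is identical across the Hölder and $p$-variation settings, since it uses only the physical dependence bound \eqref{eqn:ergodic}.

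For the bias part in the Hölder case, the pointwise estimate $|\mu^n_{i/n}-\mu^n_{t/n}|\leq C(|i-t|/n)^\beta \leq C(k/n)^\beta$ for $|i-t|\leq k$ is immediate, giving $|\bar{\mu}_{t,n}-\mu^n_{t/n}|\leq C(k/n)^\beta$ uniformly in $t$ and hence $\frac{1}{n}\sum_t|\bar{\mu}_{t,n}-\mu^n_{t/n}|^2 = \mathcal{O}((k/n)^{2\beta})$. For the $p$-variation case, the pointwise bound is no longer uniform in a useful way, so I would instead bound the sum directly using the telescoping identity $\mu^n_{i/n}-\mu^n_{t/n}=-\sum_{j=i+1}^t\delta_j$ with $\delta_j = \mu^n_{j/n}-\mu^n_{(j-1)/n}$, Abel rearrangement to write $\bar{\mu}_{t,n}-\mu^n_{t/n}$ as a weighted sum of the $\delta_j$ with weights in $[0,1]$, and the global estimate $\sum_{j=1}^n|\delta_j|^p\leq \|\mu^n\|_{p-var}^p$. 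Since $|\delta_j|\leq 2C_G$ (eventually $\leq 1$), one has $|\delta_j|^2\leq |\delta_j|^p$ and the double sum $\sum_t \sum_{j: t-k<j\leq t}|\delta_j|^2$ collapses (each $\delta_j$ appears in at most $k$ summands) to give a bound of order $k^2/n \cdot \|\mu^n\|^p_{p-var}$; a slightly more careful Cauchy-Schwarz on the telescoped form using the full $p$-variation of $\mu^n$ on windows of length $k/n$ tightens this to $\mathcal{O}(k/n)$, matching the stated rate. The mixed decomposition case follows by applying the two bounds to $\mu^{n,1}$ and $\mu^{n,2}$ separately, using $\|\mu^{n,2}\|_{p-var}\leq a_n$ which inserts the factor $a_n$ into the $p$-variation bound.

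The main obstacle I anticipate is the bias bound in the $p$-variation case: Hölder continuity gives a direct pointwise estimate, but for functions of bounded $p$-variation one has to exploit the \emph{averaged} $L^2$ modulus of continuity rather than a uniform one, which requires a careful accounting of the weights in the telescoping representation together with the fact that $p<2$ implies $\sum_j|\delta_j|^2\leq\|\mu^n\|^p_{p-var}$. Everything else is standard: the stochastic part reduces to a single application of the moment bound already used in Lemma \ref{lem:uniform}, and the mixed case is just additivity.
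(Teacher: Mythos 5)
Your decomposition, your treatment of the stochastic term (the paper does it more directly by computing the variance from the covariance decay $|\Cov(G_n(u,\beps_i),G_n(v,\beps_j))|\leq C\rho^{|i-j|}$ rather than via the $L_q$ bound of Liu et al., but both give $\mathcal{O}_P(\log n/k)$), and your Hölder bias bound all match the paper. The genuine gap is in the bias bound for the $p$-variation case. Your telescoping representation $\bar{\mu}_{t,n}-\mu^n_{t/n}=\sum_j w_j\delta_j$ with weights $w_j\in[0,1]$, followed by Cauchy--Schwarz, gives $(\sum_j w_j|\delta_j|)^2\leq k\sum_{j\in W_t}|\delta_j|^2$; summing over $t$, each $j$ lies in at most $k$ windows, so you land at $k^2\sum_j|\delta_j|^2/n=\mathcal{O}(k^2/n)$, exactly as you note. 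Your claim that ``a slightly more careful Cauchy--Schwarz'' tightens this to $\mathcal{O}(k/n)$ is not substantiated, and I do not believe it can be within the telescoping framework: bounding the window average by the sum $\sum_{j\in W_t}|\delta_j|$ amounts to controlling the $1$-variation of $\mu^n$ over the window, which for a function of bounded $p$-variation with $p>1$ can exceed the actual oscillation by a factor of order $k$ (think of $\mu^n$ oscillating by $\pm\epsilon$ at every grid point). Since the stated rate $a_nk/n$ is what Proposition \ref{prop:NW} and the subsequent bandwidth conditions rely on, the loss of a factor $k$ is not cosmetic.

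The paper's argument avoids the telescoping entirely: it bounds
\begin{align*}
\left\|\frac{1}{k\wedge t}\sum_{i=(t-k)\vee 1}^{t}\bigl[\mu^n_{i/n}-\mu^n_{t/n}\bigr]\right\|
\;\leq\; \sup_{s\in[(t-k)/n\vee 0,\,t/n]}\bigl\|\mu^n_{t/n}-\mu^n_{s}\bigr\|,
\end{align*}
i.e.\ by the supremum of a \emph{single} increment over the window, then uses $\|x\|^2\leq(2\|\mu^n\|_\infty)^{2-p}\|x\|^p$ (valid since $p<2$ and $\|\mu^n\|_\infty$ is bounded), and finally controls $\sum_{t=1}^n\sup_s\|\mu^n_{t/n}-\mu^n_s\|^p$ by $k\,\|\mu^n\|_{p-var}^p$, since for each residue class of $t$ modulo $k$ the maximizing increments live on disjoint intervals and hence their $p$-th powers sum to at most $\|\mu^n\|_{p-var}^p$. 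This is the step you need to replace your Cauchy--Schwarz argument with; once it is in place, the mixed case does follow by additivity as you describe, with the factor $a_n$ entering through $\|\mu^{n,2}\|_{p-var}^p\leq a_n^p\leq a_n$ combined with the boundedness factor.
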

\begin{proof}[Proof of Lemma \ref{lem:I2}]
	Assume without loss of generality that the $X_{i,n}$ are scalar, i.e.\ $d=1$.
	Otherwise, we may treat each component individually.
	We have
	\begin{align}
		\begin{split}
			\frac{1}{n} \sum_{t=1}^n  \E \|\hat\mu^{NW}_{t,n}-\mu_{\frac{t}{n}}^n\|^2
			&\leq \frac{2}{n} \sum_{t=1}^{n} \E \left\| \frac{1}{k\wedge t} \sum_{i=(t-k)\vee 1}^t \left[X_{i,n} - \E X_{i,n}\right] \right\|^2 \\
			&\quad + \frac{2}{n} \sum_{t=1}^{n} \left\| \frac{1}{k\wedge t} \sum_{i=(t-k)\vee 1}^t \left[\mu_{\frac{i}{n}}^n - \mu_{\frac{t}{n}}^n\right] \right\|^2.
		\end{split} \label{eqn:I2-decomp}
	\end{align}
	The Hölder continuity of $\mu_u^n$ yields 
	\begin{align*}
		\frac{1}{n} \sum_{t=1}^{n} \left\| \frac{1}{k\wedge t} \sum_{i=(t-k)\vee 1}^t \left[\mu^n_{\frac{i}{n}} - \mu^n_{\frac{t}{n}}\right] \right\|^2
		&\leq C  \left(\frac{k}{n}\right)^{2\beta}.
	\end{align*}
	Alternatively, if we only assume $\|\mu^n\|_{p-var}<\infty$,
	\begin{align*}
		 \frac{1}{n} \sum_{t=1}^{n} \left\| \frac{1}{k\wedge t} \sum_{i=(t-k)\vee 1}^t \left[\mu^n_{\frac{i}{n}} - \mu^n_{\frac{t}{n}}\right] \right\|^2 
		&\leq \frac{1}{n}  \sum_{t=1}^n \sup_{s \in \left[\frac{t-k}{n}\vee 0, t\right]} \|\mu^n_{t/n} - \mu^n_s \|^2 \\ 
		&\leq \frac{1}{n} (2\|\mu\|_\infty)^{2-p}  \sum_{t=1}^n \sup_{s \in \left[\frac{t-k}{n}\vee 0, t\right]} \|\mu^n_{t/n} - \mu^n_s \|^p \\
		&\leq \frac{k}{n} (2\|\mu^n\|_\infty)^{2-p}  \|\mu^n\|_{p-var}^p
	\end{align*}
	Note that $\|\mu^n\|_\infty \leq |\mu^n_0| + \|\mu^n\|_{p-var}$, which is bounded.
	The third case is obtained by a combination of the two previous bounds.
	
	Regarding the first term in \eqref{eqn:I2-decomp}, note that \eqref{eqn:ergodic} implies $|\Cov(G_n(u, \beps_i), G_n(v, \beps_j))| \leq C\rho^{|i-j|}$, for any $u,v\in[0,1]$ and any $n$.
	Hence,
	\begin{align*}
		 \E \left\| \frac{1}{k\wedge t} \sum_{i=(t-k)\vee 1}^t \left[X_{i,n} - \E X_{i,n}\right] \right\|^2 
		&=\Var \left( \frac{1}{k\wedge t} \sum_{i=(t-k)\vee 1}^t G_n(\tfrac{i}{n}, \beps_i) \right)\\
		&= \frac{1}{(k\wedge t)^2}\sum_{i,j=(t-k)\vee 1}^t \Cov\left( G_n(\tfrac{i}{n}, \beps_i), G_n(\tfrac{j}{n}, \beps_j) \right) \\
		&\leq \frac{1}{k\wedge t} \sum_{j=1}^\infty C \rho^j.
	\end{align*}
	Therefore, 
	\begin{align*}
		&\quad \frac{1}{n} \sum_{t=1}^{n} \E \left\| \frac{1}{k\wedge t} \sum_{i=(t-k)\vee 1}^t \left[X_{i,n} - \E X_{i,n}\right] \right\|^2 \\
		&\leq \frac{C}{n} \sum_{t=1}^{n}  \frac{1}{k\wedge t} \\
		&\leq \frac{C}{k} + \frac{C}{n} \sum_{t=1}^k \frac{1}{t} \\
		&\leq \frac{C}{k} + \frac{C \log k}{n} 
		\quad\leq\quad \frac{2C \log n}{k},
	\end{align*}
	completing the proof.
\end{proof}

Of course, Lemma \ref{lem:uniform} and Lemma \ref{lem:I2} also hold with $\mu_u^n$ replaced by its limit $\mu_u$ if we use the samples $X_{t} = G(\tfrac{t}{n},\beps_t)$ instead of $X_{t,n}=G_n(\tfrac{t}{n},\beps_t)$.

\begin{proof}[Proof of Proposition \ref{prop:NW}]
	Clearly, Lemma \ref{lem:I2} implies that \eqref{eqn:ass-mu-1} holds.
	Regarding \eqref{eqn:ass-mu-2}, note that $\sup_{u\in[0,1]} |\mu^n_u - \mu_u| \to 0$ by virtue of \eqref{eqn:GnG}.
	Thus, for any $\delta>0$, we have that $\frac{1}{k\wedge t} \sum_{i=(t-k)\vee 1}^t \mu^n_{\frac{i}{n}} \in \mathcal{M}_{\delta/2}$ for $n$ sufficiently large. 
	Together with Lemma \ref{lem:uniform}, this implies \eqref{eqn:ass-mu-2}.
\end{proof}

\subsection{Functional central limit theorem for $M_n(u)$}

Recall the decomposition $M_n(u) = I_n^1(u) + I_n^2(u) + I_n^3(u)$, where
\begin{align*}
	I_n^1(u) &= \frac{1}{n} \sum_{t=\tau_n+L_n}^{\lfloor nu \rfloor} f\left( \mu_{\frac{t}{n}}^n \right) , \\
	I_n^2(u) &= \frac{1}{n} \sum_{t=\tau_n+L_n}^{\lfloor nu \rfloor} \left( \mu^n_{\frac{t}{n}} - \hat{\mu}_{t-L,n} \right)^T\frac{D^2f\left(\tilde{\mu}_{t,n}\right)}{2} \left( \mu^n_{\frac{t}{n}} - \hat{\mu}_{t-L,n} \right), \\
	I_n^3(u) &=  \frac{1}{n} \sum_{t=\tau_n+L_n}^{\lfloor nu \rfloor} Df(\hat{\mu}_{t-L,n}) \left( X_{t,n}-\mu^n_{\frac{t}{n}} \right).
\end{align*}

The first term $I_n^1(u)$ is a discretized version of the integral $\int_0^u f(\mu^n_v)\, dv$, and its convergence depends on the regularity of the function $u\mapsto \mu^n_u$. 
Assumption \eqref{eqn:pvar-finite} implies $\|\mu^n\|_{p-var}<C_G$, which may be used to treat the term $I_n^1(u)$ as follows.

\begin{lemma}\label{lem:I1}
	For any function $u\mapsto g_u\in\R^d$, it holds that
	\begin{align*}
		\sup_{u\in[0,1]}\left\|\frac{1}{n} \sum_{t=1}^{\lfloor un\rfloor} g_{\frac{t}{n}} - \int_{0}^{\frac{\lfloor un\rfloor}{n}} g_v\, dv \right\| &\leq n^{-\frac{1}{p}} \|g\|_{p-var}.
	\end{align*}
	
	Furthermore, if \eqref{eqn:GnG} and \eqref{eqn:pvar-finite} hold for $p\geq 1$, then there exists $N>0$ such that for all $n\geq N$,
	\begin{align}
		\sup_{u\in [0,1]} \left|I_n^1(u) - \int_{\frac{\tau_n+L_n}{n}}^u f(\mu_v^n)\, dv \right|
		&\leq C_f \|\mu^n\|_{p-var} n^{-\frac{1}{p}},\label{eqn:I1-bound-cutoff} \\
		\sup_{u\in [0,1]} \left|I_n^1(u) - \int_0^u f(\mu_v^n)\, dv \right|
		&\leq C_f \|\mu^n\|_{p-var} n^{-\frac{1}{p}} + C_f \frac{\tau_n+L_n}{n}. \label{eqn:I1-bound}
	\end{align}
\end{lemma}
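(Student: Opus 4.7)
My plan is to prove the three inequalities in turn, using a Riemann-sum decomposition combined with the definition of $p$-variation.

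For the first inequality, I would write the difference as a single sum of integrals over subintervals of length $1/n$:
\begin{align*}
\frac{1}{n}\sum_{t=1}^{\lfloor un\rfloor} g_{\frac{t}{n}} - \int_0^{\lfloor un\rfloor/n} g_v\, dv
= \sum_{t=1}^{\lfloor un\rfloor} \int_{(t-1)/n}^{t/n} \left( g_{\frac{t}{n}} - g_v \right) dv,
\end{align*}
take norms, bound the integrand on each subinterval by the supremum $s_t := \sup_{v \in [(t-1)/n,\, t/n]} \|g_{t/n} - g_v\|$, and apply Hölder with exponents $p$ and $p/(p-1)$ to the resulting average of the $s_t$. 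This reduces the problem to bounding $\sum_{t=1}^{\lfloor un\rfloor} s_t^p$ by $\|g\|_{p-var}^p$. The key step, and what I expect to be the main obstacle, is to verify this $p$-variation bound rigorously: for each $t$ and each $\epsilon > 0$, I would pick a near-maximizer $v_t^* \in [(t-1)/n,\, t/n]$, then observe that $0 \le v_1^* \le 1/n \le v_2^* \le 2/n \le \cdots$ forms an admissible ordered partition of $[0,1]$, so that the definition of $p$-variation directly gives $\sum_t \|g_{t/n} - g_{v_t^*}\|^p \le \|g\|_{p-var}^p$; letting $\epsilon \to 0$ closes the argument. Uniformity in $u$ is automatic because the $p$-variation is already a supremum over all partitions.

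For the second inequality, I would first derive a $p$-variation bound for the composition $f \circ \mu^n$. Assumption \eqref{eqn:GnG} combined with Jensen's inequality yields $\sup_u \|\mu_u^n - \mu_u\| \to 0$, so for all $n$ beyond some threshold $N$ every $\mu_v^n$ lies in $\mathcal{M}^{\delta/2} \subset \mathcal{M}^\delta$ uniformly in $v$. The mean-value theorem together with the bound $\|Df\| \le C_f$ on $\mathcal{M}^\delta$ from \eqref{eqn:ass-f} then gives $\|f(\mu_u^n) - f(\mu_v^n)\| \le C_f \|\mu_u^n - \mu_v^n\|$, whence $\|f(\mu^n)\|_{p-var} \le C_f \|\mu^n\|_{p-var}$. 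I then apply the argument of the first step to $g_v = f(\mu_v^n)$ but over the restricted index range $t = \tau_n + L_n, \dots, \lfloor un \rfloor$, which produces the bound against $\int_{(\tau_n+L_n-1)/n}^{\lfloor un\rfloor/n} f(\mu_v^n)\, dv$. Shifting the integration endpoints from $(\tau_n+L_n-1)/n$ to $(\tau_n+L_n)/n$ and from $\lfloor un\rfloor/n$ to $u$ contributes at most $2C_f/n$ in total, which I absorb into the constant using $n^{-1} \le n^{-1/p}$ for $p \ge 1$.

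The third inequality then follows immediately from the second by writing
\begin{align*}
\int_0^u f(\mu_v^n)\, dv = \int_0^{(\tau_n+L_n)/n} f(\mu_v^n)\, dv + \int_{(\tau_n+L_n)/n}^u f(\mu_v^n)\, dv
\end{align*}
and bounding the first summand by $C_f (\tau_n + L_n)/n$ via the uniform bound $|f| \le C_f$ on $\mathcal{M}^\delta$. The rest of the argument beyond the $p$-variation bookkeeping in step~1 is routine Riemann-sum error analysis and boundary corrections.
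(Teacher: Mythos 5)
Your proposal is correct and follows essentially the same route as the paper's proof: the Riemann-sum decomposition over subintervals of length $1/n$, the Hölder (power-mean) step reducing to $\sum_t s_t^p$, the bound of that sum by $\|g\|_{p-var}^p$, the Lipschitz estimate $\|f(\mu^n)\|_{p-var}\leq C_f\|\mu^n\|_{p-var}$ via the mean value theorem on $\mathcal{M}^\delta$, and the $C_f(\tau_n+L_n)/n$ boundary term. Your near-maximizer partition argument is a welcome explicit justification of the inequality $\sum_t s_t^p\leq\|g\|_{p-var}^p$, which the paper asserts without comment.
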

\begin{proof}[Proof of Lemma \ref{lem:I1}]
	It holds that
	\begin{align*}
		\left\|\frac{1}{n} \sum_{t=1}^{\lfloor un\rfloor} g_{\frac{t}{n}} - \int_{0}^{\frac{\lfloor un\rfloor}{n}} g_v\, dv \right\|
		\leq \int_{0}^{1} \left\| g_v - g_{\frac{\lceil vn\rceil}{n}}\right\|\, dv 
		&\leq \frac{1}{n} \sum_{t=1}^n  \sup_{v\in[\frac{t-1}{n}, \frac{t}{n}]} \left\|g_v - g_{\frac{t}{n}}\right\| \\
		&\leq \frac{n^{1-\frac{1}{p}}}{n} \left(\sum_{t=1}^n  \sup_{v\in[\frac{t-1}{n}, \frac{t}{n}]} \left\|g_v - g_{\frac{t}{n}}\right\|^p\right)^\frac{1}{p} \\
		&\leq n^{-\frac{1}{p}} \|g\|_{p-var}.
	\end{align*}
	
	Regarding the term $I_n^1(u)$, note that $\mu_u^n\in\mathcal{M}_\delta$ for all $u\in[0,1]$, and $n$ large enough.
	Then the mean value theorem and the boundedness of $D f$ on $\mathcal{M}_\delta$ yield $|f(\mu^n_u)-f(\mu^n_v)| \leq C_f \|\mu^n_u-\mu^n_v\|$ for any $u,v\in[0,1]$, such that $\|f(\mu^n)\|_{p-var}\leq C_f \|\mu^n\|_{p-var}$.
	Thus, for $n$ large enough and for any $u$ such that $\lfloor nu\rfloor > \tau_n+L_n$,
	\begin{align*}
		\left|I_n^1(u) - \int_{\frac{\tau_n+L_n}{n}}^{\frac{\lfloor nu\rfloor}{n}} f(\mu_v^n) \, dv\right|
		& \leq C_f n^{-\frac{1}{p}} \|\mu^n\|_{p-var}.
	\end{align*}
	Moreover, since $f(\mu^n_v)$ is bounded,
	\begin{align*}
		\left|\int_0^{\frac{\tau_n+L_n}{n}} f(\mu^n_v)\, dv\right| \leq C_f \frac{\tau_n+L_n}{n},\qquad \left|\int_{\frac{\lfloor nu\rfloor}{n}}^{u} f(\mu^n_v)\, dv\right| \leq  \frac{C_f}{n}\to 0,
	\end{align*} 
	completing the proof.
\end{proof}

In particular, for $(\tau_n+L_n) \ll n^\frac{1}{2}$ and $p<2$, we find that 
\begin{align*}
	\sup_{u\in [0,1]} \left|I_n^1(u) - \int_0^u f(\mu_v^n)\, dv \right| = o(1/\sqrt{n}).
\end{align*}

The term $I_n^2(u)$ may be treated directly by our assumptions on the nonparametric estimator $\hat{\mu}_{t,n}$, and using additionally the regularity of $u\mapsto \mu^n_u$.

\begin{lemma}\label{lem:I2-L}
	Let \eqref{eqn:GnG}, \eqref{eqn:pvar-finite}, \eqref{eqn:ass-f}, \eqref{eqn:ass-mu-1}, and \eqref{eqn:ass-mu-2} hold, for some $p\geq 1$.
	Then
	\begin{align*}
		\sup_{u\in[0,1]} |I_n^2(u)| = o_P(1/\sqrt{n}) + \mathcal{O}_P(L_n^{\min(p,2)} n^{-\frac{2}{\max(p,2)}}).
	\end{align*}
	In particular, $\sup_{u\in[0,1]} |I_n^2(u)| = o_P(1/\sqrt{n})$ if $L_n\ll n^{\frac{1}{p}-\frac{1}{4}}$ for $p\geq 2$, and $L_n \ll n^\frac{1}{2p}$ for $p\in[1,2)$. 
\end{lemma}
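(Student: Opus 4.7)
The plan is to reduce $I_n^2(u)$ to a deterministic $p$-variation estimate plus a contribution already controlled by the pilot assumption \eqref{eqn:ass-mu-1}. First I would use \eqref{eqn:ass-mu-2} and \eqref{eqn:GnG} to work on an event $A_n$ of probability $1-o(1)$ on which $\hat{\mu}_{t-L,n}\in\mathcal{M}^\delta$ for all $t$. Since $\mathcal{M}^\delta$ is convex (as $\mathcal{M}$ is) and since $\mu^n_{t/n}\in\mathcal{M}^{\delta/2}$ for $n$ large by \eqref{eqn:GnG}, the intermediate point $\tilde{\mu}_{t,n}$ on the segment between $\hat{\mu}_{t-L,n}$ and $\mu^n_{t/n}$ also lies in $\mathcal{M}^\delta$. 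Assumption \eqref{eqn:ass-f} then gives $\|D^2 f(\tilde{\mu}_{t,n})\|\leq C_f$, so
\begin{align*}
\sup_{u\in[0,1]} |I_n^2(u)| \leq \frac{C_f}{2n}\sum_{t=\tau_n+L_n}^{n}\bigl\|\mu^n_{t/n}-\hat{\mu}_{t-L,n}\bigr\|^2.
\end{align*}

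Next I would apply the triangle inequality $\|\mu^n_{t/n}-\hat\mu_{t-L,n}\|^2 \leq 2\|\mu^n_{t/n}-\mu^n_{(t-L)/n}\|^2 + 2\|\mu^n_{(t-L)/n}-\hat\mu_{t-L,n}\|^2$. After re-indexing the second piece becomes $\frac{2}{n}\sum_{s=\tau_n}^{n-L}\|\hat\mu_{s,n}-\mu^n_{s/n}\|^2=o_P(1/\sqrt{n})$ by \eqref{eqn:ass-mu-1}. All remaining work is deterministic and governed by $\|\mu^n\|_{p\text{-}var}\leq C_G$ from \eqref{eqn:pvar-finite}.

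Writing $\Delta_i=\mu^n_{i/n}-\mu^n_{(i-1)/n}$, I would split into two cases. For $p\geq 2$, Cauchy--Schwarz on the telescoping sum yields $\|\mu^n_{t/n}-\mu^n_{(t-L)/n}\|^2\leq L\sum_{i=t-L+1}^t\|\Delta_i\|^2$, and the power-mean (or Hölder with exponents $p/2$, $p/(p-2)$) gives $\sum_{i=1}^n\|\Delta_i\|^2\leq n^{1-2/p}\|\mu^n\|_{p\text{-}var}^2$, so altogether $\frac{1}{n}\sum_t\|\mu^n_{t/n}-\mu^n_{(t-L)/n}\|^2=\mathcal{O}(L^2 n^{-2/p})$. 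For $p\in[1,2)$, Hölder with exponents $(p,p/(p-1))$ instead gives $\|\mu^n_{t/n}-\mu^n_{(t-L)/n}\|^p\leq L^{p-1}\sum_{i=t-L+1}^t\|\Delta_i\|^p$, and interpolating with the uniform bound $\|\mu^n_{t/n}-\mu^n_{(t-L)/n}\|^{2-p}\leq (2\|\mu^n\|_\infty)^{2-p}$ yields $\frac{1}{n}\sum_t\|\mu^n_{t/n}-\mu^n_{(t-L)/n}\|^2=\mathcal{O}(L^p/n)$, using $\sum_i\|\Delta_i\|^p\leq\|\mu^n\|_{p\text{-}var}^p$. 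Both cases combine to the stated $\mathcal{O}_P(L_n^{\min(p,2)}n^{-2/\max(p,2)})$, and the $L_n$ constraints in the statement are exactly what force this to be $o_P(1/\sqrt{n})$.

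The main obstacle is keeping the exponents straight across the two regimes: for $p<2$ the natural bound goes through $\ell^p\hookrightarrow\ell^2$ on the increments and one must interpolate with $\|\mu^n\|_\infty$, whereas for $p\geq 2$ the inclusion reverses and one must spend a factor $n^{1-2/p}$ via Hölder. Once these two inequalities are in place, the rest is routine telescoping and re-indexing, and the matching with the condition $L_n\ll n^{1/p-1/4}$ (resp.\ $L_n\ll n^{1/(2p)}$) follows by direct comparison.
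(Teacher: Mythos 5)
Your proposal is correct and follows essentially the same route as the paper's proof: the same localization to the event $\{\hat{\mu}_{t-L,n}\in\mathcal{M}^\delta\}$ via \eqref{eqn:ass-mu-2} and convexity, the same triangle-inequality split into the pilot-estimator error (absorbed by \eqref{eqn:ass-mu-1}) and the deterministic drift $\|\mu^n_{t/n}-\mu^n_{(t-L)/n}\|^2$, and the same two-regime treatment of the latter via $p$-variation. The only cosmetic difference is that for $p\geq 2$ you apply Cauchy--Schwarz to the telescoped increments and then Hölder to $\sum_i\|\Delta_i\|^2$, whereas the paper applies Hölder once directly to $\sum_t\|\mu^n_{t/n}-\mu^n_{(t-L)/n}\|^2$; both yield the identical bound $L_n^2 n^{-2/p}\|\mu^n\|_{p-var}^2$.
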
 
\begin{proof}[Proof of Lemma \ref{lem:I2-L}]
	Assumption \eqref{eqn:GnG} implies that for $n$ sufficiently large, $\mu^n_u\in\mathcal{M}_\delta$ for all $u\in[0,1]$.
	Moreover, $\hat{\mu}_{t-L,n}\in\mathcal{M}_\delta$ for all $t\geq \tau_n+L_n$, with probability tending to one. 
	Since $\mathcal{M}_\delta$ is convex, we find that $\tilde{\mu}_{t,n}\in\mathcal{M}_\delta$ for all $t\geq \tau_n+L_n$, with probability tending to one.
	On this event, it holds that
	\begin{align*}
		|I_n^2(u)| 
		&\leq \frac{2C_f}{n} \sum_{t=\tau_n+L_n}^{\lfloor nu\rfloor} \| \hat{\mu}_{t-L,n} - \mu^n_{\frac{t-L}{n}}\|^2 + \|\mu^n_{\frac{t-L}{n}} - \mu^n_{\frac{t}{n}}\|^2 \\
		&\leq  o_P(1/\sqrt{n}) + \frac{2C_f C_G}{n} \sum_{t=\tau_n+L_n}^{\lfloor nu\rfloor} \|\mu^n_{\frac{t-L}{n}} - \mu^n_{\frac{t}{n}}\|^2.
	\end{align*}
	
	Moreover, if $p<2$, boundedness of $\|\mu_t^n\|$ yields
	\begin{align*}
		\frac{1}{n} \sum_{t=\tau_n+L_n}^n \|\mu^n_{\frac{t-L}{n}} - \mu^n_{\frac{t}{n}}\|^2
		&\leq \frac{C}{n} \sum_{t=\tau_n+L_n}^n \|\mu^n_{\frac{t-L}{n}} - \mu^n_{\frac{t}{n}}\|^p \\
		&\leq \frac{C}{n} \sum_{t=\tau_n+L_n}^n \left(\sum_{i=1}^{L_n}\|\mu^n_{\frac{t-i+1}{n}} - \mu^n_{\frac{t-i}{n}}\|\right)^p \\
		&\leq \frac{C}{n} \sum_{t=\tau_n+L_n}^n L_n^{p-1} \sum_{i=1}^{L_n}\|\mu^n_{\frac{t-i+1}{n}} - \mu^n_{\frac{t-i}{n}}\|^p \\
		&\leq \frac{L_n^p}{n} \sum_{t=1}^n \|\mu^n_{\frac{t}{n}} - \mu^n_{\frac{t-1}{n}}\|^p 
		\qquad = L_n^p n^{-1} \|\mu^n\|_{p-var}^p.
	\end{align*}
	If $p>2$, we may apply Hölder's inequality to obtain
	\begin{align*}
		\frac{1}{n} \sum_{t=\tau_n+L_n}^n \|\mu^n_{\frac{t-L}{n}} - \mu^n_{\frac{t}{n}}\|^2
		&\leq 	\frac{1}{n} n^{1-\frac{2}{p}}\left(\sum_{t=\tau_n+L_n}^n \|\mu^n_{\frac{t-L}{n}} - \mu^n_{\frac{t}{n}}\|^p\right)^\frac{2}{p} \\
		&= \left(\frac{1}{n}\sum_{t=\tau_n+L_n}^n \|\mu^n_{\frac{t-L}{n}} - \mu^n_{\frac{t}{n}}\|^p\right)^\frac{2}{p}
		&\leq L_n^2 n^{-\frac{2}{p}} \|\mu^n\|_{p-var}^2.
	\end{align*}
	This yields the claimed upper bound.
\end{proof}

The last term $I_n^3(u)$ admits a functional central limit theorem with scaling factor $\sqrt{n}$, as shown in the following Lemma. 
Its proof follows the same ideas as the proof of Theorem \ref{thm:CLT-linear}, but additional care is needed when treating the random but consistent factor $Df(\hat{\mu}_{t,n})\approx Df(\mu^n_{t/n})$.

\begin{lemma}\label{lem:I3}
Suppose that \eqref{eqn:GnG}, \eqref{eqn:pvar-finite}, \eqref{eqn:ergodic}, \eqref{eqn:ass-f}, \eqref{eqn:ass-mu-1}, and \eqref{eqn:ass-mu-2} hold for some $p\geq 1$.
Suppose that $L_n$ satisfies $L_n \gg \log(n)^{1+a}$ for some $a\in(0,1)$, and that $L_n \ll n^{\frac{1}{p}}$. 
Then, as $n\to\infty$,
\begin{align*}
	\sqrt{n}I_n^3(u) &\wconv B\left(\int_0^u Df(\mu_v) \Sigma(v) Df(\mu_v)^T\,dv\right), \\
	\Sigma(v) &= \sum_{h=-\infty}^\infty \Cov \left(G(v, \beps_0), G(v, \beps_h)\right) \in\R^{d\times d},
\end{align*}
where $B$ denotes a standard Brownian motion.
The weak convergence holds in the Skorokhod space $D[0,1]$.
\end{lemma}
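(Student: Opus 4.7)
My plan is to replace the random weights $Df(\hat\mu_{t-L,n})$ by the deterministic weights $Df(\mu^n_{t/n})$ and then invoke a weighted version of Theorem \ref{thm:CLT-linear}. Concretely, I write $I_n^3(u) = \tilde I_n^3(u) + R_n(u)$ where
\[
\tilde I_n^3(u) = \frac{1}{n}\sum_{t=\tau_n+L_n}^{\lfloor nu\rfloor} Df(\mu^n_{t/n})\bigl(X_{t,n} - \mu^n_{t/n}\bigr),
\]
and further decompose $R_n = R_n^A + R_n^B$ by inserting $Df(\mu^n_{(t-L)/n})$ as a pivot. The deterministic part $R_n^B$ uses the Lipschitz coefficient $Df(\mu^n_{(t-L)/n}) - Df(\mu^n_{t/n})$, while the random part $R_n^A$ uses $Df(\hat\mu_{t-L,n}) - Df(\mu^n_{(t-L)/n})$, which is $\beps_{t-L}$-measurable and the key source of the bias control afforded by the lag $L_n$.

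For the limit of $\sqrt n\,\tilde I_n^3$, I treat $\tilde X_{t,n} = \tilde G_n(t/n,\beps_t)$ with $\tilde G_n(u,\beps) = Df(\mu^n_u)[G_n(u,\beps) - \mu^n_u]$ as a new triangular array. The assumptions \eqref{eqn:GnG}--\eqref{eqn:ergodic} transfer to $\tilde G_n$: \eqref{eqn:ergodic} is immediate since $Df(\mu^n_u)$ is deterministic and bounded by \eqref{eqn:ass-f}; \eqref{eqn:GnG} follows from uniform convergence $\mu^n_u \to \mu_u$ (a consequence of \eqref{eqn:GnG} for $G_n$) and continuity of $Df$; finite $p$-variation of $\tilde G_n$ follows from a product-rule argument combining $\|G_n\|_{p\text{-var}} \leq C_G$ with the Lipschitz bound $\|Df(\mu^n_u) - Df(\mu^n_v)\| \leq C_f\|\mu^n_u - \mu^n_v\|$ and $\|\mu^n\|_{p\text{-var}} \leq C_G$ (by Jensen). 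Theorem \ref{thm:CLT-linear} then yields $\sqrt n\,\tilde I_n^3(u)\wconv B\bigl(\int_0^u Df(\mu_v)\Sigma(v)Df(\mu_v)^T\,dv\bigr)$; trimming the initial indices $\{1,\dots,\tau_n+L_n\}$ is negligible because the limit is continuous at $0$ and $(\tau_n+L_n)/n\to 0$.

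To see that $\sqrt n R_n^B = o_P(1)$ uniformly, I compute the second moment of the partial sum process. Using the geometric decay $|\mathrm{Cov}(X_{t,n},X_{s,n})|\leq C\rho^{|t-s|}$ from \eqref{eqn:ergodic}, the weighted variance reduces to $\frac{C}{n}\sum_t \|Df(\mu^n_{(t-L)/n}) - Df(\mu^n_{t/n})\|^2 \leq \frac{C_f^2}{n}\sum_t \|\mu^n_{(t-L)/n} - \mu^n_{t/n}\|^2$, which is exactly the quantity controlled in the proof of Lemma \ref{lem:I2-L}: it is $\mathcal{O}(L_n^p\|\mu^n\|_{p\text{-var}}^p/n)$ for $p\in[1,2)$ and $\mathcal{O}(L_n^2 n^{-2/p}\|\mu^n\|_{p\text{-var}}^2)$ for $p\geq 2$, both $o(1)$ under the hypothesis $L_n\ll n^{1/p}$. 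Uniformity in $u$ follows from a maximal inequality applied to the partial-sum process.

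The main obstacle is the sharp control of $R_n^A$, since a bare Cauchy--Schwarz estimate using only \eqref{eqn:ass-mu-1} and $\|X_{t,n}-\mu^n_{t/n}\|_{L_2}\leq C$ yields merely $o_P(n^{-1/4})$ rather than the required $o_P(n^{-1/2})$. I will instead exploit the decoupling via the lag $L_n$: introduce the coupled copy $X_{t,n}^\ast = G_n(t/n,\beps^\ast_{t,L_n})$, which is independent of $\F_{t-L_n}$ and satisfies $\|X_{t,n}-X_{t,n}^\ast\|_{L_2}\leq C\rho^{L_n}$ by Proposition \ref{prop:ergodic}. Then I split $X_{t,n}-\mu^n_{t/n} = (X_{t,n}-X_{t,n}^\ast) + (X_{t,n}^\ast - \mu^n_{t/n})$. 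The first summand is handled by the crude bound $\sqrt n\cdot n\rho^{L_n}/n = \sqrt n\,\rho^{L_n}$, which vanishes because $L_n\gg \log(n)^{1+a}$ forces $\rho^{L_n}\ll n^{-1/2}$. For the second, the factor $A_t := Df(\hat\mu_{t-L,n}) - Df(\mu^n_{(t-L)/n})$ is $\F_{t-L}$-measurable and $E[X_{t,n}^\ast - \mu^n_{t/n}\mid\F_{t-L}] = 0$, so that the resulting sum has conditionally mean-zero increments. Combining this conditional centering with a block-wise variance computation (analogous to the block argument used in the proof of Theorem \ref{thm:CLT-linear}) and the rate bound $\sum_t\|\hat\mu_{t-L,n}-\mu^n_{(t-L)/n}\|^2 = o_P(\sqrt n)$ from \eqref{eqn:ass-mu-1}, I obtain the required negligibility. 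Tightness of $\sqrt n\,I_n^3(\cdot)$ inherits from that of $\sqrt n\,\tilde I_n^3(\cdot)$ together with uniform bounds on the two remainders, completing the functional convergence in $D[0,1]$.
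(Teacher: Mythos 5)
Your argument is correct and rests on the same two pillars as the paper's proof: the lag $L_n$ makes the weights $\F_{t-L}$-measurable so that (after block-coupling via \eqref{eqn:ergodic}) the relevant sums are martingales, and assumption \eqref{eqn:ass-mu-1} enters only through a quadratic variation, where the sum of \emph{squared} estimation errors $\sum_t\|\hat\mu_{t-L,n}-\mu^n_{(t-L)/n}\|^2=o_P(\sqrt n)$ suffices. You correctly diagnose that a bare Cauchy--Schwarz on the process itself would only give $o_P(n^{-1/4})$, and that predictability plus conditional centering is the way out. The organization differs from the paper's: you peel off the deterministic-weight process $\tilde I_n^3$ and feed the kernel $\tilde G_n(u,\beps)=Df(\mu^n_u)[G_n(u,\beps)-\mu^n_u]$ into Theorem \ref{thm:CLT-linear} as a black box, then show the residual martingale $\frac{1}{\sqrt n}\sum_t A_t(X^*_{t,n}-\mu^n_{t/n})$ has predictable quadratic variation $O_P(\frac{1}{n}\sum_t\|A_t\|^2)=o_P(n^{-1/2})$; the paper instead carries the random weights $Df(\hat\mu_{t-L,n})$ through a single martingale FCLT and replaces them by $Df(\mu^n_{t/n})$ \emph{inside} the predictable quadratic variation, using the bound $\frac{1}{\sqrt n}\bigl(\sum_t\|\hat\mu_{t-L,n}-\mu^n_{t/n}\|^2\bigr)^{1/2}=o_P(1)$. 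Your modularization buys reuse of Theorem \ref{thm:CLT-linear} (at the cost of verifying \eqref{eqn:GnG}--\eqref{eqn:ergodic} for $\tilde G_n$, which goes through as you indicate); the paper's version avoids that verification but redoes the Lyapunov/FCLT machinery with weights attached. Two points to tighten: (i) the coupling $X^*_{t,n}=G_n(t/n,\beps^*_{t,L_n})$ must use a \emph{separate} independent copy of the innovations per block (with block length at most $L_n/2$, say) so that consecutive block sums are genuine martingale differences with respect to a common filtration --- a single global starred sequence would entangle blocks; the paper's Step (ii) does exactly this, and you should make it explicit rather than deferring to "analogous to Theorem \ref{thm:CLT-linear}". (ii) Uniformity in $u$ for $R_n^B$ is not automatic from a pointwise variance bound since $R_n^B$ is not a martingale; you need the same block-plus-Doob device (or absorb $R_n^B$ into the deterministic-weight kernel, which is cleaner). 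Neither issue is a gap in the idea, only in the write-up.
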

\begin{proof}[Proof of Lemma \ref{lem:I3}]
	As in the proof of Theorem \ref{thm:CLT-linear}, we split the sum into blocks of size $\tilde{L}=\tilde{L}_n = \lfloor n^\epsilon \wedge \frac{L_n}{2}\rfloor$, and $r=r_n=\lfloor\tilde{L}_n^{1-\epsilon} \rfloor$, for some arbitrarily small $\epsilon\in(0,1)$ to be chosen in the sequel. 
	We assume $n$ to be sufficiently large such that $1\leq r_n < \tilde{L}_n\leq \frac{L_n}{2}$.
	Denote by $m=m_n=\lfloor n/\tilde{L}_n \rfloor$ the number of blocks. 
	
	\underline{Step (i):}
	As a first step, we match the upper and lower limits of the sum $I_n^3(u)$ with the block size $L=L_n$. 
	Define
	\begin{align*}
		\bar{I}_n^3(u) 
		&= \frac{1}{n} \sum_{t=\left\lceil \frac{\tau_n+L_n}{\tilde{L}}\right\rceil \tilde{L}+1}^{\lfloor um\rfloor \tilde{L} }  Df(\hat{\mu}_{t-L,n}) \left( X_{t,n}-\mu_{\frac{t}{n}}^n \right).
	\end{align*}
	By virtue of Lemma \ref{lem:uniform} and \eqref{eqn:ass-f}, $\|Df(\hat{\mu}_{t,n})\| \leq C_f$ for all $t$, with probability tending to one, such that
	\begin{align*}
		\left| I_n^3(u)- \bar{I}_n^3(u)\right| 
		&= \mathcal{O}_P \left( \frac{2\tilde{L}_n\, C_f}{n} \sup_{t=1,\ldots, n} \left|X_{t,n}-\E X_{t,n}\right|\right)  .
	\end{align*}
	Since $\|X_{t,n}\|_{L_q}\leq C<\infty$, the union bound and Markov's inequality yield the upper bound $\sup_{t=1,\ldots, n} \left|X_{t,n}-\E X_{t,n}\right| = \mathcal{O}_P(n^{\frac{1}{q}})$.
	Using $\tilde{L}_n \leq n^\epsilon$, we obtain
	\begin{align*}
		\sup_{u\in[0,1]} \left| I_n^3(u) - \bar{I}_n^3(u) \right| & = \mathcal{O}_P\left(   n^{\epsilon+\frac{1}{q}-1}  \right).
	\end{align*}
	Since $q>2$, we may choose $\epsilon>0$ sufficiently small such that the latter term is of order $o_P(1/\sqrt{n})$. 
	
	\underline{Step (ii):}
	Next, we introduce $\chi_{t,n} = \mathds{1}(\hat{\mu}_{t,n}\in\mathcal{M}_\delta)$, and
	\begin{align*}
		Y_{j,n} &= \sum_{t=(j-1) \tilde{L}+r+1 }^{j\tilde{L}} \chi_{t-L,n} Df(\hat{\mu}_{t-L,n})(X_{t,n}-\E X_{t,n}), \\
		\tilde{Y}_{j,n} &= \sum_{t=(j-1)\tilde{L}+1 }^{(j-1)\tilde{L}+r} \chi_{t-L,n} Df(\hat{\mu}_{t-L,n})(X_{t,n}-\E X_{t,n}),\\
		\tilde{I}_n^3(u) &= \frac{1}{n}\sum_{j=\left\lceil \frac{\tau_n+L_n}{\tilde{L}}\right\rceil +1}^{\lfloor um\rfloor} (Y_{j,n} + \tilde{Y}_{j,n}).
	\end{align*}
	By virtue of assumption \eqref{eqn:ass-mu-2}, we have $\sup_{u\in[0,1]} |\bar{I}_n^3(u) - \tilde{I}_n^3(u)| \to 0$ in probability.
	Now, for each $j$, let $\epsilon_i^{(j)}, i\in\Z$ be an independent copy of the $\epsilon_i$, and define 
	\begin{align*}
		Y^*_{j,n}         & =  \sum_{t=(j-1)\tilde{L}+r+1}^{j\tilde{L}} \chi_{t-L,n}Df(\hat{\mu}_{t-L,n}) \left[G_n\left(\tfrac{t}{n}, \epsilon_{t},\ldots, \epsilon_{(j-1)\tilde{L}+1}, \epsilon_{(j-1)\tilde{L}}^{(j)},\ldots\right)-\E X_{t,n}\right] \\
		& =  \sum_{t=(j-1)\tilde{L}+r+1}^{j\tilde{L}} \chi_{t-L,n}Df(\hat{\mu}_{t-L,n}) Z_{t,n}^j, \\
		\tilde{Y}^*_{j,n} & =  \sum_{t=(j-1)\tilde{L}+1}^{(j-1)\tilde{L}+r} \chi_{t-L,n} Df(\hat{\mu}_{t-L,n}) \left[ G_n\left(\tfrac{t}{n}, \epsilon_{t},\ldots, \epsilon_{(j-1)\tilde{L}+1-r}, \epsilon_{(j-1)\tilde{L}-r}^{(j)},\ldots\right)-\E X_{t,n}\right] \\
		& =  \sum_{t=(j-1)\tilde{L}+1}^{(j-1)\tilde{L}+r} \chi_{t-L,n} Df(\hat{\mu}_{t-L,n}) \tilde{Z}_{t,n}^j.
	\end{align*}
	This construction yields that the  $Y^*_{j,n}$ and the $\tilde{Y}^*_{j,n}$ are martingale differences, since $\hat{\mu}_{t-L,n}$ is measurable w.r.t.\ $\beps_{t-L}$, and since $\tilde{L}_n+r_n<L_n$. 
	It also holds that $\tilde{Z}_{t,n}^j \sim Z_{t,n}^j \sim (X_{t,n} - \E X_{t,n})$.
	Moreover, by virtue of \eqref{eqn:ergodic}, it holds that $\|Y^*_{j,n}-Y_{j,n}\|_{L_q} \leq C \, \tilde{L} \rho^r$ and $\|\tilde{Y}_{j,n}^*-\tilde{Y}_{j,n}\|_{L_q}\leq C\, r \rho^r \leq C\, \tilde{L}_n \rho^r$.
	Now define the martingales
	\begin{align*}
		J_n^1(u) &= \frac{1}{n}\sum_{j=\left\lceil \frac{\tau_n+L_n}{\tilde{L}}\right\rceil +1}^{\lfloor um\rfloor}  Y^*_{j,n},\\
		J_n^2(u) &= \frac{1}{n}\sum_{j=\left\lceil \frac{\tau_n+L_n}{\tilde{L}}\right\rceil +1}^{\lfloor um\rfloor}  \tilde{Y}^*_{j,n},
	\end{align*}
	and note that
	\begin{align*}
		\left\|\sup_{u\in[0,1]}\left| \tilde{I}_n^3(u) - J_n^1(u) - J_n^2(u)\right| \right\|_{L_q}
		&\leq  \frac{C}{n} \sum_{j=1}^{m_n} \tilde{L}_n \rho^{r_n}  \leq C\rho^{r_n} = o(\sqrt{n}),
	\end{align*}
	since $\tilde{L}_n m_n\leq n$, and $r_n\gg \log(n)^{(1+a)(1-\epsilon)}\gg \log(n)^{1+\frac{a}{2}}$, upon choosing $\epsilon$ sufficiently small.
	
	\underline{Step (iii):}
	Define the two sequences of filtrations $\F^1_{k,n} = \sigma(\epsilon_i, \epsilon^{(j)}_i :j\leq  k, i\leq k\tilde{L}_n )$, and $\F^2_{k,n} = \sigma(\epsilon_i, \epsilon^{(j)}_i :j\leq k, i\leq k\tilde{L}_n-r_n )$.
	Then $J^1_n(k/m)$ is an $\F^1_{k,n}$ martingale, and $J^2_n(k/m)$ is a $\F^2_{k,n}$ martingale.
	The predictable quadratic variation of $\sqrt{n} J^1_n$ is given by
	\begin{align*}
		\langle \sqrt{n} J_n^1(u)\rangle 
		&= \frac{1}{n}\sum_{j=\left\lceil \frac{\tau_n+L_n}{\tilde{L}}\right\rceil +1}^{\lfloor um\rfloor} \E\left[ \left( Y^*_{j,n} \right)^2  | \F^1_{j-1,n}\right] \\
		&= \frac{1}{n} \sum_{j=\left\lceil \frac{\tau_n+L_n}{\tilde{L}}\right\rceil +1}^{\lfloor um\rfloor} \sum_{s,t=(j-1)\tilde{L}+r+1}^{j\tilde{L}} \chi_{t-L,n} \chi_{s-L,n} Df(\hat{\mu}_{t-L,n})\Cov(Z^j_{t,n}, Z^j_{s,n})Df(\hat{\mu}_{s-L,n})^T.
	\end{align*}
	We observe that
	\begin{align*}
		&\quad \left|Df(\hat{\mu}_{t-L,n})\Cov(Z^j_{t,n}, Z^j_{s,n})Df(\hat{\mu}_{s-L,n})^T
		- Df(\mu^n_{\frac{t}{n}})\Cov(Z^j_{t,n}, Z^j_{s,n})Df(\mu^n_{\frac{s}{n}})^T\right| \chi_{t-L,n} \chi_{s-L,n} \\
		&\leq \left|Df(\hat{\mu}_{t-L,n})\Cov(Z^j_{t,n}, Z^j_{s,n})\left[Df(\hat{\mu}_{s-L,n}) - Df(\mu^n_{\frac{s}{n}})\right]^T \right|\chi_{t-L,n} \chi_{s-L,n} \\
		&\qquad+ \left| \left[Df(\hat{\mu}_{t-L,n}) - Df(\mu^n_{\frac{t}{n}}) \right]\Cov(Z^j_{t,n}, Z^j_{s,n})Df(\hat{\mu}_{s-L,n})^T  \right|\chi_{t-L,n} \chi_{s-L,n}\\
		&\leq C \rho^{|t-s|}  \left[ \left\|Df(\hat{\mu}_{t-L,n}) - Df(\mu^n_{\frac{t}{n}}) \right\| +  \left\| Df(\hat{\mu}_{s-L,n}) - Df(\mu^n_{\frac{s}{n}}) \right\| \right]\chi_{t-L,n} \chi_{s-L,n} \\
		&\leq C \rho^{|t-s|}  \left[ \left\|\hat{\mu}_{t-L,n} - \mu^n_{\frac{t}{n}} \right\| +  \left\| \hat{\mu}_{s-L,n} - \mu^n_{\frac{s}{n}} \right\| \right],
	\end{align*}
	where the last inequalities are a consequence of \eqref{eqn:ergodic}, and the boundedness of $Df$ and $D^2f$ on $\mathcal{M}_\delta$.
	Moreover, 
	\begin{align}
		&\quad \frac{1}{n}\sum_{j=\left\lceil \frac{\tau_n+L_n}{\tilde{L}}\right\rceil +1}^{\lfloor um\rfloor} \sum_{s,t=(j-1)\tilde{L}+r+1}^{j\tilde{L}} \rho^{|t-s|}  \left[ \left\|\hat{\mu}_{t-L,n} - \mu^n_{\frac{t}{n}} \right\| +  \left\| \hat{\mu}_{s-L,n} - \mu^n_{\frac{s}{n}} \right\| \right] \nonumber\\
		&= \frac{2}{n}\sum_{j=\left\lceil \frac{\tau_n+L_n}{\tilde{L}}\right\rceil +1}^{\lfloor um\rfloor} \sum_{s,t=(j-1)\tilde{L}+r+1}^{j\tilde{L}} \rho^{|t-s|}   \left\|\hat{\mu}_{t-L,n} - \mu^n_{\frac{t}{n}} \right\| \nonumber\\
		&\leq \frac{C}{n} \sum_{j=\left\lceil \frac{\tau_n+L_n}{\tilde{L}}\right\rceil +1}^{\lfloor um\rfloor} \sum_{t=(j-1)\tilde{L}+r+1}^{j\tilde{L}}  \left\|\hat{\mu}_{t-L,n} - \mu^n_{\frac{t}{n}} \right\| \nonumber\\
		&\leq \frac{C}{n} \sum_{t=\tau_n}^{n}  \left\|\hat{\mu}_{t-L,n} - \mu^n_{\frac{t}{n}} \right\| \nonumber\\
		&\leq \frac{C}{\sqrt{n}} \sqrt{\sum_{t=\tau_n}^{n}  \left\|\hat{\mu}_{t-L,n} - \mu^n_{\frac{t}{n}} \right\|^2}. \label{eqn:I3-cov-det}
	\end{align}
	In the proof of Lemma \ref{lem:I2-L}, we have shown that $\frac{1}{n}\sum_{t=\tau_n}^{n}  \left\|\hat{\mu}_{t-L,n} - \mu^n_{\frac{t}{n}} \right\|^2 = o_p(1/\sqrt{n}) + \mathcal{O}_P(L_n^{\min(p,2)} n^{-\frac{2}{\max(p,2)}})$, which tends to zero because $L_n\ll n^\frac{1}{p}$ by assumption.
	Hence, \eqref{eqn:I3-cov-det} tends to zero as $n\to\infty$, and we thus obtain 
	\begin{align}
		\langle \sqrt{n} J_n^1(u)\rangle 
		&= o_P(1) + \sum_{j=\left\lceil \frac{\tau_n+L_n}{\tilde{L}}\right\rceil +1}^{\lfloor um\rfloor} \sum_{s,t=(j-1)\tilde{L}+r+1}^{j\tilde{L}}  Df(\mu^n_{\frac{t}{n}})\Cov(Z^j_{t,n}, Z^j_{s,n})Df(\mu^n_{\frac{s}{n}})^T \label{eqn:J-quadratic-1}
	\end{align}
	where the $o_P(1)$ term vanishes uniformly in $u\in[0,1]$. 
	Note that we may omit the indicators $\chi_{t-L,n}$ asymptotically by virtue of assumption \eqref{eqn:ass-mu-2}.
	Analogously, we obtain
	\begin{align*}
		\langle \sqrt{n} J_n^2(u)\rangle 
		&= o_P(1) + \frac{1}{n} \sum_{j=\left\lceil \frac{\tau_n+L_n}{\tilde{L}}\right\rceil +1}^{\lfloor um\rfloor} \sum_{s,t=(j-1)\tilde{L}+1}^{(j-1)\tilde{L}+r}  Df(\mu^n_{\frac{t}{n}})\Cov(Z^j_{t,n}, Z^j_{s,n})Df(\mu^n_{\frac{s}{n}})^T.
	\end{align*}
	
	\underline{Step (iv):}
	We further simplify the asymptotic expression for $\langle \sqrt{n} J_n^1(u)\rangle$. 
	In \eqref{eqn:J-quadratic-1}, we want to replace $\mu^n_\frac{s}{n}$ by $\mu^n_\frac{t}{n}$. 
	Observe that
	\begin{align*}
		&\quad \left|\frac{1}{n} \sum_{j=\left\lceil \frac{\tau_n+L_n}{\tilde{L}}\right\rceil +1}^{\lfloor um\rfloor} \sum_{s,t=(j-1)\tilde{L}+r+1}^{j\tilde{L}}  Df(\mu^n_{\frac{t}{n}})\Cov(Z^j_{t,n}, Z^j_{s,n})\left[Df(\mu^n_{\frac{s}{n}}) - Df(\mu^n_{\frac{t}{n}})  \right]^T \right| \\
		&\leq \frac{C}{n} \sum_{t=1}^n \sum_{s=t-\tilde{L}}^{t+\tilde{L}} \rho^{|s-t|}\left\| Df(\mu^n_{\frac{s}{n}})- Df(\mu^n_{\frac{t}{n}})  \right\| \\
		&\leq \frac{C}{n} \sum_{t=1}^n \max_{|s-t|\leq \tilde{L}} \left\| \mu^n_{\frac{s}{n}}- \mu^n_{\frac{t}{n}}  \right\| \\
		&\leq \frac{C\tilde{L}_n}{n} \sum_{t=2}^n \left\| \mu^n_{\frac{t}{n}} - \mu^n_{\frac{t-1}{n}} \right\| \\
		&\leq C \tilde{L}_n n^{-\frac{1}{p}} \left( \sum_{t=2}^n \left\| \mu^n_{\frac{t}{n}} - \mu^n_{\frac{t-1}{n}} \right\|^p\right)^\frac{1}{p} \quad \leq C \tilde{L}_n n^{-\frac{1}{p}} \|\mu^n\|_{p-var}.
	\end{align*}
	By \eqref{eqn:pvar-finite}, and since $\tilde{L}_n \ll n^\epsilon$ the latter term is of order $\mathcal{O}(n^{\epsilon-\frac{1}{p}})=o(1)$ if $\epsilon$ is small enough.
	Hence,
	\begin{align}
		&\quad \langle \sqrt{n} J_n^1(u)\rangle \nonumber \\
		&= o_P(1) + \frac{1}{n} \sum_{j=\left\lceil \frac{\tau_n+L_n}{\tilde{L}}\right\rceil +1}^{\lfloor um\rfloor} \sum_{s,t=(j-1)\tilde{L}+r+1}^{j\tilde{L}}  Df(\mu^n_{\frac{t}{n}})\Cov(Z^j_{t,n}, Z^j_{s,n})Df(\mu^n_{\frac{t}{n}})^T \nonumber \\
		&= o_P(1) + \frac{1}{n} \sum_{j=\left\lceil \frac{\tau_n+L_n}{\tilde{L}}\right\rceil +1}^{\lfloor um\rfloor} \sum_{t=(j-1)\tilde{L}+r+1}^{j\tilde{L}} Df(\mu^n_{\frac{t}{n}}) \left[\sum_{s=(j-1)\tilde{L}+r+1}^{j\tilde{L}} \Cov(Z^j_{t,n}, Z^j_{s,n})\right]Df(\mu^n_{\frac{t}{n}})^T. \label{eqn:J-quadratic-2}
	\end{align}
	
	The covariance term may be simplified as follows. 
	Define
	\begin{align*}
		\Sigma_n(u) &= \sum_{h=-\infty}^\infty \Cov(G_n(u, \beps_h), G_n(u, \beps_0)) \quad \in \R^{d\times d},\qquad u\in[0,1].
	\end{align*}
	Note that \eqref{eqn:ergodic} implies that $\|\Cov(G_n(u, \beps_h), G_n(u, \beps_0))\| \leq C \rho^{|h|}$ for some constant $C$, where $\|\cdot\|$ denotes an arbitrary matrix norm. 
	In particular, $\Sigma_n(u)$ is well defined since the series is finite.
	
	Now, for any $t\geq (\tau_n+L_n)$, let $\delta_{t,n} = \sup_{|s-t|\leq \tilde{L}} \|G_n(\frac{t}{n},\beps_0) - G_n(\frac{s}{n}, \beps_0)\|_{L_q}$.
	Since $q>2$, we find that for $(j-1)\tilde{L}+1 \leq t \leq j\tilde{L}$,
	\begin{align*}
		&\quad \left\|  \sum_{t=(j-1)\tilde{L}+r+1}^{j\tilde{L}} \left[ \Sigma_n(\tfrac{t}{n}) -  \sum_{s=(j-1)\tilde{L}+r+1}^{j\tilde{L}} \Cov(Z^j_{t,n}, Z^j_{s,n}) \right] \right\|\\
		& = \left\|  \sum_{t=(j-1)\tilde{L}+r+1}^{j\tilde{L}} \left[ \Sigma_n(\tfrac{t}{n}) -  \sum_{s=(j-1)\tilde{L}+r+1}^{j\tilde{L}}  \Cov(G_n(\tfrac{t}{n}, \beps_{t-s}), G_n(\tfrac{s}{n}, \beps_0)) \right] \right\|\\
		& \leq \left\|  \sum_{t=(j-1)\tilde{L}+r+1}^{j\tilde{L}} \left[ \Sigma_n(\tfrac{t}{n}) -  \sum_{s=(j-1)\tilde{L}+r+1}^{j\tilde{L}}  \Cov(G_n(\tfrac{t}{n}, \beps_{t-s}), G_n(\tfrac{t}{n}, \beps_0)) \right] \right\| + \sum_{t=(j-1)\tilde{L}+r+1}^{j\tilde{L}} C \tilde{L} \delta_{t,n}\\
		& = \left\|  \sum_{t=(j-1)\tilde{L}+r+1}^{j\tilde{L}} \sum_{\substack{[s\leq(j-1)\tilde{L}+r]\\ \vee\, [s>j\tilde{L}]}}  \Cov(G_n(\tfrac{t}{n}, \beps_{t-s}), G_n(\tfrac{t}{n}, \beps_0)) \right\| + C \sum_{t=(j-1)\tilde{L}+r+1}^{j\tilde{L}} \tilde{L} \delta_{t,n}\\
		& \leq   \sum_{t=(j-1)\tilde{L}+r+1}^{j\tilde{L}}\sum_{\substack{[s\leq(j-1)\tilde{L}+r]\\ \vee\, [s>j\tilde{L}]}}  C \rho^{|t-s|}  + C\sum_{t=(j-1)\tilde{L}+r+1}^{j\tilde{L}} \tilde{L} \delta_{t,n} \\
		& \leq  C \sum_{t=(j-1)\tilde{L}+r+1}^{j\tilde{L}}  [ \rho^{|t-(j-1)\tilde{L}|} + \rho^{|t-j\tilde{L}|}]  + C \sum_{t=(j-1)\tilde{L}+r+1}^{j\tilde{L}} \tilde{L} \delta_{t,n} \\
		& \leq  C  + C\sum_{t=(j-1)\tilde{L}+r+1}^{j\tilde{L}} \tilde{L} \delta_{t,n}.
	\end{align*}
	Therefore,
	\begin{align*}
		&\quad \left\| \langle \sqrt{n} J_n^1(u)\rangle - \frac{1}{n} \sum_{j=\left\lceil \frac{\tau_n+L_n}{\tilde{L}}\right\rceil +1}^{\lfloor um\rfloor} \sum_{t=(j-1)\tilde{L}+r+1}^{j\tilde{L}} Df(\mu^n_{\frac{t}{n}}) \Sigma_n(\tfrac{t}{n}) Df(\mu^n_{\frac{t}{n}})^T \right\| \\
		&\leq o_P(1) + C \frac{\tilde{L}_n}{n} \sum_{t=\tau_n+L_n}^n \delta_{t,n} + C \frac{m_n}{n} \\
		&\leq o_P(1) + C \frac{\tilde{L}_n^2}{n} \sum_{t=\tau_n+L_n}^n \|\mu^n_{\frac{t}{n}} - \mu^n_{\frac{t-1}{n}} \| + C \frac{m_n}{n}\\
		&\leq o_P(1) +  C \tilde{L}_n^2 n^{-\frac{1}{p}} \|\mu^n\|_{p-var} + C \frac{m_n}{n}.
	\end{align*}
	The latter term tends to zero upon choosing $\epsilon$ small enough, since $\tilde{L}_n = \mathcal{O}(n^\epsilon)$.
	Note also that $m_n/n\to 0$.
	Finally, since $r_n \ll \tilde{L}_n$, we have shown that 
	\begin{align}
		\langle \sqrt{n} J_n^1(u)\rangle 
		&= o_P(1) + \frac{1}{n} \sum_{j=\left\lceil \frac{\tau_n+L_n}{\tilde{L}}\right\rceil +1}^{\lfloor um\rfloor} \sum_{t=(j-1)\tilde{L}+r+1}^{j\tilde{L}} Df(\mu^n_{\frac{t}{n}}) \Sigma_n(\tfrac{t}{n}) Df(\mu^n_{\frac{t}{n}})^T \nonumber\\
		&= o_P(1) + \frac{1}{n}  \sum_{t=1}^{\lfloor un\rfloor} Df(\mu^n_{\frac{t}{n}}) \Sigma_n(\tfrac{t}{n}) Df(\mu^n_{\frac{t}{n}})^T, \label{eqn:J-quadratic-3}
	\end{align}
	where the last step holds because $\tau_n+L_n\ll n$ and due to the boundedness of $Df(\mu^n_u)$ and $\Sigma_n(u)$, which is in turn implied by \eqref{eqn:ergodic} and \eqref{eqn:pvar-finite} for $q>2$.
	Note also that the $o_P(1)$ term in \eqref{eqn:J-quadratic-3} vanishes uniformly in $u\in[0,1]$.
	
	Analogously, we may show that
	\begin{align*}
		\langle \sqrt{n} J_n^2(u)\rangle 
		&= o_P(1) + \frac{1}{n} \sum_{j=\left\lceil \frac{\tau_n+L_n}{\tilde{L}}\right\rceil +1}^{\lfloor um\rfloor} \sum_{t=(j-1)\tilde{L}+1}^{(j-1)\tilde{L}+r} Df(\mu^n_{\frac{t}{n}}) \Sigma_n(\tfrac{t}{n}) Df(\mu^n_{\frac{t}{n}})^T  \longrightarrow 0,
	\end{align*}
	using again the boundedness, and the fact that $r_n\ll \tilde{L}_n$ such that the number of summands is asymptotically negligible. 
	This implies that $\sup_{u\in[0,1]} |J_n^2(u)| \pconv 0$ by the Burkholder-Davis-Gundy inequality for martingales.
	
	Observe that the function $u\mapsto \Sigma_n(u)$ has bounded $p$-variation, see \eqref{eqn:sigma-pvar}.
	Moreover, $u\mapsto Df(\mu^n_u)$ has bounded $p$-variation by assumptions \eqref{eqn:pvar-finite} and \eqref{eqn:ass-f}.
	Since both $\|\Sigma_n(u)\|$ and $\|Df(\mu^n_u)\|$ are bounded, we find that $u\mapsto Df(\mu^n_u)^T \Sigma_n(u) Df(\mu^n_u)$ has bounded $p$-variation.
	Thus, as in Lemma \ref{lem:I1},
	\begin{align*}
		\left|\frac{1}{n}  \sum_{t=1}^{\lfloor un\rfloor} Df(\mu^n_{\frac{t}{n}}) \Sigma_n(\tfrac{t}{n}) Df(\mu^n_{\frac{t}{n}})^T - \int_0^u Df(\mu^n_v)\Sigma_n(v)Df(\mu^n_v)^T\, dv\right| =\mathcal{O}(n^{-\frac{1}{p}}).
	\end{align*}
	Assumption \eqref{eqn:GnG} implies that $\mu^n_u\to \mu_u$ and $\Sigma_n(u)\to \Sigma(u)$ uniformly in $u\in[0,1]$. 
	The latter uniform convergence holds because $\Cov(G_n(u,\beps_0), G_n(u,\beps_h))\to \Cov(G(u,\beps_0), G(u,\beps_h))$ uniformly in $u$, and $\sup_{u\in[0,1]}\|\Cov(G_n(u,\beps_0), G_n(u,\beps_h))\|\leq C \rho^{|h|}$ by virtue of \eqref{eqn:ergodic}.
	Hence,
	\begin{align*}
		\sup_{u\in[0,1]} \left|\langle \sqrt{n} J_n^1(u)\rangle - \int_0^u Df(\mu_v)\Sigma(v)Df(\mu_v)^T\, dv\right| = o_P(1).
	\end{align*}
			
	\underline{Step (v):}
	Having at hand the limit of $\langle \sqrt{n} J_n^1(u)\rangle$, a functional central limit theorem for $\sqrt{n}J_n^1(u)$ holds if we can verify Lyapunov's condition.
	We have, for any $2<\delta<q$,
	\begin{align*}
		&\quad \sum_{j=\left\lceil \frac{\tau_n+L_n}{\tilde{L}}\right\rceil +1}^{m} \E \left[ \|Y^*_{j,n}/\sqrt{n}\|^\delta \,| \F^1_{j-1,n}  \right] \\
		&=  n^{-\frac{\delta}{2}} \sum_{j=\left\lceil \frac{\tau_n+L_n}{\tilde{L}}\right\rceil +1}^{m} \E \left[ \left\| \sum_{t=(j-1)\tilde{L}+r+1}^{j\tilde{L}} \chi_{t-L,n} Df(\hat{\mu}_{t-L,n}) Z_{t,n}^j \right\|^\delta \,\Bigg| \F^1_{j-1,n}  \right] \\
		&\leq C_f^\delta n^{-\frac{\delta}{2}} \sum_{j=\left\lceil \frac{\tau_n+L_n}{\tilde{L}}\right\rceil +1}^{m}  \E \left[ \left(\sum_{t=(j-1)\tilde{L}+r+1}^{j\tilde{L}} \left\|   Z_{t,n}^j \right\|\right)^\delta \,\Bigg| \F^1_{j-1,n}  \right] \\
		& \leq C_f^\delta n^{-\frac{\delta}{2}} \tilde{L}_n^{\delta-1} \sum_{j=\left\lceil \frac{\tau_n+L_n}{\tilde{L}}\right\rceil +1}^{m} \sum_{t=(j-1)\tilde{L}+r+1}^{j\tilde{L}} \E  \left\|  Z_{t,n}^j \right\|^\delta  \\
		& \leq C_f^\delta n^{-\frac{\delta}{2}} \tilde{L}_n^{\delta-1} \sum_{t=1}^{n} \E  \left\| G_n(\tfrac{t}{n}, \beps_0) \right\|^\delta  \\
		&= \mathcal{O}\left(n^{1-\frac{\delta}{2}} \tilde{L}_n^{\delta-1}\right),
	\end{align*}
	where we used that $\|G_n(u,\beps_0)\|_{L_q}$ is uniformly bounded.
	By choosing $\epsilon$ sufficiently small, and using $\tilde{L}_n\leq n^\epsilon$, the latter term tends to zero.
	Hence, the functional central limit theorem for martingales, e.g.\ \citep[Thm. VIII.3.33]{Jacod2003}, yields
	\begin{align*}
		\sqrt{n} J_n^1(u) \wconv B\left(\int_0^u Df(\mu_s)\Sigma(s)Df(\mu_s)^T \, ds\right).
	\end{align*}
	In the previous steps, we have established that $\sup_{u\in[0,1]} |I_n^3(u) - J_n^1(u)| = o_P(\sqrt{n})$, which yields the desired result.
\end{proof}

Theorem \ref{thm:CLT-integrated} is now a direct consequence of Lemmas \ref{lem:I1}, \ref{lem:I2-L}, and \ref{lem:I3}.

\subsection{Bootstrap inference}

Theorem \ref{thm:Q-estimation} is a special case of the following more general theorem about consistent estimation of the limiting variance. 
In contrast to the central limit theorem for $M_n(u)$, here, we only require the weaker assumption
\begin{align}
\sum_{t=\tau_n}^n \|\hat{\mu}_{t,n} - \mu^n_{\frac{t}{n}}\|^2  &= o(n^{1-\kappa}), \tag{C.5*}\label{eqn:ass-mu-1-weak}
\end{align} 
for some $\kappa\in(0,\frac{1}{2}]$.

\begin{theorem}\label{thm:bootstrap-new}
	Let the conditions of Theorem \ref{thm:CLT-integrated} hold for some $q>4$, replacing \eqref{eqn:ass-mu-1} by \eqref{eqn:ass-mu-1-weak}. 
	Choose some $b_n\to\infty$ such that 
	\begin{align*}
		 b_n\ll n^{\frac{2}{3} \frac{1}{\max(p,2)}}, \quad b_n \ll n^\frac{\kappa q-2}{q+2}.
	\end{align*}
	Then, as $n\to\infty$,
	\begin{align*}
		Q_n(u) 
		&= \frac{1}{n} \sum_{t=\tau_n+L_n}^{\lfloor n u \rfloor-b_n} \frac{1}{b_n} \left[ Df(\hat{\mu}_{t-L,n})\sum_{i=1}^{b_n}  (X_{t+i,n} - \hat{\mu}_{t-L,n})  \right]^2\\
		&\pconv Q(u) = \int_0^u Df(\mu_v) \Sigma(v) Df(\mu_v)^T\, dv.
	\end{align*}
	The convergence holds uniformly in $u\in[0,1]$ since $Q_n$ is monotone.
\end{theorem}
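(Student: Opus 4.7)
The plan is to decompose $Q_n(u) = \widetilde{Q}_n(u) + [Q_n(u) - \widetilde{Q}_n(u)]$, where $\widetilde{Q}_n(u)$ is the oracle version of $Q_n(u)$ obtained by replacing the pilot $\hat\mu_{t-L,n}$ with $\mu^n_{t/n}$ both inside $Df$ and as the centering of the inner block sum. I would then prove $\mathbb E\widetilde{Q}_n(u) \to Q(u)$ and $\mathrm{Var}\,\widetilde{Q}_n(u) \to 0$ separately, and finally bound the residual $Q_n - \widetilde{Q}_n$ using the pilot-rate bound \eqref{eqn:ass-mu-1-weak}.

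For the bias of $\widetilde{Q}_n(u)$, the key computation is that
\[
\frac{1}{b_n}\,\mathbb E\!\left[\sum_{i=1}^{b_n}(X_{t+i,n} - \mu^n_{(t+i)/n})\right]\!\left[\sum_{i=1}^{b_n}(X_{t+i,n} - \mu^n_{(t+i)/n})\right]^T = \Sigma_n(t/n) + e_{t,n},
\]
where the per-block error $e_{t,n}$ decomposes into boundary autocovariance tails of size $O(\rho^{b_n})$ from \eqref{eqn:ergodic}, and a within-block non-stationarity contribution bounded by $C\sup_{|r-t|\leq b_n}\|\mu^n_{r/n}-\mu^n_{t/n}\|$ via the covariance Lipschitz estimate \eqref{eqn:sigma-pvar}. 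Summing over $t$ and using the $p$-variation telescoping of Lemma \ref{lem:I2-L} yields a total bias of order $b_n^p/n$ for $p<2$ and $b_n^2 n^{-2/p}$ for $p\geq 2$; combined with the Riemann-sum estimate of Lemma \ref{lem:I1} this forces the first bandwidth constraint $b_n\ll n^{2/(3\max(p,2))}$. The surviving leading term $\tfrac{1}{n}\sum_{t\leq \lfloor nu\rfloor} Df(\mu^n_{t/n})\Sigma_n(t/n)Df(\mu^n_{t/n})^T$ then converges to $Q(u)$ by \eqref{eqn:GnG}, \eqref{eqn:sigma-limit} and dominated convergence.

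For the variance of $\widetilde{Q}_n(u)$, I would employ the coupling construction from the proof of Theorem \ref{thm:CLT-linear}: summands indexed by $t$ and $s$ with $|s-t|$ larger than a fixed multiple of $b_n$ can be decoupled by replacing part of the innovation sequence at a geometric $L_q$ cost from \eqref{eqn:ergodic}, while overlapping summands ($|s-t|\leq b_n$) are merely bounded in $L_2$ using the fourth-moment control afforded by $q>4$. This yields $\mathrm{Var}\,\widetilde Q_n(u) = O(b_n/n) \to 0$. To pass from $\widetilde Q_n$ to $Q_n$, write $Z_{t,n}=\tfrac{1}{\sqrt{b_n}}\sum_i(X_{t+i,n}-\mu^n_{(t+i)/n})$ and $W_{t,n}=Z_{t,n}+\sqrt{b_n}(\bar\mu^n_t-\hat\mu_{t-L,n})$, and expand $[Df(\hat\mu_{t-L,n})W_{t,n}]^2-[Df(\mu^n_{t/n})Z_{t,n}]^2$ as a sum of cross terms. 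Each cross term is controlled by Hölder's inequality that pairs \eqref{eqn:ass-mu-1-weak}, giving $\sum_t\|\hat\mu_{t-L,n}-\mu^n_{t/n}\|^2 = o(n^{1-\kappa})$, with the $L_q$ bounds on $Z_{t,n}$ inherited from \eqref{eqn:pvar-finite}, and uses the local Lipschitz property of $Df$ coming from boundedness of $D^2f$ on $\mathcal{M}_\delta$ through \eqref{eqn:ass-f} and \eqref{eqn:ass-mu-2}. Matching the Hölder exponents produces the second bandwidth restriction $b_n\ll n^{(\kappa q-2)/(q+2)}$.

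The main obstacle is the careful bookkeeping required to combine the three error sources above while extracting the two explicit bandwidth exponents; in particular, trading off the moment condition $q>4$ against the weaker pilot rate $\kappa$ in \eqref{eqn:ass-mu-1-weak} is what produces the somewhat intricate ratio $(\kappa q-2)/(q+2)$. Uniformity of the convergence in $u$ is then obtained from pointwise convergence through a Pólya-type argument, exploiting both the monotonicity of $u\mapsto Q_n(u)$ asserted in the theorem and the continuity of the limit $Q$.
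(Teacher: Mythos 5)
Your proposal follows essentially the same route as the paper's proof: an oracle decomposition of the squared block sums, control of the cross terms via H\"older's inequality paired with the weakened pilot rate \eqref{eqn:ass-mu-1-weak}, a bias computation identifying $Df(\mu^n_{t/n})\Sigma_n(\tfrac{t}{n})Df(\mu^n_{t/n})^T$ plus boundary and within-block non-stationarity corrections, and a coupling/physical-dependence covariance bound for the variance, with uniformity from monotonicity. Two minor rate misstatements, neither of which affects the conclusion: the boundary autocovariance tail per block is $\mathcal{O}(1/b_n)$ (the missed mass $\sum_{i=1}^{b_n}\sum_{j\notin[1,b_n]}\rho^{|i-j|}=\mathcal{O}(1)$ divided by $b_n$), not $\mathcal{O}(\rho^{b_n})$; and your variance bound $\mathcal{O}(b_n/n)$ would require a genuine $L_4$ moment inequality for the block sums, whereas the crude bound $\|\zeta_{t,n}\|_{L_2}=\mathcal{O}(b_n)$ used in the paper yields only $\mathcal{O}(b_n^3/n)$ --- both vanish under the stated constraints on $b_n$.
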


\begin{proof}[Proof of Theorem \ref{thm:bootstrap-new}]
	\underline{Step (i):}
	As a first step, we are concerned with the estimators $\hat{\mu}_{t-L, n}$.
	We decompose 
	\begin{align*}
		&\quad \sum_{i=1}^{b_n} Df(\hat{\mu}_{t-L,n}) (X_{t+i,n} - \hat{\mu}_{t-L,n})  \\
		&= \sum_{i=1}^{b_n} \Bigg\{Df(\mu^n_\frac{t}{n})  \left[G_n(\tfrac{t}{n},\beps_{t+i}) - \mu^n_{t,n}\right] +  \left[Df(\hat{\mu}_{t-L,n}) -Df(\mu^n_\frac{t}{n}) \right] \left[ G_n(\tfrac{t}{n}, \beps_{t+i}) - \mu^n_{\frac{t}{n}} \right] \\
		&\qquad + Df(\hat{\mu}_{t-L,n})\left[ G_n(\tfrac{t+i}{n}, \beps_{t+i}) - G_n(\tfrac{t}{n}, \beps_{t+i}) \right]
		+ Df(\hat{\mu}_{t-L,n})\left[ \mu^n_{\frac{t}{n}} - \hat{\mu}_{t-L,n} \right] \Bigg \}.
	\end{align*}
	With probability tending to one, it holds that $\hat{\mu}_{t-L,n}\in\mathcal{M}_\delta$ for all $t=\tau_n+L_n,\ldots, n$ by virtue of \eqref{eqn:ass-mu-2}. 
	On this event, we may exploit the regularity and boundedness of $f$ such that 
	\begin{align*}
		&\quad \left|\sum_{i=1}^{b_n} Df(\hat{\mu}_{t-L,n}) (X_{t+i,n} - \hat{\mu}_{t-L,n})  -  \sum_{i=1}^{b_n} Df(\mu^n_\frac{t}{n}) \left[G_n(\tfrac{t}{n},\beps_{t+i}) - \mu^n_{t,n}\right] \right| \\
		&\leq C \sum_{i=1}^{b_n} Z_{t,i,n} \|\hat{\mu}_{t-L,n} - \mu^n_{\frac{t}{n}}\| + \|  G_n(\tfrac{t+i}{n}, \beps_{t+i}) - G_n(\tfrac{t}{n}, \beps_{t+i}) \|\quad =: \Delta_{t,n}, 
	\end{align*}
	for $Z_{t,i,n} = \|G_n(\tfrac{t}{n}, \beps_{t+i}) - \mu^n_{\frac{t}{n}} \| + 1$.
	Now, via the inequality $|a^2-b^2|\leq |a-b| |a+b|$, we obtain
	\begin{align*}
		&\Bigg|\frac{1}{n} \sum_{t=\tau_n+L_n}^{\lfloor n u \rfloor-b_n} \frac{1}{b_n}\left[\sum_{i=1}^{b_n}Df(\hat{\mu}_{t-L,n}) (X_{t+i,n} - \hat{\mu}_{t-L,n}) \right]^2 \\
		&\qquad- \frac{1}{n} \sum_{t=\tau_n+L_n}^{\lfloor n u \rfloor-b_n} \frac{1}{b_n}\left[\sum_{i=1}^{b_n} Df(\mu^n_\frac{t}{n}) \left[G_n(\tfrac{t}{n},\beps_{t+i}) - \mu^n_{t,n}\right] \right]^2\Bigg| \\
		&\leq  \frac{1}{n} \frac{1}{b_n} \sum_{t=\tau_n+L_n}^{\lfloor n u \rfloor-b_n} \Delta_{t,n} \left[ \Delta_{t,n} + \left|\sum_{i=1}^{b_n} Df(\mu^n_\frac{t}{n}) \left[G_n(\tfrac{t}{n},\beps_{t+i}) - \mu^n_{t,n}\right] \right| \right] \\
		&\leq \frac{1}{n} \frac{1}{b_n} \sum_{t=\tau_n+L_n}^{\lfloor n u \rfloor-b_n} \Delta_{t,n}^2 
		+  \sqrt{\frac{1}{n} \frac{1}{b_n} \sum_{t=\tau_n+L_n}^{\lfloor n u \rfloor-b_n} \Delta_{t,n}^2} \sqrt{\frac{1}{n} \sum_{t=\tau_n+L_n}^{\lfloor n u \rfloor-b_n} \frac{1}{b_n}\left[\sum_{i=1}^{b_n} Df(\mu^n_\frac{t}{n})\left[G_n(\tfrac{t}{n},\beps_{t+i}) - \mu^n_{t,n}\right] \right]^2}.
	\end{align*}
	It thus suffices to show that
	\begin{align}
		\frac{1}{n} \frac{1}{b_n} \sum_{t=\tau_n+L_n}^{\lfloor n u \rfloor-b_n} \Delta_{t,n}^2 \pconv 0, \label{eqn:bootstrap-delta}
	\end{align}
	and
	\begin{align}
		\frac{1}{n} \sum_{t=\tau_n+L_n}^{\lfloor n u \rfloor-b_n} \frac{1}{b_n}\left[\sum_{i=1}^{b_n}Df(\mu^n_\frac{t}{n}) \left[G_n(\tfrac{t}{n},\beps_{t+i}) - \mu^n_{t,n}\right] \right]^2 \pconv \int_0^u Df(\mu_v)^T\Sigma(v) Df(\mu_v)\, dv.\label{eqn:bootstrap-int}
	\end{align}
	
	\underline{Step (ii):}
	Regarding the term \eqref{eqn:bootstrap-delta}, we find that
	\begin{align}
		&\quad \frac{1}{n} \frac{1}{b_n} \sum_{t=\tau_n+L_n}^{\lfloor n u \rfloor-b_n} \Delta_{t,n}^2 \nonumber \\
		&\leq \frac{C}{n} \frac{1}{b_n} \sum_{t=\tau_n+L_n}^{\lfloor n u \rfloor-b_n} \left\{ \left[\sum_{i=1}^{b_n} Z_{t,i,n} \|\hat{\mu}_{t-L,n} - \mu^n_{\frac{t}{n}}\|\right]^2 + \left[ \sum_{i=1}^{b_n} \|  G_n(\tfrac{t+i}{n}, \beps_{t+i}) - G_n(\tfrac{t}{n}, \beps_{t+i}) \|\right]^2 \right\} \nonumber \\
		&\leq  \frac{C}{n} \sum_{t=\tau_n+L_n}^{\lfloor n u \rfloor-b_n} \left\{ \|\hat{\mu}_{t-L,n} - \mu^n_{\frac{t}{n}}\|^2 \sum_{i=1}^{b_n} |Z_{t,i,n}|^2  + \sum_{i=1}^{b_n} \|  G_n(\tfrac{t+i}{n}, \beps_{t+i}) - G_n(\tfrac{t}{n}, \beps_{t+i}) \|^2\right\} \nonumber \\
		&\leq  \frac{C}{n} \sum_{t=\tau_n+L_n}^{\lfloor n u \rfloor-b_n} \left\{ \|\hat{\mu}_{t-L,n} - \mu^n_{\frac{t}{n}}\|^2 b_n \max_{i=1,\ldots, b_n}|Z_{t,i,n}|^2  + \sum_{i=1}^{b_n} \left[ \sum_{j=1}^i\|  G_n(\tfrac{t+j}{n}, \beps_{t+i}) - G_n(\tfrac{t+j-1}{n}, \beps_{t+i}) \|\right]^2\right\} \nonumber \\
		\begin{split}
		&\leq \frac{C b_n}{n} \max_{\substack{t=\tau_n+L_n,\ldots, n\\ i=1,\ldots, b_n}} |Z_{t,i,n}|^2 \sum_{t=\tau_n+L_n}^{\lfloor n u \rfloor-b_n} \|\hat{\mu}_{t-L,n} - \mu^n_{\frac{t}{n}}\|^2  \\
		&\quad + \frac{C b_n}{n} \sum_{t=\tau_n+L_n}^{\lfloor n u \rfloor-b_n} \sum_{i,j=1}^{b_n} \|  G_n(\tfrac{t+j}{n}, \beps_{t+i}) - G_n(\tfrac{t+j-1}{n}, \beps_{t+i}) \|^2.
		\end{split} \label{eqn:bootstrap-delta-1}
	\end{align}
	The second term may be bounded by noting that, for $p\leq 2$,
	\begin{align*}
		&\quad \frac{ b_n}{n} \sum_{t=\tau_n+L_n}^{\lfloor n u \rfloor-b_n} \sum_{i,j=1}^{b_n} \E\|  G_n(\tfrac{t+j}{n}, \beps_{t+i}) - G_n(\tfrac{t+j-1}{n}, \beps_{t+i}) \|^2 \\
		&\leq \frac{ b_n^3}{n} \sum_{t=\tau_n+L_n}^{\lfloor n u \rfloor}  \|  G_n(\tfrac{t}{n}, \beps_{0}) - G_n(\tfrac{t-1}{n}, \beps_{0}) \|_{L_2}^2 \\
		&\leq \frac{ b_n^3}{n} \sum_{t=1}^{n}  \|  G_n(\tfrac{t}{n}, \beps_{0}) - G_n(\tfrac{t-1}{n}, \beps_{0}) \|_{L_q}^p \qquad \leq \frac{ b_n^3}{n} \|G_n\|_{p-var}^p.
	\end{align*}
	In the second inequality, we used that $q>2$, $p<2$, and that $\|G_n(u, \beps_0)\|_{L_q}$ is uniformly bounded due to \eqref{eqn:pvar-finite}.
	If $p>2$, we find that
	\begin{align*}
		\sum_{t=1}^{n}  \|  G_n(\tfrac{t}{n}, \beps_{0}) - G_n(\tfrac{t-1}{n}, \beps_{0}) \|_{L_q}^2 
		&\leq n^{1-\frac{2}{p}} \left(\sum_{t=1}^{n}  \|  G_n(\tfrac{t}{n}, \beps_{0}) - G_n(\tfrac{t-1}{n}, \beps_{0}) \|_{L_q}^p \right)^\frac{2}{p} \\
		&\leq n^{1-\frac{2}{p}} \|G_n\|_{p-var}^2,
	\end{align*}
	such that the second term in \eqref{eqn:bootstrap-delta-1} is of order $\mathcal{O}(b_n^3n^{-\frac{2}{\max(p,2)}})$.
	
	Concerning the first term in \eqref{eqn:bootstrap-delta-1}, we note that 
	\begin{align*}
		\max_{\substack{t=\tau_n+L_n,\ldots, n}} |Z_{t,i,n}| 
		&\leq |Z_{1,i,n}| + \sum_{t=2}^n |Z_{t,i,n} - Z_{t-1,i,n}|, \\ 
		\left\|\max_{\substack{t=\tau_n+L_n,\ldots, n}} |Z_{t,i,n}|\right\|_{L_q} 
		&\leq 1 + 2\|G_n(\tfrac{1}{n}, \beps_{t+i})\|_{L_q} + 2\sum_{t=2}^n \|G(\tfrac{t}{n},\beps_{t+i}) - G(\tfrac{t-1}{n},\beps_{t+i})\|_{L_q} \\
		&\leq 1+ 2 C_G + n^{1-\frac{1}{p}} \|G_n\|_{p-var} \\
		&\leq 2n^{1-\frac{1}{p}} C_G + 1.
	\end{align*}
	Since $q>2$, this yields
	\begin{align*}
		\left\|\max_{\substack{t=\tau_n+L_n,\ldots, n}} |Z_{t,i,n}|^2\right\|_{L_{\frac{q}{2}}}  = \left\|\max_{\substack{t=\tau_n+L_n,\ldots, n}} |Z_{t,i,n}|\right\|_{L_{q}}^2
		&\leq C\,n^{2-\frac{2}{p}},
	\end{align*}
	for some finite $C$.
	An alternative upper bound, which is sharper for $p>2$, is given by
	\begin{align*}
		\left\|\max_{\substack{t=\tau_n+L_n,\ldots, n}} |Z_{t,i,n}|^2\right\|_{L_{\frac{q}{2}}} 
		&\leq \sum_{t=1}^n \|Z_{t,i,n}\|_{L_q}^2 \leq C n,
	\end{align*}
	such that 
	\begin{align*}
		\left\|\max_{\substack{t=\tau_n+L_n,\ldots, n}} |Z_{t,i,n}|^2\right\|_{L_{\frac{q}{2}}} 
		&\leq  C n^{2 - \frac{2}{\min(p,2)}}.
	\end{align*}
	
	Thus, the union bound and Markov's inequality yield that 
	\begin{align*}
		\max_{i=1,\ldots, b_n} \max_{t=1,\ldots, n} |Z_{t,i,n}|^2 =\mathcal{O}_P( |n^{2-\frac{2}{\min(p,2)}} b_n|^{\frac{2}{q}}).
	\end{align*}
	Furthermore, as in the proof of Lemma \ref{lem:I2-L}, 
	\begin{align*}
		\sum_{t=\tau_n+L_n}^{\lfloor n u \rfloor-b_n} \|\hat{\mu}_{t-L,n} - \mu^n_{\frac{t}{n}}\|^2 
		&\leq o_P(n^{1-\kappa}) + \mathcal{O}_P(L_n^{\min(p,2)} n^{1-\frac{2}{\max(2,p)}}).
	\end{align*}
	Plugging this into \eqref{eqn:bootstrap-delta-1}, we have thus shown that
	\begin{align*}
		\frac{1}{n} \frac{1}{b_n} \sum_{t=\tau_n+L_n}^{\lfloor n u \rfloor-b_n} \Delta_{t,n}^2 
		&= \frac{b_n}{n} \mathcal{O}_P(|n^{2-\frac{2}{\min(p,2)}}b_n|^\frac{2}{q}) \left[o_P(n^{1-\kappa}) + \mathcal{O}_P(L_n^{\min(p,2)} n^{1-\frac{2}{\max(p,2)}}) \right] + \mathcal{O}_P\left(b_n^3 n^{ -\frac{2}{\max(p,2)} }\right) 
	\end{align*}
	For $p\leq 2$, this bound reduces to 
	\begin{align*}
		\frac{1}{n} \frac{1}{b_n} \sum_{t=\tau_n+L_n}^{\lfloor n u \rfloor-b_n} \Delta_{t,n}^2 
		&= \mathcal{O}_P \left( n^{\frac{4}{q}(1-\frac{1}{p})-\kappa} b_n^{1+\frac{2}{q}}   \right)
		+ \mathcal{O}_P \left( n^{\frac{4}{q}(1-\frac{1}{p})-\kappa} b_n^{1+\frac{2}{q}} n^{\kappa-1}L_n^p   \right)
		+ \mathcal{O}_P\left(b_n^3 n^{ -1 }\right) \\
		&\leq \mathcal{O}_P \left( n^{\frac{2}{q}-\kappa} b_n^{1+\frac{2}{q}}   \right)
		+ \mathcal{O}_P \left( n^{\frac{2}{q}-\kappa} b_n^{1+\frac{2}{q}} n^{\kappa-1}L_n^p   \right)
		+ \mathcal{O}_P\left(b_n^3 n^{ -1 }\right),
	\end{align*}
	which tends to zero by the assumptions $b_n\ll n^{-\frac{1}{3}}$, $b_n\ll n^{\frac{\kappa q -2}{q+2}}$, and $L_n \ll n^\frac{1}{2p}$.
	For $p>2$, we find that
	\begin{align*}
		\frac{1}{n} \frac{1}{b_n} \sum_{t=\tau_n+L_n}^{\lfloor n u \rfloor-b_n} \Delta_{t,n}^2 
		&= \mathcal{O}_P\left( n^{\frac{2}{q}-\kappa} b_n^{1+\frac{2}{q}} \right) 
		+ \mathcal{O}_P\left( n^{\frac{2}{q}-\kappa} b_n^{1+\frac{2}{q}} n^{\kappa-1} L_n^2\right)
		+ \mathcal{O}_P\left(b_n^3 n^{ -\frac{2}{p} }\right),
	\end{align*}
	which also tends to zero as $n\to\infty$.

	\underline{Step (iii):}
	It remains to study the term \eqref{eqn:bootstrap-int}.
	Introduce the random variables
	\begin{align*}
		\zeta_{t,n}  &= \frac{1}{b_n}\left[\sum_{i=1}^{b_n} Df(\mu^n_\frac{t}{n}) \left[G_n(\tfrac{t}{n},\beps_{t+i}) - \mu^n_{t,n}\right] \right]^2,
	\end{align*}
	such that we are led to study the term $\frac{1}{n} \sum_{t=\tau_n+L_n}^{\lfloor n u \rfloor-b_n} \zeta_{t,n}$.
	To bound the covariances of the $\zeta_{t,n}$, we write $\zeta_{t,n} = \zeta_{t,n}(\beps_{t+b_n})$, such that \eqref{eqn:ergodic} may be exploited.
	Let $b\geq 0$ be an integer, and recall that
	\begin{align*}
		\beps^*_{t+b_n, b+ b_n} &=(\epsilon_{t+b_n},\ldots, \epsilon_{t-b+1}, \epsilon^*_{t-b},\ldots) \in\R^\infty, \\
		\beps^*_{t+i, b +i} &= (\epsilon_{t+i},\ldots, \epsilon_{t-b+1}, \epsilon^*_{t-b },\ldots) \in\R^\infty,
	\end{align*}
	where the $\epsilon^*_j$ are an independent copy of the $\epsilon_j$.
	Then
	\begin{align*}
		&\quad \|\zeta_{t,n}(\beps_{t+b_n}) - \zeta_{t,n}(\beps^*_{t+b_n, b+b_n})\|_{L_2} \\
		&=\frac{1}{b_n} \left\| \left[\sum_{i=1}^{b_n}Df(\mu^n_\frac{t}{n}) \left[G_n(\tfrac{t}{n},\beps_{t+i}) - \mu^n_{t,n}\right] \right]^2 - \left[\sum_{i=1}^{b_n} Df(\mu^n_\frac{t}{n}) \left[G_n(\tfrac{t}{n},\beps^*_{t+i, b+i}) - \mu^n_{t,n}\right] \right]^2 \right\|_{L_2} \\
		&\leq \frac{C}{b_n} \left\| \sum_{i=1}^{b_n}  \left\|G_n(\tfrac{t}{n},\beps_{t+i}) - G_n(\tfrac{t}{n},\beps^*_{t+i, b+i}) \right\| \sum_{j=1}^{b_n} \left[  \left\|G_n(\tfrac{t}{n},\beps_{t+j})\right\| + \left\|G_n(\tfrac{t}{n},\beps^*_{t+j, b+j}) \right\| \right] \right\|_{L_2}
		\intertext{via the inequality $|x^2-y^2| \leq |x-y||x+y|$,}
		& \leq \frac{C}{b_n}  \sum_{i=1}^{b_n}  \left\|G_n(\tfrac{t}{n},\beps_{t+i}) - G_n(\tfrac{t}{n},\beps^*_{t+i, b+i}) \right\|_{L_4} \left\| \sum_{j=1}^{b_n} \left\|G_n(\tfrac{t}{n},\beps_{t+j})\right\|+ \left\|G_n(\tfrac{t}{n},\beps^*_{t+i, b+i}) \right\| \right\|_{L_4}
		\intertext{by Hölder's inequality,}
		& \leq \frac{C}{b_n}  \sum_{i=1}^{b_n}  \left\|G_n(\tfrac{t}{n},\beps_{t+i}) - G_n(\tfrac{t}{n},\beps^*_{t+i, b+i}) \right\|_{L_q}  \sum_{j=1}^{b_n} \left[\left\|G_n(\tfrac{t}{n},\beps_{t+j})\right\|_{L_q} + \left\|G_n(\tfrac{t}{n},\beps^*_{t+j, b+j}) \right\|_{L_q} \right]\\
		&\leq C b_n \rho^{b}.
	\end{align*}
	In the last step, we used $q>4$, the boundedness of $\|G_n(u,\beps_0)\|_{L_q}$ due to \eqref{eqn:pvar-finite}, and the ergodicity \eqref{eqn:ergodic}.
	
	Moreover, it holds that
	\begin{align*}
		\|\zeta_{t,n}\|_{L_2} 
		&=  \left\| \frac{1}{\sqrt{b_n}} \sum_{i=1}^{b_n} Df(\mu^n_\frac{t}{n}) \left[G_n(\tfrac{t}{n},\beps_{t+i}) - \mu^n_{t,n}\right] \right\|_{L_4}^2 \\
		&\leq \frac{1}{b_n} \left[ \sum_{i=1}^{b_n} \left\| Df(\mu^n_\frac{t}{n}) \left[G_n(\tfrac{t}{n},\beps_{t+i}) - \mu^n_{t,n}\right]  \right\|_{L_4} \right]^2 \\
		&\leq C b_n,
	\end{align*}
	because $\|G_n(\frac{t}{n}, \beps_0)\|_{L_4}$ is bounded by virtue of \eqref{eqn:pvar-finite}, and $Df(\mu^n_{t/n})$ is bounded by virtue of \eqref{eqn:ass-f}.
	
	Now consider $s,t$ such that $s<t$.
	If $|s-t|>b_n$, then $\zeta_{s,n}$ is independent of $\beps^*_{t+b_n, |s-t|+b_n}$, such that
	\begin{align*}
		\left|\Cov(\zeta_{t,n}, \zeta_{s,n})\right| 
		&= \left|\Cov\left(\zeta_{t,n}(\beps_{t+b_n}) - \zeta_{t,n}(\beps^*_{t+b_n, |s-t|+b_n}), \zeta_{s,n}\right) \right| \\
		&\leq \|\zeta_{t,n}(\beps_{t+b_n}) - \zeta_{t,n}(\beps^*_{t+b_n, |s-t|+b_n})\|_{L_2} \|\zeta_{s,n}\|_{L_2}\\
		&\leq C b_n^2 \rho^{|s-t|}.
	\end{align*}
	These findings may be summarized as 
	\begin{align}
		\left|\Cov(\zeta_{t,n}, \zeta_{s,n})\right| \leq C b_n^2 \rho^{(|s-t|-b_n) \vee 0}. \label{eqn:bootstrap-zeta-cov}
	\end{align}
	In particular,
	\begin{align*}
		\Var \left( \frac{1}{n} \sum_{t=\tau_n+L_n}^{\lfloor n u \rfloor-b_n} \zeta_{t,n} \right) 
		&= \frac{1}{n^2}  \sum_{t=\tau_n+L_n}^{\lfloor n u \rfloor-b_n} \Cov(\zeta_{t,n}, \zeta_{s,n}) \\
		&\leq \frac{Cb_n^2}{n^2} \sum_{s,t=1}^n \rho^{(|s-t|-b_n) \vee 0} \\
		&\leq \frac{Cb_n^2}{n^2}  \sum_{t=1}^n \left[b_n + \frac{2}{1-\rho}\right] \quad = \mathcal{O}\left( \frac{b_n^3}{n} \right),
	\end{align*}
	which tends to zero by assumption.
	Thus, $\frac{1}{n} \sum_{t=\tau_n+L_n}^{\lfloor n u \rfloor-b_n} [\zeta_{t,n} - \E \zeta_{t,n}] \pconv 0$ as $n\to\infty$.
	
	\underline{Step (iv):}
	The expected value of $\zeta_{t,n}$ may be computed as
	\begin{align*}
		&\quad \E \zeta_{t,n} \\ 
		&= Df(\mu^n_{\frac{t}{n}}) \left[ \frac{1}{b_n} \sum_{i,j=1}^{b_n} \Cov\left(G_n(\tfrac{t}{n}, \beps_i), G_n(\tfrac{t}{n}, \beps_j) \right) \right] Df(\mu^n_{\frac{t}{n}})^T \\
		&= Df(\mu^n_{\frac{t}{n}}) \left[ \frac{1}{b_n} \sum_{i=1}^{b_n} \sum_{j=-\infty}^\infty \Cov\left(G_n(\tfrac{t}{n}, \beps_i), G_n(\tfrac{t}{n}, \beps_j) \right) \right] Df(\mu^n_{\frac{t}{n}})^T + \mathcal{O}\left( \frac{1}{b_n} \sum_{i=1}^{b_n} [\rho^i + \rho^{b_n-i}] \right) \\
		&= Df(\mu^n_{\frac{t}{n}}) \left[\sum_{h=-\infty}^\infty \Cov\left(G_n(\tfrac{t}{n}, \beps_h), G_n(\tfrac{t}{n}, \beps_0) \right) \right] Df(\mu^n_{\frac{t}{n}})^T + \mathcal{O}(\tfrac{1}{b_n}) \\
		&= Df(\mu^n_{\frac{t}{n}}) \Sigma_n(\tfrac{t}{n}) Df(\mu^n_{\frac{t}{n}})^T + \mathcal{O}(\tfrac{1}{b_n}),
	\end{align*}
	using the geometric ergodicity \eqref{eqn:ergodic} in the second step.
	Note that the remainder term is of order $1/b_N$, uniformly in $t$ and $n$, because $\|Df(\mu^n_{\frac{t}{n}})\|\leq C_f$ for all $t$, and $n$ large enough.
	
	Just as in the proof of Lemma \ref{lem:I3}, step (iv) therein, we find that 
	\begin{align*}
		\left|\frac{1}{n} \sum_{t=\tau_n+L_n}^{\lfloor n u \rfloor-b_n}  \E \zeta_{t,n} - \int_0^u Df(\mu^n_{v}) \Sigma_n(v) Df(\mu^n_{v})^T\, dv\right| \to 0,
	\end{align*}
	as $n\to\infty$.
	Moreover, since $\|G_n(\frac{t}{n},\beps_0) - G(\frac{t}{n},\beps_0)\|_{L_2}\to 0$ as $n\to\infty$, uniformly in $t$, the dominated convergence theorem yields that 
	\begin{align*}
		\left\|\Sigma(\tfrac{t}{n}) - \sum_{h=-\infty}^\infty \Cov\left(G_n(\tfrac{t}{n}, \beps_h), G_n(\tfrac{t}{n}, \beps_0) \right) \right\| \to 0,
	\end{align*}
	uniformly in $t$.
	Furthermore, $\|Df(\mu^n_{t/n}) - Df(\mu_{t/n})\|\to 0$ uniformly in $t$, such that 
	\begin{align*}
		\frac{1}{n}\sum_{t=\tau_n+L_n}^{\lfloor n u \rfloor-b_n}\E \zeta_{t,n}
		&= \int_0^u Df(\mu^n_{v}) \Sigma_n(v) Df(\mu^n_{v})^T\, dv + o(1) \\
		&= \int_0^u Df(\mu_{v}) \Sigma(v) Df(\mu_{v})^T \, dv + o(1).
	\end{align*}
	Together with the variance bound derived in step (iii), this establishes \eqref{eqn:bootstrap-int} and thus completes the proof.	
\end{proof}

Theorem \ref{thm:Q-estimation} is a consequence of Theorem \ref{thm:bootstrap-new} for $\kappa=\frac{1}{2}$.
Regarding the bootstrap consistency, Theorem \ref{thm:bootstrap}, note that $Q_n(u)$ is precisely the variance process of $\widehat{M}_n(u)$.

\begin{proof}[Proof of Theorem \ref{thm:bootstrap}]
	Conditionally on $\mathbb{X}_n$, $\widehat{M}_n(u)$ is a zero mean Gaussian process with independent increments, and variance function $\Var(\widehat{M}_n(u)|\mathbb{X}_n) = Q_n(u)$ for $Q_n$ as in Theorem \ref{thm:bootstrap-new}.
	It may hence be represented as $\widehat{M}_n(u) = B(Q_n(u))$ for a standard Brownian motion $B$ which is independent of $\mathbb{X}$.
	The uniform convergence of $Q_n(u)\to Q(u) = \int_0^u Df(\mu_v) \Sigma(v) Df(\mu_v)^T\, dv$ thus yields $\sup_{u\in[0,1]} | B(Q_n(u))-B( Q(u))| \pconv 0$.
	This yields weak convergence of $\widehat{M}_n(u)$ in probability.
	That is, for any probability metric $d$ which metricizes weak convergence on the Skorokhod space, we have $d(\mathcal{L}(M), \mathcal{L}(\widehat{M}_n | \mathds{X}_n))\pconv 0$, where $\mathcal{L}(M)$ denotes the measure induced by the random element $M$. 
\end{proof}

\subsection{Properties of the tvVAR example}

To show existence of the tvVAR process of Example \ref{ex:VAR}, we want to show that $\|\prod_{j=1}^i A(u-\frac{j-1}{n})\|\leq C \rho^i$, where $\|\cdot\|$ denotes the Euclidean operator norm. 
If $\|A(u)\|\leq \rho$ for all $u\in[0,1]$, this conclusion is obvious. 
However, the latter condition is too restrictive for many applications.
For example, consider the univariate autoregressive process of order $m$ given by $X_{t,n} = \sum_{j=1}^m a_j(\frac{t}{n}) X_{t-j,n} + \epsilon_t$.
This process may be represented as a tvVAR process with state vector $Y_{t,n} = (X_{t,n},\ldots, X_{t-m+1,n})$, and autoregressive matrix
\begin{align}
	A(u) = \begin{pmatrix}
		a_1(u) & a_2(u) & a_3(u) &  \ldots & a_m(u) \\
		1 & 0 & 0 & \ldots & 0 \\
		0 & 1 & 0 & \ldots & 0 \\
		\vdots & \vdots & \vdots & \vdots &\vdots \\
		 0 &  \ldots & 0 & 1 & 0
	\end{pmatrix} \in \R^{m\times m}. \label{eqn:ar-matrix}
\end{align} 
A typical condition for the stability of $X_{t,n}$ is that the autoregressive polynomial $1-\sum_{j=1}^m a_j(u)z^j$ has no (complex) roots of absolute value smaller than $\frac{1}{\rho_0}>1$, see e.g.\ \cite[Sec.\ 3.1.3]{Giraud2015}.
Equivalently, the spectral radius of $A(u)$ is bounded by $\rho_0<1$. 
On the other hand, we have $\|A(u)\|\geq 1$ for all $u\in[0,1]$, irrespective of the coefficients $a_j(u)$.

In view of the previous example, we do not restrict the Euclidean operator norm to be smaller than one.
Instead, we impose an upper bound on the spectral radius of the matrices $A(u)$, i.e.\ $\rho(A(u))\leq \rho_0 <1$.
This comes at the price of requiring additional regularity of the mapping $u\mapsto A(u)$.

\begin{lemma}\label{lem:eigenvalue-product}
	Let $A_i\in\R^{d\times d}$, $i\in\N$, be a sequence of matrices such that $ \|A_i\| \leq A^* < \infty$, and $\sup_{i\in\N}\rho(A_i)= \rho_0 <1$.
	Assume furthermore that for some $p\geq 1$,
	\begin{align*}
		\|A\|_{p-var} = \left(\sup_{1\leq i_1 < i_2 < \ldots} \sum_{k=1}^\infty \| A_{i_{k+1}}-A_{i_k}\|^p\right)^{\frac{1}{p}} \leq A^* <\infty.
	\end{align*}
	Then for any $\rho>\rho_0$, there exists a $K=K(\rho,\rho_0, A^*)$ such that
	\begin{align*}
		\left\|\prod_{i=s+1}^{t} A_i \right\| \leq K \rho^{t-s},\qquad 0\leq s < t .
	\end{align*}
\end{lemma}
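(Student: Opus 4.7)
The plan is to combine a Schur-based similarity reduction for a single matrix with a blocking argument driven by the $p$-variation assumption. Fix $\rho_2 \in (\rho_0, \rho)$.

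\emph{Step 1 (uniform similarity reduction).} For an arbitrary $B \in \R^{d\times d}$ with $\|B\| \leq A^*$ and $\rho(B) \leq \rho_0$, take a Schur decomposition $B = QTQ^*$ with $T$ upper triangular, and write $T = D + N$ with $D$ diagonal (so $\|D\| \leq \rho_0$) and $N$ strictly upper triangular (satisfying $\|N\| \leq \|B\| + \rho_0 \leq A^* + \rho_0$). Conjugating by $\Delta_\delta = \diag(1, \delta, \ldots, \delta^{d-1})$ with $\delta \in (0,1]$ preserves the diagonal and damps the strictly upper triangular entries by at least a factor $\delta$, yielding $\|\Delta_\delta^{-1} T \Delta_\delta\|_{op} \leq \rho_0 + \sqrt{d}\,(A^* + \rho_0)\,\delta$. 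Choose a single $\delta_\star \in (0,1]$, depending only on $\rho_0, \rho_2, A^*, d$, small enough for this bound to be at most $\rho_2$, and set $S(B) := Q \Delta_{\delta_\star}$. Then $\|S(B)^{-1} B S(B)\|_{op} \leq \rho_2$ and $\|S(B)\|_{op}\|S(B)^{-1}\|_{op} \leq \delta_\star^{-(d-1)}$, both uniformly in $B$.

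\emph{Step 2 (near-constant block bound).} If $A_{i_0+1}, \ldots, A_{i_0+m}$ all satisfy $\|A_j - B\| \leq \eta$ for some admissible $B$, then working in the $S = S(B)$ coordinates and using Step 1,
\begin{align*}
\|S^{-1} A_j S\|_{op} \leq \|S^{-1} B S\|_{op} + \|S^{-1}\|_{op}\|A_j - B\|\|S\|_{op} \leq \rho_2 + \delta_\star^{-(d-1)} \eta.
\end{align*}
Choose $\eta := \delta_\star^{d-1}(\rho - \rho_2)$, still uniform in $B$, so each factor is bounded by $\rho$ and the block product satisfies
\begin{align*}
\Bigl\|\prod_{j=i_0+1}^{i_0+m} A_j\Bigr\|_{op} \leq \|S\|_{op}\|S^{-1}\|_{op}\,\rho^m \leq \delta_\star^{-(d-1)}\,\rho^m.
\end{align*}

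\emph{Step 3 (partitioning via $p$-variation, and assembly).} Partition $[s+1, t]$ greedily: open the first block at $i_1 = s+1$ with reference $\bar A^{(1)} = A_{i_1}$, extend it while $\|A_i - \bar A^{(1)}\| \leq \eta$, and open a new block with a new reference at the first violating index; iterate until $t$. By construction the references $\bar A^{(1)}, \ldots, \bar A^{(J)}$ form an increasing subsequence of $(A_i)$ with $\|\bar A^{(j+1)} - \bar A^{(j)}\| > \eta$, so the $p$-variation hypothesis gives $(J-1)\,\eta^p \leq \|A\|_{p-var}^p \leq (A^*)^p$, hence $J \leq J_\star := \lfloor (A^*/\eta)^p \rfloor + 1$, a constant depending only on $\rho_0, \rho, A^*, p, d$. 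Multiplying the block bounds from Step 2 yields
\begin{align*}
\Bigl\|\prod_{i=s+1}^{t} A_i\Bigr\|_{op} \leq \prod_{j=1}^{J} \delta_\star^{-(d-1)} \rho^{m_j} = \delta_\star^{-J(d-1)}\,\rho^{t-s} \leq K\,\rho^{t-s},
\end{align*}
with $K := \delta_\star^{-J_\star (d-1)}$.

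The main obstacle is Step 1: one needs a similarity $S$ bringing the operator norm of $B$ below $\rho_2 < \rho$ while retaining an \emph{a priori} bounded condition number, \emph{uniformly} over the admissible class. Without such uniformity the constants would blow up across block boundaries. The Schur-plus-diagonal-scaling construction resolves this because $\|N\|$ is controlled purely by $\|B\| \leq A^*$, so a single choice of $\delta_\star$ works for every $B$; the remainder of the argument is a straightforward combination with the $p$-variation-driven blocking.
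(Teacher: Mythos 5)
Your proof is correct, but it takes a genuinely different route from the paper's. The paper bounds the block products by comparing them to powers of a single matrix via the telescoping identity $\prod_{i=s+1}^t A_i = A_{s+1}^{t-s} + \sum_{k} A_{s+1}^{k-1-s}(A_k - A_{s+1})\prod_{i=k+1}^t A_i$, invokes an external result (Lemma 12 of Moulines, Priouret, and Roueff) for the uniform bound $\|B^m\|\leq C\rho_1^m$ over the admissible class, and then splits $[s+1,t]$ into blocks of a \emph{fixed} length $l$: a block is ``good'' (norm at most $\rho^l$) if its matrices stay within $\epsilon$ of the first one, and the $p$-variation hypothesis caps the number of ``bad'' blocks (each crudely bounded by $(A^*)^l$) at a constant $\kappa$. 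You instead prove the single-matrix contraction property from scratch and in a stronger form --- a Schur-plus-diagonal-scaling similarity with condition number bounded uniformly over the class --- and then use a \emph{greedy, variable-length} partition in which every block is anchored to a reference matrix and enjoys the geometric bound; the $p$-variation caps the number of blocks rather than the number of exceptional ones. Both arguments hinge on the same two points (uniformity of the single-matrix decay over $\{\|B\|\leq A^*,\ \rho(B)\leq\rho_0\}$, and the $p$-variation forcing all but boundedly many transitions to be small), and your choice $\eta = \delta_\star^{d-1}(\rho-\rho_2)$ plays exactly the role of the paper's $\epsilon$. What your version buys is self-containedness (no appeal to the cited lemma), an explicit constant $K=\delta_\star^{-J_\star(d-1)}$ with transparent dependence on $d,\rho_0,\rho,A^*,p$, and the elimination of the good/bad block dichotomy; the paper's version is shorter given the external reference and avoids the explicit linear-algebra construction.
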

\begin{proof}[Proof of Lemma \ref{lem:eigenvalue-product}]
	We proceed as in the proof of \cite[Lemma 8]{Giraud2015}.
	For any $s,t\in\N_0, s\leq t$, we have
	\begin{align}
		\prod_{i=s+1}^t A_i 
		&= A_{s+1}^{t-s} + \sum_{k=s+1}^{t} A_{s+1}^{(k-1)-s} (A_k - A_{s+1}) \prod_{i=k+1}^t A_i, \nonumber \\
		\left\| \prod_{i=s+1}^t A_i \right\|
		&\leq \left\|A_{s+1}^{t-s}\right\| + (t-s) (A^*)^{t-s-1} \max_{s+1\leq k \leq t} \|A_k-A_{s+1}\|. \label{eqn:prod-norm-1}
	\end{align}
	Here, we employ the convention $\prod_{i=u}^v A_i = I$ if $u>v$, i.e.\ the empty product is the identity matrix.
	
	Now choose some $\rho_1 \in (\rho_0,\rho)$. 
	By virtue of \cite[Lemma 12]{moulines2005recursive}, there exists a universal $C=C(\rho_0, \rho_1, A^*)$ such that $\|A_{s+1}^{t-s}\| \leq C \rho_1^{t-s}$.
	We choose $l = l(\rho_0,\rho_1,A^*)\in\N$ such that $C\rho_1^l \leq \frac{\rho^l}{2}$, and let $\epsilon = \frac{\rho^l}{ 2 l (A^*)^{l-1} }$.
	Split the product into blocks of length $l$, i.e.\ we write
	\begin{align*}
		\prod_{i=s+1}^t A_i 
		&= \left[\prod_{j=1}^{\lfloor \frac{t-s}{l}\rfloor} \left( \prod_{i=s+(j-1)l+1}^{s+jl} A_i \right) \right] \left[ \prod_{i=s+\lfloor \frac{t-s}{l}\rfloor l +1}^t A_i \right], \\
		\left\|\prod_{i=s}^t A_i \right\|
		&\leq \left[\prod_{j=1}^{\lfloor \frac{t-s}{l}\rfloor} \left\| \prod_{i=s+(j-1)l+1}^{s+jl} A_i \right\| \right] (A^*)^l.
	\end{align*}
	Using \eqref{eqn:prod-norm-1}, we may further bound
	\begin{align*}
		\delta_j = \left\| \prod_{i=s+(j-1)l+1}^{s+jl} A_i \right\|
		&\leq 
		\begin{cases}
			\rho^l, & \max_{s+(j-1)l+1\leq k \leq s+jl} \|A_k-A_{s+(j-1)l+1}\| < \epsilon, \\
			(A^*)^l, & \text{otherwise}.
		\end{cases}
	\end{align*}
	There are at most $\kappa = \lceil\frac{(A^*)^p}{\epsilon^p} \rceil$ indices $j$ such that $\delta_j>\rho^l$, because
	\begin{align*}
		&\sum_{j=1}^\infty \mathds{1}\left(\max_{s+(j-1)l+1\leq k \leq s+jl} \|A_k-A_{s+(j-1)l+1}\| \geq \epsilon\right) \\
		&\leq \epsilon^{-p} \sum_{j=1}^\infty \left[\max_{s+(j-1)l+1\leq k \leq s+jl} \|A_k-A_{s+(j-1)l+1}\|^p \right]
		\leq \epsilon^{-p}\|A\|_{p-var}^p \leq \kappa.
	\end{align*}
	In particular,
	\begin{align*}
		\prod_{j=1}^{\lfloor \frac{t-s}{l}\rfloor} \left\| \prod_{i=s+(j-1)l+1}^{s+jl} A_i \right\|
		&\leq \rho^{l \lfloor \frac{t-s}{l}\rfloor} \rho^{-l\kappa} (A^*)^{l\kappa}, \\
		\left\|\prod_{i=s+1}^t A_i \right\|
		&\leq \rho^{t-s} \rho^{-l(\kappa+1)} (A^*)^{l(\kappa+1)} = \rho^{t-s} K,
	\end{align*}
	for some $K=K(\rho, \rho_0, A^*)$.
\end{proof}

In the literature, results similar to Lemma \ref{lem:eigenvalue-product} have been established to show existence of the tvVAR model. 
The existing results are formulated for matrices $A_i = A(t-\frac{i}{p})$ for some function $u\mapsto A(u)$. 
In particular, \cite[Proposition 13]{moulines2005recursive} requires $u\mapsto A(u)$ to be Hölder continuous, and Lemma 8 of \cite{Giraud2015} replaces Hölder continuity by some arbitrarily weak uniform continuity.
Proposition 2.4 of \cite{Dahlhaus2009a} requires $u\mapsto A(u)$ to be of bounded variation.
The bounded $p$-variation imposed in Lemma \ref{lem:eigenvalue-product} is weaker than the assumptions of \cite{moulines2005recursive} and \cite{Dahlhaus2009a}. 
Compared to \cite{Giraud2015}, we allow for discontinuities.
On the other hand, Giraud et al.\ only require uniform continuity, without further restrictions on the modulus of continuity, which also includes some irregular continuous cases where $\|A\|_{p-var}=\infty$ for all $p\geq 1$.

\begin{proposition}\label{prop:VAR}
	Let $G_n(u,\beps_t)$ as in Example \ref{ex:VAR}, with $\sup_u \|A(u)\|<\infty$ and $\sup_u \|B(u)\|<\infty$, and suppose that the matrices $A(u)$ have spectral radius $\rho(A(u))\leq\rho_0<1$ for all $u\in[0,1]$.
	\begin{enumerate}
		\item If $\|A\|_{p-var} < \infty$, $\|B\|_{p-var}<\infty$, and $\|\mu\|_{p-var}<\infty$ for some $p\geq 1$, then there exists some $C=C(A,B,\mu)$ such that $\|G_n\|_{p-var} + \sup_{u\in[0,1]} \|G_n(u,\beps_0)\|_{L_q} \leq C$ for all $n$.
		\item If $u\mapsto A(u)$, $u\mapsto B(u)$, and $u\mapsto \mu(u)$ are $\beta$-Hölder continuous in the Euclidean operator norm $\|\cdot\|$, then there exists $C_H=C_H(A,B,\mu)<\infty$ such that for all $n\in\N$, the mapping $u\mapsto G_n(u,\beps_t)$ is $\beta$-Hölder continuous in $L_q(P)$ with Hölder constant $C_H$.
	\end{enumerate} 
\end{proposition}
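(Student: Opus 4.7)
The plan is to reduce both claims to the exponential decay estimate provided by Lemma C.8 (the product-of-matrices lemma) and a telescoping decomposition of the summands $P_i(u) = \bigl[\prod_{j=1}^i A(u-\tfrac{j-1}{n})\bigr] B(u-\tfrac{i}{n})$ that appear in the MA($\infty$) representation of $G_n$. First I would fix some $\rho\in(\rho_0,1)$ and invoke Lemma C.8 (with the $p$-variation bound $\|A\|_{p-var}<\infty$, applied to $A$ extended as $A(0)$ on $(-\infty,0)$) to obtain a constant $K$ with $\|\prod_{j=1}^i A(u-(j-1)/n)\|\le K\rho^i$ uniformly in $u\in[0,1]$ and $n$. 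Combined with $\sup_u\|B(u)\|<\infty$ and $\|\epsilon_0\|_{L_q}<\infty$ (all moments are finite), this immediately gives the uniform $L_q$-bound $\sup_{n,u}\|G_n(u,\beps_0)\|_{L_q}\le \|\mu\|_\infty + K\|B\|_\infty \|\epsilon_0\|_{L_q}\sum_i\rho^i<\infty$.

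Next, for the variation estimate I would write, using the telescoping identity for products of matrices,
\begin{align*}
P_i(u)-P_i(v)
&= \sum_{k=1}^{i}\Bigl[\prod_{j=1}^{k-1}A(u-\tfrac{j-1}{n})\Bigr]\bigl[A(u-\tfrac{k-1}{n})-A(v-\tfrac{k-1}{n})\bigr]\Bigl[\prod_{j=k+1}^{i}A(v-\tfrac{j-1}{n})\Bigr]B(v-\tfrac{i}{n})\\
&\quad+\Bigl[\prod_{j=1}^{i}A(u-\tfrac{j-1}{n})\Bigr]\bigl[B(u-\tfrac{i}{n})-B(v-\tfrac{i}{n})\bigr].
\end{align*}
Applying Lemma C.8 to each partial product (both in $u$ and in $v$) and exchanging the order of summation yields
\begin{align*}
\sum_{i=0}^\infty\|P_i(u)-P_i(v)\|
\le C_1\sum_{k=0}^\infty \rho^k \bigl\|A(u-\tfrac{k}{n})-A(v-\tfrac{k}{n})\bigr\|
+C_2\sum_{k=0}^\infty \rho^k \bigl\|B(u-\tfrac{k}{n})-B(v-\tfrac{k}{n})\bigr\|,
\end{align*}
with $C_1,C_2$ depending only on $K$, $\rho$, $\|B\|_\infty$, and $\|\epsilon_0\|_{L_q}$. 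Hence
\begin{align*}
\|G_n(u,\beps_0)-G_n(v,\beps_0)\|_{L_q}
\le \|\mu(u)-\mu(v)\| + C_1\sum_{k=0}^\infty \rho^k \|A_k^u-A_k^v\| + C_2\sum_{k=0}^\infty \rho^k \|B_k^u-B_k^v\|,
\end{align*}
where I abbreviated $A_k^u=A(u-k/n)$, $B_k^u=B(u-k/n)$.

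For claim (1), I would then apply Minkowski's inequality in $\ell^p$ along any partition $0=u_0<\ldots<u_m=1$: the shifted partition $\{u_\ell-k/n\}$ is an ordered partition inside $(-\infty,1]$, so after extending $A$ and $B$ as constants on $(-\infty,0)$ one still has $(\sum_\ell\|A_k^{u_\ell}-A_k^{u_{\ell-1}}\|^p)^{1/p}\le \|A\|_{p-var}$, and similarly for $B$ and $\mu$. Summing the geometric series in $k$ gives the desired uniform bound $\|G_n\|_{p-var}\le \|\mu\|_{p-var}+\tfrac{C_1}{1-\rho}\|A\|_{p-var}+\tfrac{C_2}{1-\rho}\|B\|_{p-var}$. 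Claim (2) is then immediate: if $A,B,\mu$ are $\beta$-Hölder with constants $C_A,C_B,C_\mu$, then the shifted differences satisfy $\|A_k^u-A_k^v\|\le C_A|u-v|^\beta$ (constant terms contribute $0$), and the displayed bound above collapses to $\|G_n(u)-G_n(v)\|_{L_q}\le C_H|u-v|^\beta$ with $C_H=C_\mu+\tfrac{C_1 C_A+C_2 C_B}{1-\rho}$.

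The main obstacle is purely bookkeeping: the telescoping decomposition must be written so that each partial product on the left of the "difference slot" is a product of $A$'s starting at $u$, each partial product on the right is a product of $A$'s starting at $v$ (closed by one $B$), so that Lemma C.8 applies to both; and the shifted-partition argument needs the extension of $A$ and $B$ to $(-\infty,1]$ as constants so that $p$-variation is preserved under the shift $u\mapsto u-k/n$. Once these are set up, the result follows by geometric summation.
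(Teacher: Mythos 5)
Your proposal is correct and follows essentially the same route as the paper's proof: Lemma \crossref{lem:eigenvalue-product}{C.8} for the geometric decay of the partial products, the identical telescoping decomposition of the product difference (left factors at $u$, right factors at $v$, plus a separate $B$-difference term), and then Minkowski's inequality along shifted partitions for the $p$-variation case and direct substitution for the H\"older case. The only cosmetic difference is that you exchange the order of summation before applying Minkowski, while the paper keeps the double sum; the two are equivalent.
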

\begin{proof}[Proof of Proposition \ref{prop:VAR}]
	Note that $\E \|\epsilon_0\|^q<\infty$ by assumption.
	In the sequel, the factor $C=C(A,B,\mu)$ may vary from line to line.
	For any $u\in[0,1]$, we have
	\begin{align*}
		\|G_n(u,\beps_0)\|_{L_q} 
		&\leq C \sum_{i=0}^\infty \left\| \prod_{j=1}^i A(u-\tfrac{j-1}{n}) \right\| \, \left\|B(u-\tfrac{i}{n})\right\| + \|\mu(u)\| \\
		&\leq C \sum_{i=0}^\infty \rho^i + \|\mu(0)\| + \|\mu\|_{p-var} \quad 
		\leq C(A,B,\mu).
	\end{align*}
	In the last step, we used Lemma \ref{lem:eigenvalue-product} for some $\rho \in(\rho_0,1)$.
	In the Hölder continuous case, Lemma \ref{lem:eigenvalue-product} is applicable because $\beta$-Hölder continuity implies finite $\frac{1}{\beta}$-variation.
	In the case of finite $p$-variation, Lemma \ref{lem:eigenvalue-product} is directly applicable.
	
	We now consider the regularity of $u\mapsto G_n(u,\beps_0)$. 
	For any $u,v\in[0,1]$, we have 
	\begin{align}
		&\quad \|G_n(u,\beps_0) - G_n(v,\beps_0)\|_{L_q} \nonumber \\
		&\leq C \sum_{i=0}^\infty \left\| \left[ \prod_{j=1}^i A\left(u-\tfrac{j-1}{n}\right) \right]  B\left(u-\tfrac{i}{n}\right) - \left[ \prod_{j=1}^i A\left(v-\tfrac{j-1}{n}\right) \right]  B\left(v-\tfrac{i}{n}\right) \right\| + \|\mu(u)-\mu(v)\|  \nonumber  \\
		&\leq C \sum_{i=0}^\infty  \left\| \prod_{j=1}^i  A\left(u-\tfrac{j-1}{n}\right)\right\|  \left\|B\left(u-\tfrac{i}{n}\right) - B\left(v-\tfrac{i}{n}\right) \right\| + \|\mu(u)-\mu(v)\| \nonumber \\
		&\quad + C \sum_{i=0}^\infty \sum_{k=1}^i \left[ \left\| \prod_{\substack{j=1}}^{k-1} A\left(u-\tfrac{j-1}{n}\right) \right\| \left\|\prod_{\substack{j=k+1}}^{i} A\left(v-\tfrac{j-1}{n}\right) \right\| \right] \nonumber \\
		&\qquad\qquad \cdot  \left\| A(u-\tfrac{k-1}{n}) - A(v-\tfrac{k-1}{n}) \right\| \left\| B\left(v-\tfrac{i}{n}\right) \right\| \nonumber  \\
		&\leq C \sum_{i=0}^\infty  \rho^i \left\|B\left(u-\tfrac{i}{n}\right) - B\left(v-\tfrac{i}{n}\right) \right\|
		 + C \sum_{i=0}^\infty \sum_{k=1}^i \rho^{i-1} \left\| A(u-\tfrac{k-1}{n}) - A(v-\tfrac{k-1}{n}) \right\| \nonumber \\
		&\qquad +\|\mu(u)-\mu(v)\|.\label{eqn:tvVAR-increment} 
	\end{align}

	Now if $\mu$, $A$, and $B$ are $\beta$-Hölder continuous with Hölder constants $C_\mu$, $C_A$, and $C_B$, we find that
	\begin{align*}
		&\quad \|G_n(u,\beps_0) - G_n(v,\beps_0)\|_{L_q} \\
		&\leq C \sum_{i=0}^\infty  \rho^i |u-v|^\beta C_B	 + C \sum_{i=0}^\infty \sum_{k=1}^i \rho^{i-1} |u-v|^\beta C_A + C_\mu |u-v|^\beta\\
		&\leq |u-v|^\beta  (C_\mu+C_A + C_B)\, C \left[1+ \sum_{i=0}^\infty \rho^i + \sum_{i=0}^\infty i\, \rho^{i-1} \right],
	\end{align*}
	and the latter series are finite since $\rho<1$.
	
	Alternatively, we may treat the case of finite $p$-variation. 
	Let $u_0 < u_1< \ldots < u_m$.
	Applying the Minkowski inequality to \eqref{eqn:tvVAR-increment}, we find obtain
	\begin{align*}
		&\quad \left[\sum_{l=1}^m \|G_n(u_l, \beps_0) - G_n(u_{l-1},\beps_0)\|_{L_q}^p\right]^\frac{1}{p} \\
		&\leq C \sum_{i=0}^\infty \rho^i \left[ \sum_{l=1}^m \left\|B\left(u_l-\tfrac{i}{n}\right) - B\left(u_{l-1}-\tfrac{i}{n}\right) \right\|^p \right]^{\frac{1}{p}}\\
		&\quad + C \sum_{i=0}^\infty \sum_{k=1}^i \rho^{i-1}  \left[\sum_{l=1}^m \left\| A(u_l-\tfrac{k-1}{n}) - A(u_{l-1}-\tfrac{k-1}{n}) \right\|^p \right]^{\frac{1}{p}} + \|\mu\|_{p-var} \\
		&\leq C \sum_{i=0}^\infty \rho^i  \left\|B\right\|_{p-var} 
		 + C \sum_{i=0}^\infty \sum_{k=1}^i \rho^{i-1}  \left\| A \right\|_{p-var} + \|\mu\|_{p-var} \\
		& \leq C(\|B\|_{p-var} + \|A\|_{p-var}+\|\mu\|_{p-var}).
	\end{align*}
	Since $C=C(A,B,\mu)$, this completes the proof.
\end{proof}

\subsection{Properties of the CUSUM test}

\begin{proof}[Proof of Proposition \ref{prop:CUSUM}]
	By virtue of a result of \cite{lifshits1983absolute}, see the Corollary on p.~606 therein, we find that the distribution of $T^*$ is absolutely continuous, apart from a potential atom at $0$. 
	Since $\Var(M(u_0))>0$ for some $u_0\in(0,1)$, the variance structure of the process $M(u)$ yields that $\Var(M(u_0)-u_0M(1))>0$.
	Due to the Gaussianity of $M(u)$, it holds that $P(|M(u_0) - u_0 M(1)| =0)=0$, such that $P(T^*=0)=0$.
	Hence, the cumulative distribution function $H(x) = P(T^*\leq x)$ is continuous. 
	Denote furthermore by $H_n(x) = P(\widehat{T}_n | \mathds{X}_n)$ the (random) distribution function of the bootstrap approximation. 
	We have to show that $\E \left[H(H_n^{-1}(1-\alpha))\right]\to 1-\alpha$ as $n\to\infty$, where $H^{-1}$ denotes the quantile function corresponding to a cumulative distribution function $H:\R\to[0,1]$.
	
	Suppose to the contrary that $\E \left[H(H_n^{-1}(1-\alpha))\right]\not\to 1-\alpha$, then for some $\epsilon>0$, there exists a subsequence $n_k\to\infty$ as $k\to\infty$ such that $|\E\left[H(H_{n_k}^{-1}(1-\alpha))\right]- (1-\alpha)|>\epsilon$.
	Since $H_{n_k}\wconv H$ in probability, there exists a further subsequence, also denoted by $n_k$, such that $H_{n_k}\wconv H$ almost surely as $k\to\infty$.
	This implies that $H_{n_k}^{-1}(p)\to H^{-1}(p)$, for all $p\in[0,1]$ where $H^{-1}(p)$ is continuous \cite[Lemma 21.2]{van1998asymptotic}.
	Since $H^{-1}$ is left-continuous and has at most countably many discontinuities, we find that
	\begin{align*}
		H^{-1}(1-\alpha)\leq \liminf_{k\to\infty} H_{n_k}^{-1}(1-\alpha) \leq \limsup_{k\to\infty} H_{n_k}^{-1}(1-\alpha) \leq H^{-1}((1-\alpha)+),
	\end{align*}
	where $H^{-1}(p+)=\lim_{q\downarrow p} H^{-1}(q)$ denotes the right-hand limit.
	Since $F$ is continuous, we have $H(H^{-1}(1-\alpha)) = H (H^{-1}((1-\alpha)+)) = 1-\alpha$, such that $H(H^{-1}_{n_k}(1-\alpha))\to 1-\alpha$ almost surely. 
	Via the dominated convergence theorem, we obtain $\E \left[H(H_{n_k}^{-1}(1-\alpha))\right]\to (1-\alpha)$, completing the proof by contradiction.
\end{proof}

\begin{proof}[Proof of Proposition \ref{prop:power}]
	
	It holds that
	\begin{align*}
	T_n(u) &= \left[M_n(u) - \frac{(u-u_n)_+}{1-u_n} M_n(1)\right] + \frac{1}{n}\left[ \sum_{t=\tau_n+L_n}^{\lfloor un\rfloor} f(\mu^n_\frac{t}{n}) - \frac{(u-u_n)_+}{1-u_n} \sum_{t=\tau_n+L_n}^{n} f(\mu^n_\frac{t}{n})\right] \\
	&= \left[M_n(u) - \tfrac{(u-u_n)_+}{1-u_n} M_n(1)\right] + \Delta_n(u).
	\end{align*}
	By virtue of Theorem \ref{thm:CLT-integrated}, $\sqrt{n}[M_n(u) - \frac{(u-u_n)_+}{1-u_n}M_n(1)]\wconv M(u)$.
	Moreover,
	\begin{align*}
	\Delta_n(u) 
	&= \frac{1}{n} \left[ \sum_{t=\tau_n+L_n}^{\lfloor un\rfloor} f(\mu^n_\frac{t}{n}) - \frac{(\frac{\lfloor un\rfloor}{n}-u_n)_+}{1-u_n} \sum_{t=\tau_n+L_n}^{n} f(\mu^n_\frac{t}{n})\right] + \mathcal{O}(\tfrac{1}{n}) \\
	&= \frac{1}{n} \sum_{t=\tau_n+L_n}^{\lfloor un\rfloor} \left[f(\mu_\frac{t}{n}^n) - \frac{1}{n+1-\tau_n-L_n} \sum_{s=\tau_n+L_n}^{n} f(\mu^n_\frac{s}{n})\right] + \mathcal{O}(\tfrac{1}{n}).
	\end{align*}
	Now use that $f(\mu_u^n) = f(\mu_u) + \frac{1}{\sqrt{n}}Df(\mu_u) \delta_u + \mathcal{O}(1/n)$, uniformly in $u\in[0,1]$ by virtue of the boundedness of $f$.
	Since $f(\mu_u) = f(\mu_0)$ for all $u\in[0,1]$, we obtain
	\begin{align*}
	\Delta_n(u) &= \frac{1}{\sqrt{n}}\frac{1}{n} \sum_{t=\tau_n+L_n}^{\lfloor un\rfloor} \left[Df(\mu_\frac{t}{n})\delta_{\frac{t}{n}}  - \frac{1}{n+1-\tau_n-L_n} \sum_{s=\tau_n+L_n}^{n} Df(\mu_\frac{s}{n})\delta_{\frac{s}{n}}\right] + \mathcal{O}(\tfrac{1}{n}), \\
	\sqrt{n} \Delta_n(u) &= o(1) + \frac{1}{n} \sum_{t=1}^{\lfloor un\rfloor} Df(\mu_\frac{t}{n}) \delta_\frac{t}{n} - \frac{u}{n} \sum_{t=1}^{n} Df(\mu_\frac{t}{n}) \delta_\frac{t}{n} \\
	&= o(1) + \Delta(u).
	\end{align*}
	In the last step, we use Lemma \ref{lem:I1} and the fact that $\|u\mapsto Df(\mu_u) \delta_u\|_{p-var}<\infty$.
	Moreover, the convergence $\sqrt{n}\Delta_n(u)\to \Delta(u)$ holds uniformly in $u\in[0,1]$.
	This establishes the convergence of $T_n^*$.
	
\end{proof}

\bibliographystyle{apalike}

\bibliography{nonstationaryCP}

\end{document}